\documentclass[11pt]{amsart}
\usepackage[utf8]{inputenc}
\usepackage[nospace,noadjust]{cite}
\usepackage{relsize}
\usepackage{cancel}
\usepackage[dvipsnames]{xcolor}
\usepackage{tikz}
\usetikzlibrary{arrows.meta, positioning}
\usepackage{tikz-cd}
\usetikzlibrary{arrows.meta, calc, decorations.markings}
\usepackage{colortbl}
\usepackage{hyperref, enumitem}
\hypersetup{
	colorlinks   = true, 
	urlcolor     = blue, 
	linkcolor    = Purple, 
	citecolor   = red 
}
\usepackage[margin=2.5cm]{geometry}

\usepackage{comment}
\usepackage{cleveref,amssymb}

\usepackage{amscd}

\DeclareMathOperator{\C}{\mathcal{C}}

\newtheorem{theorem}{Theorem}[section]

\newtheorem{lemma}[theorem]{Lemma}

\newtheorem{corollary}[theorem]{Corollary}
\newtheorem{definition}[theorem]{Definition}
\newtheorem{proposition}[theorem]{Proposition}

\newtheorem{remark}[theorem]{Remark}

\usepackage{mdframed}

\usepackage{url}

\newcommand{\cL}{{\mathcal L}}

\newcommand{\F}{{\mathbb F}}

\newcommand{\Tr}{{\mathrm{Tr}}}

\newcommand{\K}{{\mathbb K}}
\newcommand{\fq}{{\mathbb F}_{q}}

\newcommand{\N}{\mathrm{N}}

\newcommand{\rk}{\mathrm{rk}}

\newcommand{\Gal}{\mathrm{Gal}}
\newcommand{\GL}{\mathrm{GL}}
\newcommand{\End}{\mathrm{End}}
\newcommand{\Aut}{\mathrm{Aut}}
\newcommand{\LL}{\mathbb{L}}

\newcommand{\drk}{\mathrm{d}_{\rk}}

\newcommand{\id}{\mathrm{id}}

\newcommand{\gcrd}{\mathrm{gcrd}}
\newcommand{\lclm}{\mathrm{lclm}}

\newcommand{\MF}{\mathcal{M}_{F}}
\newcommand{\Fix}{\mathrm{Fix}}

\newcommand{\Cen}{\mathrm{Cen}}
\newcommand{\E}{\mathbb{E}}
\newcommand{\Stab}{\mathrm{Stab}}
\newcommand{\newH}{\mathcal{S}}

\title[New Maximum Hermitian Codes and New Invariants]{New Maximum Hermitian $d$-Codes and New Invariants from Skew Polynomial Rings
}

\author[P. Santonastaso]{Paolo Santonastaso}
\address{Paolo Santonastaso, \textnormal{Department of Mathematics and Applications ``R. Caccioppoli'', University of Naples Federico II, Via Cintia, Monte Sant'Angelo, 80126 Naples, Italy }}\email{paolo.santonastaso@unina.it}

\author[Y. Zhou]{Yue Zhou}
\address{Yue Zhou, \textnormal{College of Science, National University of Defense Technology, 410073 Changsha,
		China}}
\email{yue.zhou.ovgu@gmail.com}

\date{\today}

\begin{document}

	\begin{abstract}
Maximum rank-distance codes were introduced independently by Delsarte (1978) and
Gabidulin (1985) for unrestricted matrix spaces, and their study was subsequently
extended to the classical form spaces of alternating, symmetric, and Hermitian
matrices.  For Hermitian \(d\)-codes, Schmidt (2018) established 
the upper bound and constructed optimal additive codes for all admissible parameters 
except when \(n\) and \(d\) are both even and \(4\le d\le n-2\).  We settle this final
open existence case, more generally, construct additive maximum Hermitian rank-metric
codes for every admissible length and minimum distance. Rather than working directly in the space of Hermitian matrices, we construct our objects within central simple algebras with unitary involution arising from quotients of skew polynomial rings.
We also introduce scalar-field and
Lyapunov operators in these algebras, leading to new invariants for the equivalence of additive Hermitian codes. Explicitly determining these invariants yields strong necessary conditions for Hermitian equivalence and demonstrates that our construction contains a large number of pairwise inequivalent codes.
\end{abstract}

\subjclass[2020]{Primary 94B05; Secondary 05E30, 16S36, 16W10}
\keywords{rank-metric codes, Hermitian matrices, skew polynomial rings,
central simple algebras, unitary involutions, code equivalence}

\maketitle

\section{Introduction}

Let \(\mathcal X\) be a space of matrices over a finite field.  The rank
metric on \(\mathcal X\) is
\[
  \drk(A,B)=\rk(A-B),
\]
and a nonempty subset \(\mathcal C\subseteq\mathcal X\) is a \(d\)-code if
\(\rk(A-B)\ge d\) whenever \(A,B\in\mathcal C\) are distinct.  The basic
extremal questions are to determine the largest possible cardinality, to
construct codes attaining the relevant bound, and to classify the extremal
codes under the natural rank-metric isometries.

For unrestricted rectangular matrices, the theory began with two independent
constructions.  Delsarte used the association scheme of bilinear forms, while
Gabidulin used extension fields and linearized polynomials
\cite{delsarte1978bilinear,ga85a}.  Both constructions attain the rank-metric
Singleton bound and give the classical maximum rank-distance, or MRD, codes.
These two viewpoints---bilinear forms and linearized-polynomial evaluation---
remain fundamental in rank-metric coding theory.

A parallel extremal theory arises in matrix spaces defined by an adjoint
symmetry.  We use the term \emph{classical form spaces} for the spaces of
alternating and symmetric bilinear forms and of Hermitian sesquilinear forms;
after choosing bases, these become the spaces of alternating, symmetric, and
Hermitian matrices.  Delsarte and Goethals obtained sharp bounds and optimal
constructions in the alternating-form space
\cite{DelsarteGoethals1975}. Schmidt subsequently established corresponding results
 for symmetric forms in odd and even characteristic and for
Hermitian forms
\cite{schmidt2010symmetric,schmidt2015symmetric,schmidt2020quadratic,
Schmidt2018Hermitian}.  For some parameters of symmetric-form spaces and Hermitian-form spaces, the sharp upper bound for arbitrary codes
is not known, whereas the corresponding additive problem has a sharp bound.
Accordingly, a maximum additive \(d\)-code means an additive subgroup that
attains the sharp upper bound in its ambient form space.

Let \(\mathcal H_n(\mathbb F_{q^2}/\mathbb F_q)\) be the space of \(n\times n\)
Hermitian matrices over \(\mathbb F_{q^2}\).  Schmidt proved that every additive
Hermitian \(d\)-code \(\mathcal C\) satisfies
\begin{equation}\label{eq:intro-hermitian-bound}
  |\mathcal C|\le q^{n(n-d+1)}.
\end{equation}
He also constructed codes attaining \eqref{eq:intro-hermitian-bound} for all
admissible pairs \((n,d)\), except when \(n\) and \(d\) are both even and
\[
  4\le d\le n-2;
\]
the endpoint cases \(d=2\) and \(d=n\) were already known
\cite{Schmidt2018Hermitian}. See also \cite{Trombetti2021maximum} for another construction of codes attaining the bound in \eqref{eq:intro-hermitian-bound} when $d=2$. Thus the even--even range was the last
unresolved existence case for maximum additive \(d\)-codes among the
classical form spaces described above.

The first purpose of this paper is to close this gap.  More generally, we give
a uniform construction over every quadratic extension
\(\mathbb F_{q^s}/\mathbb F_{q^{s/2}}\), where \(s\ge2\) is even.  Set
\[
  \LL=\mathbb F_{q^n},\qquad \K=\mathbb F_q,
\]
let \(\sigma\) generate \(\operatorname{Gal}(\LL/\K)\), and consider the skew
polynomial ring
\[
  R=\LL[X;\sigma],\qquad X\alpha=\sigma(\alpha)X
  \quad(\alpha\in\LL).
\]
Choose a monic irreducible self-reciprocal polynomial
\(F(Y)\in\K[Y]\) of degree \(s\).  Since \(F(X^n)\) is central in \(R\), the
quotient
\[
  R_F:=R/RF(X^n)
\]
is a central simple algebra of degree \(n\) over its center
\(\E_F\cong\mathbb F_{q^s}\).  Self-reciprocity of \(F\) yields an involution
of the second kind
\[
  \Theta:R_F\longrightarrow R_F,
\]
whose restriction \(\varrho\) to \(\E_F\) has fixed field
\(\E_F^\varrho\cong\mathbb F_{q^{s/2}}\).  Under a suitable splitting
isomorphism \(R_F\cong M_n(\E_F)\), the fixed-point space \(\Fix(\Theta)\) is
identified with
\[
  \mathcal H_n(\E_F/\E_F^\varrho).
\]
Thus the Hermitian matrix space is realized intrinsically as the symmetric
part of a central simple algebra with unitary involution.

For \(0\le \ell\le ns/2-1\), define
\begin{equation}\label{eq:intro-code}
  \newH_\ell(F)
  :=
  \left\{
  \alpha_0+
  \sum_{i=1}^{\ell}
  \bigl(\sigma^{-i}(\alpha_i)X^{-i}+\alpha_iX^i\bigr)
  +RF(X^n)
  :
  \alpha_0,\ldots,\alpha_\ell\in\mathbb F_{q^n}
  \right\}.
\end{equation}
The paired coefficients are exactly the condition that the displayed elements
be fixed by \(\Theta\). For \(1\le d\le n\), set
\[
  \ell=\ell_d:=\frac{s(n-d+1)-2}{2}.
\]
We prove that \(\newH_{\ell_d}(F)\) has minimum rank distance \(d\) and
cardinality
\[
  q^{sn(n-d+1)/2}
  =|\mathbb F_{q^{s/2}}|^{n(n-d+1)}.
\]
It therefore attains the sharp Hermitian bound. Consequently, for any even $s$ our construction yields maximum additive Hermitian $d$-codes in
\(\mathcal H_n(\mathbb F_{q^s}/\mathbb F_{q^{s/2}})\) for every \((n,d)\), including
the previously open even--even range.

We next study equivalence.  Let \(\E/\F\) be a quadratic extension of finite
fields, let \(A^*:=(A^\varrho)^\top\), and let
\(\mathcal C,\mathcal D\subseteq\mathcal H_n(\E/\F)\) be additive codes.  The
full semilinear equivalence relation used in this paper is the following:
\(\mathcal C\) and \(\mathcal D\) are \emph{Hermitian equivalent} if
\begin{equation}\label{eq:intro-hermitian-equivalence}
  \mathcal D
  =
  \left\{
  \mu N A^\varphi N^*:
  A\in\mathcal C
  \right\}
\end{equation}
for some \(N\in\GL_n(\E)\), \(\mu\in\F^\times\), and
\(\varphi\in\Aut(\E)\), where \(A^\varphi\) is obtained by applying
\(\varphi\) entrywise.  These are precisely the semilinear rank-metric
isometries fixing the zero matrix, up to the standard translation term
\cite{wan1996geometry}.  

Our analysis of equivalence is based on two structural invariants that we introduce.  The first is the
scalar field
\[
  \mathcal F(\mathcal C)
  :=\{\lambda\in\F:\lambda\mathcal C\subseteq\mathcal C\}.
\]
If \(\Phi(A)=\mu N A^\varphi N^*\) gives an equivalence from
\(\mathcal C\) to \(\mathcal D\), then
\[
  \lambda\Phi(A)
  =\Phi\bigl(\varphi^{-1}(\lambda)A\bigr),
\]
and hence
\[
  \mathcal F(\mathcal D)=\varphi\bigl(\mathcal F(\mathcal C)\bigr).
\]
Thus the isomorphism type, and in particular the cardinality, of the scalar
field is an invariant of Hermitian equivalence.  Schmidt's trace
constructions are linear over the entire fixed field, whereas, for \(s>2\),
our codes have a proper scalar field; when \(\ell_d<ns/4\), it is exactly
\(\mathbb F_q\).  This separates the new codes from Schmidt's constructions
whenever the latter exist with the same parameters.

The second invariant is obtained from Lyapunov-type operators.  For
\(\overline b\in R_F\), put
\[
  \Delta_{\overline b}(\overline z)
  :=\overline b\,\overline z+
  \overline z\,\Theta(\overline b),
  \qquad \overline z\in\Fix(\Theta),
\]
and define
\[
  \mathcal L(\mathcal C)
  :=\{\overline b\in R_F:
  \Delta_{\overline b}(\mathcal C)\subseteq\mathcal C\}.
\]
After passage to matrices, \(\Delta_{\overline b}\) becomes the Hermitian
Lyapunov operator \(A\mapsto BA+AB^*\). The corresponding Lyapunov equations 
are classical objects in matrix analysis and control theory, and have been studied from several viewpoints; see, for instance
\cite{Lancaster1970,ByersNash1987,HewerKenney1988,LiWhite2002}.  Its
transformation under \eqref{eq:intro-hermitian-equivalence} is immediate.
Since finite-field automorphisms commute with \(\varrho\), if
\(B'=NB^\varphi N^{-1}\), then
\[
  \Delta_{B'}\bigl(\mu N A^\varphi N^*\bigr)
  =\mu N\Delta_B(A)^\varphi N^*.
\]
Consequently,
\[
  \mathcal L(\mathcal D)
  =N\mathcal L(\mathcal C)^\varphi N^{-1}.
\]
For \(n\ge3\) and \(1\le\ell<ns/4\), we determine this invariant explicitly:
\[
  \mathcal L(\newH_\ell(F))
  =\LL+\mathcal B_F,
  \qquad
  \mathcal B_F
  :=\{\lambda\in\E_F:
       \lambda+\Theta(\lambda)\in\mathbb F_q\}.
\]
The distinguished copy of \(\LL=\mathbb F_{q^n}\) can be recovered from this
additive subgroup.  Hence an equivalence between two members of the family
must normalize that field, reducing an arbitrary Hermitian isometry to a
strongly constrained comparison in a common semilinear model.

Based on this determination of the Lyapunov invariant, Section~\ref{sec:equivalence}
proves strong necessary conditions for equivalence between
\(\newH_\ell(F)\) and \(\newH_\ell(G)\).  A full Hermitian equivalence may first
apply an automorphism of \(\E=\mathbb F_{q^s}\); the conditions below are
therefore understood up to this field automorphism.  If \(\tau\) denotes the induced
automorphism of \(\K=\mathbb F_q\), write \(F^\tau\) for the polynomial
obtained by applying \(\tau\) to the coefficients of \(F\).  Assume now that $n>2$. The conclusions are as follows. 
\begin{itemize}
  \item When
  \[
    1\le\ell<\min\{n,ns/4\},
  \]
  the equivalence between
  \(\newH_\ell(F)\) and \(\newH_\ell(G)\) leads to
  \[
    F^\tau(Y)=\lambda^sG(\lambda^{-1}Y)
    \qquad\text{for some }\lambda\in\mathbb F_q^\times.
  \]

  \item When \(s>4\) and
  \[
    n<\ell<\frac{ns}{4},
  \]
  choose roots \(u_F,u_G\in\mathbb F_{q^s}\) and put
  \(w_H=u_H+u_H^{-1}\).  Then the equivalence between
  \(\newH_\ell(F)\) and \(\newH_\ell(G)\) leads to
  \[
    w_G=
    \frac{a\,\varphi(w_F)+b}{c\,\varphi(w_F)+d}
  \]
  for some \(a,b,c,d\in\mathbb F_q\) with \(ad-bc\ne0\), where
  \(\varphi\) is the field automorphism occurring in the equivalence.
\end{itemize}
Thus, up to field automorphisms, possible equivalences are confined to explicit scalar or $\mathrm{PGL}_2(\mathbb F_q)$-orbits of the associated polynomials and elements.

These restrictions give quantitative lower bounds under full Hermitian
equivalence.  Let \(N_q(s)\) be the number of monic irreducible
self-reciprocal polynomials of degree \(s\) over \(\mathbb F_q\), write
\(q=p^h\), and retain
\(\ell_d=(s(n-d+1)-2)/2\).  Define
\[
  L_2(q):=
  \begin{cases}
    q/2, & q\text{ even},\\[1ex]
    (q-1)/4, & q\equiv1\pmod4,\\[1ex]
    (q+1)/4, & q\equiv3\pmod4.
  \end{cases}
\]
The family contains at least the following numbers of pairwise Hermitian
inequivalent maximum additive codes:
\begin{itemize}
  \item
  \(
    \left\lceil L_2(q)/(2h)\right\rceil
  \)
  if \(s=2\) and \(n/2<d<n\);

  \item
  \(
    \left\lceil
      N_q(s)/(sh\,\gcd(s,q-1))
    \right\rceil
  \)
  if \(s\ge4\) is even and \(\ell_d<n\);

  \item
  \(
    \left\lceil
      N_q(s)/(sh\,q(q^2-1))
    \right\rceil
  \)
  if \(s>4\) is even and \(n<\ell_d<ns/4\).
\end{itemize}
Since \(N_q(s)=\Theta(q^{s/2})\) as $q\rightarrow \infty$, these lower bounds are respectively on the
scales \(q/h\), \(q^{s/2}/h\), and \(q^{s/2-3}/h\) for fixed admissible
\(s\).  In particular, the construction that closes the last existence gap
is far from unique.

\subsection{Organization of the paper}
Section~\ref{sec:preliminaries} reviews Hermitian rank-metric codes, skew
polynomial rings, and the arithmetic of irreducible self-reciprocal
polynomials.  Section~\ref{sec:skewpolynomialsetting} constructs the central
simple algebra \(R_F\), its involution of the second kind, and the
identification of its fixed points with a Hermitian matrix space.
Section~\ref{sec:newconstructiontotal} defines the codes \(\newH_\ell(F)\), proves
their optimality, recalls Schmidt's trace constructions, and separates the
two families by the scalar-field invariant.
Section~\ref{sec:lyapunov-invariants} introduces the Lyapunov invariant and
computes it for the new codes.  Finally, Section~\ref{sec:equivalence}
places the constructions in a common semilinear model, proves the necessary
equivalence conditions in the short and large exponent ranges, and derives
the lower bounds for full Hermitian equivalence.

\section{Preliminaries} \label{sec:preliminaries}
	
	For a field \(\K\), let \(M_n(\K)\) denote the ring of $n \times n$ matrices over $\K$.
	
	\subsection{Hermitian $d$-codes}
	Let \(\E/\F\) be a quadratic extension of finite fields, and let $\rho:\E\longrightarrow \E$
	be the nontrivial element of \(\Gal(\E/\F)\). Let \(V=V(n,\E)\) be an \(n\)-dimensional vector space over \(\E\). A map
	\[
	H:V\times V\longrightarrow \E\]
	is called a \textbf{Hermitian form} if it is \(\E\)-linear in the first variable and satisfies
	\[
	H(v,u)=\rho(H(u,v)),
	\]
	for all $u,v\in V$. Fix a basis \(\mathcal B=(\xi_{1},\dots,\xi_{n})\) of \(V\) over \(\E\). The form \(H\) is represented by the matrix
	\[
	M_{\mathcal B}(H):=\bigl(H(\xi_i,\xi_j)\bigr)_{1\le i,j\le n}\in M_n(\E).
	\]
	We have
	$
	M_{\mathcal B}(H)^{*}=M_{\mathcal B}(H),
	$
	where \(A^{*}:=(A^\rho)^{\top}\) denotes the conjugate transpose, with conjugation induced by
	\(\rho\) on \(\E\). Thus \(M_{\mathcal B}(H)\) is a \textbf{Hermitian matrix}. Conversely, every Hermitian matrix in \(M_n(\E)\) determines a unique Hermitian form on \(V\). We denote by \(\mathcal{H}(V)\) the set of all Hermitian forms on \(V\), and by \(\mathcal{H}_n(\E/\F)\) the set of all \(n\times n\) Hermitian matrices over \(\E\). Both sets carry a natural structure of \(\F\)-vector spaces. Hence, after fixing \(\mathcal B\), one obtains a natural identification between Hermitian forms on \(V\) and Hermitian matrices of order \(n\), i.e., the map
	\begin{equation}\label{eq:hermitian-identification}
		\Psi_{\mathcal B}:\mathcal H(V)\longrightarrow \mathcal{H}_n(\E/\F),\qquad
		H\longmapsto M_{\mathcal B}(H),
	\end{equation}
	is a \(\F\)-linear isomorphism. In particular,
	\[
	|\mathcal H(V)|=|\mathcal H_n(\E/\F)|=|\F|^{n^{2}},
	\]
	since a Hermitian matrix is determined by its \(n\) diagonal entries in \(\F\) and its
	\(n(n-1)/2\) entries above the diagonal in \(\E\). We define
	\[
	\rk(H):=\rk\bigl(M_{\mathcal B}(H)\bigr),
	\]
	where the matrix rank is computed over \(\E\).

	We endow the space \(\mathcal H(V)\) with the rank metric, defined by
	\[
	\drk(H_1,H_2):=\rk(H_1-H_2)
	=\rk\!\bigl(M_{\mathcal B}(H_1)-M_{\mathcal B}(H_2)\bigr),
	\]
	for all $H_1,H_2\in\mathcal H(V)$. Via \eqref{eq:hermitian-identification}, we use the same notation for the induced rank metric on \(\mathcal H_n(\E/\F)\). Thus a Hermitian rank-metric code is a nonempty subset
	\(\mathcal C\subseteq\mathcal H(V)\), or, after fixing \(\mathcal B\),
	a nonempty subset \(\mathcal C\subseteq\mathcal H_n(\E/\F)\). Its minimum distance is
	\[
	\drk(\mathcal C):=\min\{\rk(H_1-H_2): H_1,H_2\in\mathcal C,\ H_1\neq H_2\}.
	\]
	For an integer \(d\), we call \(\mathcal C\) a \textbf{Hermitian \(d\)-code} if
	\[\drk(\mathcal C)\ge d,\] equivalently if every nonzero difference of two
	distinct codewords has rank at least \(d\).
	
	An \textbf{additive Hermitian code} is a Hermitian rank-metric code that is an
	additive subgroup of \(\mathcal H_n(\E/\F)\). The corresponding extremal
	problem is to maximize \(|\mathcal C|\) among additive Hermitian \(d\)-codes in
	\(\mathcal H_n(\E/\F)\). Schmidt proved that every such code satisfies
	\begin{equation}\label{eq:hermitian-bound}
		|\mathcal C|\le |\F|^{\,n(n-d+1)},
	\end{equation}
    see \cite[Theorem 1]{Schmidt2018Hermitian}.
	Codes attaining \eqref{eq:hermitian-bound} are called \textbf{maximum
	additive Hermitian \(d\)-codes}. Schmidt also constructed such codes in
	\cite{Schmidt2018Hermitian} for every admissible pair \(1\le d\le n\), except
	when \(n\) and \(d\) are both even and \(4\le d\le n-2\). We recall these
	constructions in Section~\ref{sec:schmidt-skew}.
	
	\subsection{Central simple algebras from skew polynomial rings}
	
	We begin by recalling that a \textbf{central simple algebra} over a field \(\F\) is a finite-dimensional associative \(\F\)-algebra whose center is exactly \(\F\) and which has no nontrivial two-sided ideals. Since throughout this paper we work over finite fields, the Artin--Wedderburn theorem together with Wedderburn's little theorem implies that every central simple \(\F\)-algebra is \(\F\)-isomorphic to a full matrix algebra over \(\F\). We shall realize these matrix algebras in a way that is well suited to our constructions, namely through skew polynomial rings.

	Let \(\LL/\K\) be a finite field extension of degree \(n\). Let $\sigma$ be a generator of the Galois group $\Gal(\LL/\K)$, and consider the skew polynomial ring
	\[
	R = \LL[X; \sigma].
	\]
	The elements of $R$ are polynomials in the indeterminate $X$ with coefficients in $\LL$ written on the left of each monomial $X^i$.  
	Multiplication in $R$ is defined by the rule
	$X\alpha = \sigma(\alpha)X,$ for all $\alpha \in \LL,$
	and extended distributively. The ring $R$ is noncommutative unless $n=1$, and its center is given by $Z(R) = \K[X^n]$. There is a well-defined notion of \textbf{degree} for nonzero elements of $R$, denoted $\deg(f)$, which behaves analogously to the usual polynomial degree. Moreover, $R$ is a \emph{left Euclidean domain}.  
	That is, for any nonzero elements $f,g \in R$, there exist unique $q,r \in R$ such that
	\[
	f = qg + r,
	\qquad \deg(r) < \deg(g) \ \text{or}\ r = 0.
	\]
	Whenever $r = 0$, we say that $g$ \textbf{right-divides} $f$, and we write $g \mid_r f$; equivalently, $f$ is called a \textbf{left multiple} of $g$.  
	As a consequence, $R$ is a \emph{left principal ideal domain}. This allows one to define the \textbf{greatest common right divisor} and the \textbf{least common left multiple} of two nonzero elements $f_1,f_2 \in R$, denoted respectively by
	\(\gcrd(f_1,f_2)\) and \(\lclm(f_1,f_2)\).
	If $a = \gcrd(f_1,f_2)$ and $b = \lclm(f_1,f_2)$, then
	$
	Rf_1 + Rf_2 = Ra
	\quad\text{and}\quad
	Rf_1 \cap Rf_2 = Rb,$
	where $Rf_i$ denotes the left ideal generated by $f_i$. The reader is referred to \cite{ore1933theory,jacobson2009finite,goodearl2004introduction} for further details on the algebraic structure of skew polynomial rings. We shall obtain finite-dimensional central simple algebras by central
	reduction. Since \(Z(R)=\K[X^n]\), each irreducible polynomial
	\(F(Y)\in\K[Y]\) defines a maximal ideal of \(Z(R)\) via \(Y\mapsto X^n\).
	The corresponding quotient of \(R\) is then finite-dimensional over the
	residue field \[\frac{\K[Y]}{\K[Y]F(Y)}.\] Precisely, let $F(Y) \neq Y$ be an irreducible polynomial of $\K[Y]$ with degree $s$. Then $F(X^n)\in Z(R)=\K[X^n]$ and \(RF(X^n)\) is a two-sided ideal of \(R\). We consider the quotient ring
	\[ R_F:=\frac{R}{RF(X^n)}=
	\left\{ a_0+a_1X+\cdots +a_{ns-1}X^{ns-1} + RF(X^n) \colon a_0,\ldots,a_{ns-1} \in \LL
	\right\}. 
	\]
	
	In the following, elements of the quotient ring $R_F$ will be denoted by $\overline{a}$ and we often identify $\overline{a}$ with its canonical representative
	\[
	\overline{a} = a + RF(X^n),\]
	where $a \in R$ is the unique skew polynomial in the corresponding equivalence class having degree strictly smaller than $ns$.  
	With this identification, the degree of $\overline{a}$ is defined by
	$
	\deg(\overline{a}) := \deg(a)$. The center of $R_F$ is denoted by 
	\(\mathbb{E}_F\), and
\[
\mathbb{E}_F\cong \K[Y]/(F(Y)).
\]
Since \(F\) is irreducible of degree \(s\), the center \(\mathbb{E}_F\) is a
field with \([\mathbb{E}_F:\K]=s\).  Moreover, \(RF(X^n)\) is a maximal
two-sided ideal of \(R\), and \(R_F\) is a central simple algebra of dimension
\(n^2\) over \(\mathbb{E}_F\), hence of dimension \(n^2s\) over \(\K\); see,
for example, \cite{gomez2019computing,jacobson2009finite,sheekey2020new}.
Consequently, the Artin--Wedderburn theorem gives an \(\mathbb{E}_F\)-algebra
isomorphism
	\begin{equation} \label{eq:artin}
		\frac{R}{RF(X^n)} \cong M_n(\mathbb{E}_F).
	\end{equation}

	For an $\mathbb{E}_F$-algebra isomorphism \[\MF:R_F \stackrel{\sim}{\longrightarrow} M_n(\mathbb{E}_F),\] we can identify any element $\overline{a} \in R_F$ with its image $\MF(\overline{a})$ in $M_n(\F_{q^s})$, via the isomorphism $\MF$. Note that if \[\MF': R_F \stackrel{\sim}{\longrightarrow} M_n(\mathbb{E}_F)\] is another $\mathbb{E}_F$-algebra isomorphism then, by Skolem--Noether's Theorem, there exists \( N \in \GL_n(\mathbb{E}_F) \) such that
	$
	\MF'(\overline{a}) = N \MF(\overline{a}) N^{-1}$,
	for all \( \overline{a} \in R_F \). Therefore,
	\[\rk(\MF'(\overline{a})) = \rk(\MF(\overline{a})).\] Throughout, we often implicitly identify an element $\overline{a} \in R_F$ with its corresponding matrix in $M_n(\mathbb{E}_F)$. Accordingly, we refer to $\rk(\overline{a})$ to denote the rank of $\MF(\overline{a})$ over $\mathbb{E}_F $. Indeed, as observed above, this rank is independent of the choice of $\mathbb{E}_F$-algebra isomorphism $\MF$. The following result determines the rank of a polynomial in the matrix ring defined by $F(Y)$.

	\begin{theorem} [see \textnormal{\cite[Proposition 7]{sheekey2020new}}]\label{th:rankpolynomial_rephrased}
		For a nonzero $\overline{a}=a+RF(X^n)\in R_F$, we have
		\[
		\rk(\overline{a})
		=
		\frac{1}{s}\Big(\deg(F(X^n))-\deg\big(\gcrd(a,F(X^n))\big)\Big).
		\]
	\end{theorem}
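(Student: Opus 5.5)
The plan is to realize $R_F$ concretely as $\E_F$-linear endomorphisms and read the rank off as a codimension of a kernel. Put $N:=F(X^n)$, which is central of degree $ns$. Consider the left $R$-module $V:=R/RN$. It has $\LL$-dimension $ns$, hence $\K$-dimension $n^2s$.

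Because $N$ is central, $V$ is also a right $R_F$-module. Hence $V$ is an $(R_F,R_F)$-bimodule, and it is isomorphic to $R_F$ acting on itself. A cleaner choice is the simple module: factor $N$ in $R$. The ring $R_F\cong M_n(\E_F)$ has a unique simple left module $S$, with $\E_F$-dimension $n$. The regular module decomposes as $R_F\cong S^{\oplus n}$.

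For $\overline a\in R_F$, right multiplication $\rho_a:\overline z\mapsto \overline z\,\overline a$ on $R_F$ is $\E_F$-linear. Under $R_F\cong M_n(\E_F)$ acting on rows, it corresponds to $n$ copies of the matrix $\MF(\overline a)$. So
\[
\dim_{\E_F}\ker\rho_a=n\,(n-\rk(\overline a)).
\]
It therefore suffices to compute $\ker\rho_a$ in terms of skew polynomials.

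Let $g=\gcrd(a,N)$, and write $N=hg$ with $\deg h=ns-\deg g$. A class $\overline z$ lies in $\ker\rho_a$ iff $za\in RN$. Now $RN$ is two-sided, and since $N$ is central, $RN=NR$. I claim
\[
za\in RN \iff z\in R\,\lclm'\ \text{(left-ideal description)}.
\]
It is easier to use the annihilator $\{z: za\in Ra\cap RN\}$. Write $Ra\cap RN=Rm$ with $m=\lclm(a,N)=ua$. Then $za\in RN$ iff $za\in Rm$ iff $z\in Ru$, by cancellation in the domain $R$.

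Degree additivity of $\lclm$/$\gcrd$ gives $\deg m=\deg a+\deg N-\deg g$. Hence $\deg u=ns-\deg g$. The kernel is therefore $Ru/RN$. Here $RN\subseteq Ru$, because $N a=aN\in Rm$ gives $N\in Ru$.

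The $\LL$-dimension of $Ru/RN$ is $\deg N-\deg u=\deg g$, so its $\K$-dimension is $n\deg g$. Dividing by $s=[\E_F:\K]$ gives
\[
\dim_{\E_F}\ker\rho_a = n\deg g/s.
\]
Equating this with $n(n-\rk\overline a)$ yields $\rk\overline a=(ns-\deg g)/s$, which is the claim.

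The main obstacle is the degree identity $\deg\lclm(a,N)+\deg\gcrd(a,N)=\deg a+\deg N$ in the Ore ring. I would prove it via $\LL$-dimensions: $R/Ra\cap RN$ embeds in $R/Ra\oplus R/RN$ with cokernel $R/(Ra+RN)=R/Rg$. One also has to check that the canonical representative of $a$ may have degree $<ns$, which is harmless since the formula depends only on the class modulo $N$.
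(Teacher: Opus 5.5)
The paper gives no proof of this statement; it imports it from \cite[Proposition~7]{sheekey2020new}. Your argument is a correct, self-contained proof. The steps all check out:
\begin{itemize}
\item the identification $\ker\rho_a=Ru/RN$, via $\lclm(a,N)=ua$ and cancellation in the domain $R$ (using $a\neq 0$);
\item the inclusion $RN\subseteq Ru$, coming from $Na=aN\in Ra\cap RN$;
\item the degree identity, from the exact sequence $0\to R/(Ra\cap RN)\to R/Ra\oplus R/RN\to R/Rg\to 0$;
\item the count $\dim_{\E_F}\ker\rho_a=n\deg g/s=n(n-\rk(\overline a))$.
\end{itemize}
Your closing remark on representatives is also right: nothing in the argument uses $\deg a<ns$.

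Compared with the most direct argument, you take a detour through the kernel. The image of $\rho_a$ is the left ideal $R_F\overline a=(Ra+RN)/RN=Rg/RN$, whose $\LL$-dimension is $ns-\deg g$ immediately. In $M_n(\E_F)$, the left ideal generated by a matrix of rank $r$ consists of the matrices whose rows lie in its row space, so it has $\E_F$-dimension $nr$. Comparing $\K$-dimensions gives $nrs=n(ns-\deg g)$ at once, with no need for $\lclm$ or the $\lclm$/$\gcrd$ degree identity. What your route buys is an explicit description of the left annihilator of $\overline a$ as $Ru/RN$.

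Either route also yields, for free, that $s\mid\deg g$ for every right divisor $g$ of $F(X^n)$: take $a=g$ and use that the rank is an integer. This is the fact the paper quotes separately from \cite[Theorem~2]{sheekey2020new} in the proof of Theorem~\ref{th:new-construction}.

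Finally, delete the false starts. These are the sentence about passing to a simple module by factoring $N$, and the unfinished claim about a left-ideal description ``$z\in R\,\lclm'$''. Neither plays any role in the proof.
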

	
	\section{Central simple algebras with unitary involution} \label{sec:skewpolynomialsetting}

	In this section, we endow the central simple algebra \(R_F\) with an involution of the second kind. This yields an intrinsic description of the Hermitian elements of \(R_F\). Through the corresponding matrix representation, we get an explicit representation of the associated Hermitian matrices.
	
	Consider a monic irreducible polynomial
	\[
	F(Y)=F_0+F_1Y+\cdots+F_{s-1}Y^{s-1}+Y^s \in \K[Y]
	\]
	of degree \(s\) with nonzero constant coefficient \(F_0\neq 0\). Since the constant term of \(F(Y)\) is nonzero, the skew polynomials \(F(X^n)\) and \(X\) are right coprime in \(R\), that is
	$
	\gcrd(F(X^n),X)=1$.
	Consequently, by the Bezout identity there exist \(u,v\in R\) such that
	\begin{equation}\label{eq:X-inverse-bezout}
		1=uX+vF(X^n).
	\end{equation}
	Recall that, in view of the algebra isomorphism in \eqref{eq:artin}, the left and right inverses of any invertible element $h+RF(X^n)$ of $R_F$ coincide, because $h+RF(X^n)$ corresponds to an invertible matrix under this isomorphism. Therefore, from \eqref{eq:X-inverse-bezout} we derive that the coset \(X+RF(X^n)\) is invertible in the ring \(R_F\), with inverse given by \(u+RF(X^n)\). For convenience, we define the notation
		\[
		X^{-1}+RF(X^n):=u+RF(X^n).
		\]

	In particular, every power \(X^i+RF(X^n)\) with \(i\ge 1\) is invertible in \(R_F\). In what follows, we denote the inverse of $X^i+RF(X^n)$ by
	\[
	X^{-i}+RF(X^n).
	\]
	
	Let \(\widehat{F}(Y)\) denote the monic reciprocal polynomial of \(F(Y)\), defined by
	\[
	\widehat{F}(Y)
	=
	F_0^{-1}Y^sF\!\left(\frac{1}{Y}\right)
	=
	F_0^{-1}\bigl(1+F_{s-1}Y+\cdots+F_1Y^{s-1}+F_0Y^s\bigr).
	\] 
	It is well known that \(F(Y)\) is irreducible over \(\F_q\) if and only if its reciprocal polynomial \(\widehat{F}(Y)\) is irreducible.
	
	\begin{theorem}  [see \textnormal{\cite{gomez2025adjoint}}] \label{th:antiwithreciprocal}
		Define the map
		\begin{equation}\label{eq:definitionanti}
			\Theta:
			\sum_{i=0}^{ns-1}a_iX^i+RF(X^n)
			\longmapsto
			\sum_{i=0}^{ns-1}\sigma^{-i}(a_i)X^{-i}+R\widehat{F}(X^n).
		\end{equation}
		Then \(\Theta\) is a ring anti-isomorphism from \(R_F\) onto \(R_{\widehat{F}}\).
		Moreover, its inverse is given by
		\begin{equation}\label{eq:inversetheta}
			\sum_{i=0}^{ns-1}a_iX^i+R\widehat{F}(X^n)
			\longmapsto
			\sum_{i=0}^{ns-1}\sigma^{-i}(a_i)X^{-i}+RF(X^n).
		\end{equation}
	\end{theorem}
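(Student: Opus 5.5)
The plan is to lift $\Theta$ to a map on a ring of skew Laurent polynomials, check that it is an anti-automorphism there, and then show that it carries the ideal generated by $F(X^n)$ onto the ideal generated by $\widehat F(X^n)$. Concretely, let $S=\LL[X,X^{-1};\sigma]$ be the skew Laurent polynomial ring, with $X^{-1}\alpha=\sigma^{-1}(\alpha)X^{-1}$. Define
\[
\tilde\Theta\Bigl(\sum_i a_iX^i\Bigr)=\sum_i\sigma^{-i}(a_i)X^{-i},
\]
which we may write as $\sum_i X^{-i}a_i$.

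We first show that $\tilde\Theta$ is an additive anti-automorphism of $S$ and an involution. Since $\tilde\Theta$ is additive, it suffices to test monomials. We have
\[
\tilde\Theta(aX^i\cdot bX^j)=\tilde\Theta\bigl(a\sigma^i(b)X^{i+j}\bigr)=X^{-i-j}a\sigma^i(b).
\]
On the other hand,
\[
\tilde\Theta(bX^j)\tilde\Theta(aX^i)=X^{-j}b\,X^{-i}a=X^{-j}X^{-i}\sigma^{i}(b)a.
\]
These agree because $\LL$ is commutative. Applying $\tilde\Theta$ twice gives the identity, so $\tilde\Theta$ is bijective.

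Next we relate the ideals. The element $F(X^n)$ is central in $S$, and $\tilde\Theta(F(X^n))=F(X^{-n})$, because the coefficients lie in $\K$ and are fixed by $\sigma$. Moreover
\[
F(X^{-n})=F_0X^{-ns}\widehat F(X^n),
\]
where $F_0X^{-ns}$ is a unit in $S$. Hence $\tilde\Theta(SF(X^n))=S\widehat F(X^n)$, using that these are two-sided ideals generated by central elements.

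The remaining task is to compare $R_F$ with the localization. I expect this identification to be the main point needing care. The natural map $R\to S$ induces a ring homomorphism
\[
R/RF(X^n)\to S/SF(X^n).
\]
This map is an isomorphism. It is surjective because $X$ is invertible modulo $F(X^n)$ in $R_F$, as shown via \eqref{eq:X-inverse-bezout}, so every $X^{-k}$ has a preimage. It is injective because $SF(X^n)\cap R=RF(X^n)$. To see this, suppose $X^{-m}gF(X^n)\in R$ with $g\in R$. Since $X$ and $F(X^n)$ are right coprime, $X^m$ must right-divide $g$, so the element lies in $RF(X^n)$.

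Under this isomorphism, the element $X^{-1}+RF(X^n)$ defined in the paper corresponds to the genuine $X^{-1}$. Hence the map in \eqref{eq:definitionanti} is exactly the map induced by $\tilde\Theta$, and it is well defined. It is an anti-isomorphism $R_F\to R_{\widehat F}$ because $\tilde\Theta$ is an anti-automorphism of $S$ carrying $SF(X^n)$ onto $S\widehat F(X^n)$. The same construction with $\widehat F$ in place of $F$ gives \eqref{eq:inversetheta}, using $\widehat{\widehat F}=F$. This map is inverse to $\Theta$ because $\tilde\Theta^2=\id$.
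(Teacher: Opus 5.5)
Your proof is correct. Note that the paper gives no proof of this statement; it imports it from \cite{gomez2025adjoint}, so there is no internal argument to compare against. Your route through the skew Laurent ring $S=\LL[X,X^{-1};\sigma]$ works as written:
\begin{itemize}
\item the anti-multiplicativity check on monomials is right;
\item $\tilde\Theta^2=\id$, so $\tilde\Theta$ is bijective;
\item the identity $F(X^{-n})=F_0X^{-ns}\widehat F(X^n)$ holds, with $F_0X^{-ns}$ a central unit;
\item $R/RF(X^n)\cong S/SF(X^n)$, with surjectivity coming from the B\'ezout relation $1=uX+vF(X^n)$.
\end{itemize}

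The paper, just before the statement, defines $X^{-1}+RF(X^n)$ via B\'ezout and then needs $R_F\cong M_n(\E_F)$ to know that one-sided inverses are two-sided. In your version $X^{-i}$ is a genuine inverse from the start. So the well-definedness of the formula and its anti-multiplicativity come for free. Your argument also shows the result holds for any $F$ with $F(0)\neq0$, without irreducibility or Artin--Wedderburn.

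Two points should be tightened.

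First, when you rerun the construction for $\widehat F$, say explicitly why it applies. Since $F$ is monic, $\widehat F(0)=F_0^{-1}\neq0$, and $\widehat F$ is itself monic. Hence $X$ is again invertible modulo $\widehat F(X^n)$ and $\iota_{\widehat F}$ is an isomorphism.

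Second, your step ``$X$ and $F(X^n)$ are right coprime, so $X^m$ right-divides $g$'' is a Euclid-type lemma. Such lemmas are not automatic in a noncommutative ring, so it needs a line of justification. The cleanest is by lowest-order terms:
\begin{itemize}
\item $X^mR=RX^m$ is exactly the set of skew polynomials whose coefficients in degrees $<m$ vanish.
\item If $c_kX^k$ is the lowest-order term of $g$, then the lowest-order term of $gF(X^n)$ is $c_kF_0X^k\neq0$.
\item Hence $gF(X^n)=X^mh$ forces $k\ge m$, i.e.\ $g\in RX^m$.
\end{itemize}
Alternatively, use $\gcrd(X^m,F(X^n))=1$ for $X^m$ itself, not just for $X$, together with centrality of $F(X^n)$. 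From $1=aX^m+bF(X^n)$ you get $g=gaX^m+gF(X^n)b=gaX^m+X^mhb\in RX^m$.
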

	
	From now on we restrict to the case in which \(F\) is \textbf{self-reciprocal},
	meaning that it coincides with its reciprocal polynomial:
	\[
	F=\widehat F.
	\]
	Since \(F\) is irreducible, this implies that either \(\deg(F)=1\) or
	\(\deg(F)\) is even. Note that for every even integer \(s\ge 2\)
	there exist monic irreducible self-reciprocal polynomials over \(\F_q\) of
	degree \(s\). More precisely, write
	\[
	s=2^{e+1}r
	\]
	with \(e\ge0\) and \(r\) odd. Carlitz in \cite[Theorems~2--3]{Carlitz1967Reciprocal} proved that, if \(N_q(s)\) denotes the
	number of monic irreducible self-reciprocal polynomials of degree \(s\) over
	\(\F_q\), then
	\begin{equation} \label{eq:Carlitz-formula}
		N_q(s)=
		\begin{cases}
			2^{-e-1}\bigl(q^{2^e}-1\bigr),
			& \text{if \(q\) is odd and \(r=1\),}\\[1.2ex]
			2^{-e-1}\pi(r,q^{2^e}),
			& \text{otherwise,}
		\end{cases}
	\end{equation}
	where \(\pi(r,Q)\) denotes the number of monic irreducible polynomials of
	degree \(r\) over \(\F_Q\). In particular, \(N_q(s)>0\) for every even
	\(s\ge2\). Henceforth, we assume that \(s\ge2\).

	In this case, the anti-isomorphism constructed in Theorem~\ref{th:antiwithreciprocal} specializes to an anti-automorphism
	\[
	\Theta:R_F\longrightarrow R_F,
	\qquad
	\Theta\!\left(\sum_{i=0}^{ns-1}a_iX^i+RF(X^n)\right)
	=
	\sum_{i=0}^{ns-1}\sigma^{-i}(a_i)X^{-i}+RF(X^n).
	\]
	In this situation the map \(\Theta\) becomes an anti-automorphism of the ring \(R_F\) and since \(\Theta^{-1}=\Theta\), this is an involution of the ring $R_F$.
	
	Let \(\mathbb{E}_F=Z(R_F)\), and denote by
	\[
	\varrho:=\Theta|_{\mathbb{E}_F}\in \Aut_{\F_q}(\mathbb{E}_F)
	\]
	the automorphism induced by \(\Theta\) on the center.
	
	\begin{proposition}\label{prop:rho-on-center}
		The automorphism \(\varrho\) is an involution in \(\Aut(\mathbb{E}_F)\). Moreover, its fixed field is
		\[
		\mathbb{E}_F^{\varrho}
		= \F_q(X^{-n}+X^n)
		= \{\, z(X^{-n}+X^n)+RF(X^n) : z(Y)\in \K[Y] \,\},\]
		and $[\E_F:\mathbb{E}_F^{\varrho}]=2$.
	\end{proposition}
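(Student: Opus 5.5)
The plan is to describe the center explicitly and then read off the action of \(\Theta\) on it. The center is \(\E_F=\{z(X^n)+RF(X^n): z\in\K[Y]\}\cong\K[Y]/(F(Y))\). Write \(\omega:=X^n+RF(X^n)\), a generator of \(\E_F\) over \(\K\). For \(a\in\K\) we have \(\sigma^{-i}(a)=a\). Hence \(\Theta\) fixes \(\K\) pointwise, and directly from the definition in Theorem~\ref{th:antiwithreciprocal},
\[
\Theta(\omega)=X^{-n}+RF(X^n)=\omega^{-1}.
\]
This inverse exists because the powers of \(X\) are invertible in \(R_F\). Since \(\Theta\) is an anti-automorphism, it maps the center onto itself. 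On the commutative field \(\E_F\), \(\Theta\) is therefore a ring automorphism fixing \(\K\), with \(\omega\mapsto\omega^{-1}\). Thus \(\varrho\in\Aut_{\K}(\E_F)\), and \(\varrho^2=\id\) because \(\Theta^2=\id\).

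Next I would show \(\varrho\neq\id\), i.e.\ \(\omega\neq\omega^{-1}\), i.e.\ \(\omega^2\neq1\). If \(\omega^2=1\), then \(\omega\) would be a root of \(Y^2-1\), so its minimal polynomial \(F\) would divide \(Y^2-1\) and have degree at most \(1\). This contradicts \(s\ge 2\). Hence \(\varrho\) is an involution of order exactly \(2\). Its fixed field \(\E_F^\varrho\) then has index \([\E_F:\E_F^\varrho]=|\langle\varrho\rangle|=2\) by Artin's theorem, or simply by finite Galois theory, since \(\E_F\) is a finite field.

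For the description of the fixed field, note that \(\omega+\omega^{-1}=X^n+X^{-n}+RF(X^n)\) is fixed by \(\varrho\). So \(\K(\omega+\omega^{-1})\subseteq\E_F^\varrho\). Moreover, \(\omega\) satisfies \(\omega^2-(\omega+\omega^{-1})\omega+1=0\) over \(\K(\omega+\omega^{-1})\). Therefore \([\E_F:\K(\omega+\omega^{-1})]\le 2\), and combined with index \(2\) for \(\E_F^\varrho\) this forces equality
\[
\E_F^\varrho=\K(\omega+\omega^{-1}).
\]
Finally, since the field \(\K(\omega+\omega^{-1})\) is finite, it coincides with \(\K[\omega+\omega^{-1}]\). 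This is exactly the set \(\{z(X^{-n}+X^n)+RF(X^n): z\in\K[Y]\}\).

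The main point needing care is the identification \(\Theta(X^n+RF(X^n))=X^{-n}+RF(X^n)\) as the genuine inverse in \(R_F\). The defining formula for \(\Theta\) is applied to canonical representatives of degree less than \(ns\), and \(X^{-n}\) denotes the inverse coset. Since \(\Theta\) is a ring anti-isomorphism, \(\Theta(X^n)\Theta(X^{-n})=\Theta(X^{-n}X^n)=1\). It therefore suffices to check \(\Theta(X+RF(X^n))=X^{-1}+RF(X^n)\), which is the \(i=1\), \(a_1=1\) instance of the formula, and then take powers. Everything else is elementary field theory.
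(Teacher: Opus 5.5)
Your proposal is correct and follows essentially the same route as the paper. Both arguments get order dividing \(2\) from \(\Theta^2=\id\), rule out \(\varrho=\id\) because \(\omega^2=1\) would contradict \(\deg F=s\ge 2\), and then identify \(\E_F^{\varrho}=\K(\omega+\omega^{-1})\) from the quadratic relation \(\omega^2-(\omega+\omega^{-1})\omega+1=0\) together with the index-\(2\) count; your extra remarks (that \(\Theta(X^n)=\Theta(X)^n=X^{-n}\) is the genuine inverse coset, and that the finite field \(\K(\omega+\omega^{-1})\) equals \(\K[\omega+\omega^{-1}]\), which gives the polynomial description) just make explicit points the paper leaves implicit.
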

	
	\begin{proof}
		Since \(\mathbb{E}_F\) is commutative and \(\Theta^2=\id\), the map \(\varrho\) is an \(\F_q\)-automorphism of order dividing \(2\).
		By definition of \(\Theta\),
		\[
		\Theta(X^n+RF(X^n))=X^{-n}+RF(X^n).
		\]
		It follows immediately that \(\varrho\) fixes \(X^{-n}+X^n\), hence
		\[
		\F_q(X^n+X^{-n})\subseteq \mathbb{E}_F^{\varrho}.
		\]
		
		Let
		\[
		u:=X^n+RF(X^n)\in\mathbb E_F.
		\]
		The element \(u\) is not fixed by \(\varrho\).  Indeed, if
		\(u=u^{-1}\), then \(u^2=1\); since \(\mathbb E_F\) is a field, this
		would force \(u=\pm1\) (and \(u=1\) in characteristic \(2\)), contradicting
		the fact that the minimal polynomial \(F\) of \(u\) has degree \(s\ge2\).
		Thus \(\varrho\) is a nontrivial involution and
		\[
		[\mathbb E_F:\mathbb E_F^{\varrho}]=2.
		\]
		Moreover, \(u\) satisfies
		\[
		Y^2-(u+u^{-1})Y+1=0
		\]
		over \(\F_q(u+u^{-1})\).  Since \(u\notin\F_q(u+u^{-1})\), this
		extension also has degree \(2\).  Together with
		\(\F_q(u+u^{-1})\subseteq\mathbb E_F^{\varrho}\), this gives
		\[
		\mathbb E_F^{\varrho}=\F_q(u+u^{-1})
		=\F_q(X^n+X^{-n}),
		\]
		as claimed.
	\end{proof}
	
	The involution \(\Theta\) is \(\varrho\)-semilinear over the center:
	for every \(\lambda\in \mathbb{E}_F\) and every \(\overline{a}\in R_F\),
	\[
	\Theta(\lambda\overline{a})=\Theta(\overline{a})\varrho(\lambda)=\varrho(\lambda)\Theta(\overline{a}).
	\]
	Thus, once an \(\mathbb{E}_F\)-algebra isomorphism
	$
	\mathcal{M}_F:R_F\stackrel{\sim}{\longrightarrow} M_n(\mathbb{E}_F)$
	is fixed, the anti-automorphism \(\Theta\) induces an anti-automorphism of \(M_n(\mathbb{E}_F)\) which is semilinear with respect to \(\varrho\).

	\begin{theorem}\label{th:transpose-semilinear}
		There exists \(N\in \GL_n(\mathbb{E}_F)\) such that
		\begin{equation}\label{eq:semilinear-transpose}
			\mathcal{M}_F(\overline{a})^\top
			=
			N\,\mathcal{M}_F(\Theta(\overline{a}))^{\varrho}\,N^{-1},
		\end{equation}
		for all $\overline{a}\in R_F$.
	\end{theorem}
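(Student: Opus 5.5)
The plan is to compare two anti-automorphisms of \(M_n(\mathbb{E}_F)\) that are semilinear with respect to the same field automorphism, and then apply Skolem--Noether. Transport \(\Theta\) through \(\mathcal{M}_F\) and set
\[
\tau(A):=\mathcal{M}_F\bigl(\Theta(\mathcal{M}_F^{-1}(A))\bigr)
\]
for \(A\in M_n(\mathbb{E}_F)\). Since \(\Theta\) is a ring anti-automorphism that is \(\varrho\)-semilinear over the center, \(\tau\) is a ring anti-automorphism of \(M_n(\mathbb{E}_F)\) with \(\tau(\lambda A)=\varrho(\lambda)\tau(A)\) for \(\lambda\in\mathbb{E}_F\). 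Here \(\lambda\) acts as a scalar matrix, because \(\mathcal{M}_F\) is an \(\mathbb{E}_F\)-algebra isomorphism.

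Next consider the composite map
\[
\psi(A):=\bigl(\tau(A)^{\varrho}\bigr)^{\top},
\]
where \(\varrho\) is applied entrywise. Entrywise \(\varrho\) is a \(\varrho\)-semilinear ring automorphism of \(M_n(\mathbb{E}_F)\), and transpose is an \(\mathbb{E}_F\)-linear anti-automorphism. So \(\psi\) is a composite of two anti-automorphisms, hence a ring automorphism. Its semilinearity twists cancel, since \(\varrho^2=\id\) by Proposition~\ref{prop:rho-on-center}. Concretely,
\[
\psi(\lambda A)=\bigl((\varrho(\lambda)\tau(A))^{\varrho}\bigr)^{\top}=\lambda\,\psi(A),
\]
so \(\psi\) is an \(\mathbb{E}_F\)-algebra automorphism of the central simple algebra \(M_n(\mathbb{E}_F)\).

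By the Skolem--Noether theorem, \(\psi\) is inner. Hence there is \(P\in\GL_n(\mathbb{E}_F)\) with
\[
\bigl(\tau(A)^{\varrho}\bigr)^{\top}=PAP^{-1}
\]
for all \(A\). To reach \eqref{eq:semilinear-transpose}, I rearrange as follows. Transposing gives
\[
\tau(A)^{\varrho}=(P^{\top})^{-1}A^{\top}P^{\top},
\]
so
\[
A^{\top}=P^{\top}\,\tau(A)^{\varrho}\,(P^{\top})^{-1}.
\]
Setting \(N:=P^{\top}\) and \(A=\mathcal{M}_F(\overline{a})\), so that \(\tau(A)=\mathcal{M}_F(\Theta(\overline{a}))\), yields exactly the claimed identity.

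The only real point to check is the semilinearity bookkeeping: \(\tau\) must be \(\varrho\)-semilinear in the scalar-matrix sense. This follows from the displayed relation \(\Theta(\lambda\overline{a})=\varrho(\lambda)\Theta(\overline{a})\) together with \(\mathcal{M}_F(\lambda)=\lambda I\). It is also worth noting that entrywise conjugation commutes with transpose. With these in hand, there is no serious obstacle.
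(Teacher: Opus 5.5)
Your proof is correct and follows the same route as the paper: transport $\Theta$ to $M_n(\mathbb{E}_F)$, combine it with transpose and an entrywise field twist to obtain an $\mathbb{E}_F$-linear automorphism, and apply Skolem--Noether. The only difference is in how the twist is handled. The paper invokes Skolem--Noether with an a priori unknown twist $\tau$ and then shows $\tau=\varrho$ by evaluating on scalar matrices. You instead build $\varrho$ in from the start and check linearity directly using $\varrho^2=\id$, which makes the Skolem--Noether step fully explicit.
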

	
	\begin{proof}
		By Theorem~\ref{th:antiwithreciprocal}, specialized to the case \(F=\widehat F\), the map \(\Theta\) is an anti-automorphism of \(R_F\). Therefore
		\[
		\Phi:=\mathcal{M}_F\Theta \mathcal{M}_F^{-1}:M_n(\mathbb{E}_F)\longrightarrow M_n(\mathbb{E}_F)
		\]
		is an anti-automorphism of \(M_n(\mathbb{E}_F)\). Since \(\Theta\) restricts to \(\varrho\) on the center \(\mathbb{E}_F\), the map \(\Phi\) is \(\varrho\)-semilinear. By the Skolem--Noether theorem, there exist \(\tau\in \Aut(\mathbb{E}_F)\) and \(N\in \GL_n(\mathbb{E}_F)\) such that
		\begin{equation}\label{eq:SN-pre}
			\mathcal{M}_F(\overline{a})^\top
			=
			N\,\mathcal{M}_F(\Theta(\overline{a}))^{\tau}\,N^{-1},
		\end{equation}
		for all $\overline{a}\in R_F$.
		
		It remains to prove that \(\tau=\varrho\).  Let \(\lambda\in\mathbb{E}_F\). Since \(\mathcal{M}_F(\lambda)=\lambda I_n\) and \(\Theta(\lambda)=\varrho(\lambda)\), from \eqref{eq:SN-pre} we obtain
		\[
		\lambda I_n
		=
		N\,(\tau(\varrho(\lambda))I_n)\,N^{-1}
		=
		\tau(\varrho(\lambda))I_n.
		\]
		Thus \(\lambda=\tau(\varrho(\lambda))\) for every \(\lambda\in \mathbb{E}_F\), namely \(\tau\circ \varrho=\id_{\mathbb{E}_F}\). Since \(\varrho^2=\id\), we conclude that \(\tau=\varrho\). This proves \eqref{eq:semilinear-transpose}.
	\end{proof}
	
	The preceding theorem shows that the transpose on \(M_n(\mathbb{E}_F)\) corresponds, up to inner conjugation, to the involution \(\overline{a}\mapsto \Theta(\overline{a})\) together with the Galois action \(\varrho\) on the coefficients. In particular, the pair \[(R_F,\Theta)\] is a central simple \(\mathbb{E}_F\)-algebra endowed with a \textbf{unitary involution}, i.e.\ an involution of the second kind; see, for example, \cite[p.~20]{Knus1998book}. 
	
	We next show that, after replacing \(\mathcal{M}_F\) by a suitable conjugate representation, one may normalize the matrix \(N\) away.
	
	\begin{theorem} \label{th:normalization}
		Let \(N\in \GL_n(\mathbb{E}_F)\) be as in
		Theorem~\ref{th:transpose-semilinear}. Then there exists
		\(\lambda\in \mathbb{E}_F^\times\) satisfying
		\[
			\N_{\mathbb{E}_F/\mathbb{E}_F^{\varrho}}(\lambda)=1
		\]
		and
		\begin{equation}\label{eq:N-hermitian-scalar}
			N^\top=\lambda N^{\varrho}.
		\end{equation}
		As a consequence, there exists an \(\mathbb{E}_F\)-algebra isomorphism
		\[
		\mathcal{M}_F':R_F\stackrel{\sim}{\longrightarrow} M_n(\mathbb{E}_F)
		\]
		such that
		\begin{equation}\label{eq:normalized}
			\mathcal{M}_F'(\overline{a})^\top = \mathcal{M}_F'(\Theta(\overline{a}))^{\varrho},
		\end{equation}
		for all $\overline{a}\in R_F$.
	\end{theorem}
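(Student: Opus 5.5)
Write $M:=\mathcal{M}_F$ and $A:=M(\overline a)$. The plan is to apply the identity \eqref{eq:semilinear-transpose} twice and exploit $\Theta^2=\id$ and $\varrho^2=\id$. This shows that conjugation by a certain matrix built from $N$ commutes with all of $M_n(\mathbb{E}_F)$, hence is a scalar. Then Hilbert 90 produces a normalizing change of basis.

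\emph{Step 1: derive the scalar relation.} From \eqref{eq:semilinear-transpose}, $M(\Theta(\overline a))^\varrho = N^{-1}A^\top N$, so $M(\Theta(\overline a)) = (N^{\varrho})^{-1}(A^{\varrho})^\top N^{\varrho}$. Now apply \eqref{eq:semilinear-transpose} with $\overline a$ replaced by $\Theta(\overline a)$; since $\Theta^2=\id$ this gives
\[
M(\Theta(\overline a))^\top = N A^\varrho N^{-1}.
\]
Transposing the previous expression gives a second formula:
\[
M(\Theta(\overline a))^\top = (N^{\varrho})^\top A^{\varrho}\,((N^{\varrho})^{\top})^{-1}.
\]
As $\overline a$ ranges over $R_F$, the matrix $A^\varrho$ ranges over all of $M_n(\mathbb{E}_F)$. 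Comparing the two formulas, $N^{-1}(N^\varrho)^\top$ is central, so $(N^{\varrho})^\top=\mu N$ for some $\mu\in\mathbb{E}_F^\times$. Applying $\varrho$ yields $N^\top=\lambda N^\varrho$ with $\lambda=\varrho(\mu)$.

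\emph{Step 2: show the norm condition.} Transpose $N^\top=\lambda N^\varrho$ to get $N=\lambda (N^\varrho)^\top$. Applying $\varrho$ to the original relation gives $(N^\varrho)^\top=\varrho(\lambda)N$. Substituting, $N=\lambda\varrho(\lambda)N$, so $\lambda\varrho(\lambda)=1$, i.e.\ $\N_{\mathbb{E}_F/\mathbb{E}_F^{\varrho}}(\lambda)=1$.

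\emph{Step 3: normalize.} By Hilbert 90, $\lambda=\varrho(c)/c$ for some $c\in\mathbb{E}_F^\times$. Set $H:=cN$. Then
\[
H^\top=c\lambda N^\varrho=\varrho(c)N^\varrho=H^\varrho,
\]
so $H$ is a nonsingular Hermitian matrix, and \eqref{eq:semilinear-transpose} still holds with $N$ replaced by $H$, since scalars cancel.

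For finite fields, every nonsingular Hermitian matrix is congruent to the identity: $H=P(P^{\varrho})^\top$ for some $P\in\GL_n(\mathbb{E}_F)$. This is the classification of Hermitian forms over finite fields, where the determinant class in $(\mathbb{E}_F^\varrho)^\times/\N(\mathbb{E}_F^\times)$ is trivial because the norm is surjective. Orthonormal bases exist because any nonzero Hermitian form over a finite field represents $1$, again by surjectivity of the norm together with isotropy for $n\ge2$.

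Define $\mathcal{M}_F'(\overline a):=Q\,M(\overline a)\,Q^{-1}$ and solve for $Q$ so that \eqref{eq:normalized} holds. Substituting, one needs
\[
(Q^{-1})^\top A^\top Q^\top = Q^\varrho H^{-1} A^\top H (Q^{\varrho})^{-1}.
\]
This holds if $Q^\top Q^\varrho$ is proportional to $H^{-1}$. Writing $H^{-1}=S(S^{\varrho})^\top$ by the same congruence result, it suffices to take $Q=(S^\varrho)^{\top}$; the exact placement of transposes will be checked when writing the proof. Then $\mathcal{M}_F'$ is an $\mathbb{E}_F$-algebra isomorphism satisfying \eqref{eq:normalized}.

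The main obstacle is Step 3: the careful bookkeeping of which combination of transpose and $\varrho$ must factor $H$, and citing (or briefly proving) that Hermitian forms over finite fields admit orthonormal bases. Steps 1 and 2 are routine.
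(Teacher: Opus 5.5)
Your route is the paper's: apply \eqref{eq:semilinear-transpose} twice to get $N^\top=\lambda N^\varrho$, obtain $\lambda\varrho(\lambda)=1$ by transposing and applying $\varrho$, use Hilbert~90 to rescale $N$ to a Hermitian $H$, use congruence of nonsingular Hermitian matrices to $I_n$ over finite fields, and conjugate $\mathcal{M}_F$. Steps 1 and 2 are correct; your Step 1 is only a rearrangement of the paper's substitution. The last part of Step 3, which you left to be checked, is wrong as written in two places.

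Your displayed requirement
$(Q^{-1})^\top A^\top Q^\top=Q^\varrho H^{-1}A^\top H(Q^\varrho)^{-1}$ for all $A$ is right. Multiplying it on the left by $Q^\top$ and on the right by $Q^\varrho H^{-1}$ shows it is equivalent to $Q^\top Q^\varrho H^{-1}$ being central. That is, you need $Q^\top Q^\varrho\in\mathbb{E}_F^\times H$, not $\mathbb{E}_F^\times H^{-1}$. Your candidate also fails on its own terms. With $H^{-1}=S(S^\varrho)^\top$ and $Q=(S^\varrho)^\top$ you get $Q^\top Q^\varrho=S^\varrho S^\top=(H^{-1})^\varrho=(H^\top)^{-1}$, which in general is proportional to neither $H$ nor $H^{-1}$.

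No factorization of $H^{-1}$ is needed. From your own $H=P(P^\varrho)^\top$, set $Q:=P^\top$. Then $Q^\varrho=(P^\varrho)^\top$, so $Q^\top Q^\varrho=H$, and $\mathcal{M}_F':=Q\,\mathcal{M}_F\,Q^{-1}$ satisfies \eqref{eq:normalized}. This is exactly the paper's choice: its $P$ satisfies $(P^{-1})^\top H(P^{-1})^\varrho=I_n$, i.e.\ $H=P^\top P^\varrho$, and it sets $\mathcal{M}_F'=P\,\mathcal{M}_F\,P^{-1}$.

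A minor point: the reason a nonzero Hermitian form represents $1$ has nothing to do with isotropy. Such a form has a vector $v$ with $H(v,v)\neq0$; otherwise polarization gives $\Tr(H(u,v))=0$ for all $u,v$, forcing $H=0$. Since $H(v,v)\in(\mathbb{E}_F^\varrho)^\times$ is a norm, $v$ can be rescaled so that $H(v,v)=1$.
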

	
	\begin{proof}
		Set \(A=\mathcal{M}_F(\overline{a})\) and \(B=\mathcal{M}_F(\Theta(\overline{a}))\). Then \eqref{eq:semilinear-transpose} reads
		$
		A^\top=N B^{\varrho}N^{-1}$.
		Replacing \(\overline{a}\) by \(\Theta(\overline{a})\) and using \(\Theta^2=\id\), we get
		\[
		B^\top=N A^{\varrho}N^{-1}.
		\]
		Taking transpose in the latter identity and then applying \(\varrho\), we obtain
		\[
		B^{\varrho}
		=
		\bigl((N^{-1})^\top\bigr)^{\varrho}\,A^\top\,(N^\top)^{\varrho}.
		\]
		Therefore, by substituting $
		A^\top=N B^{\varrho}N^{-1}$, we get 
		\[
		B^{\varrho}
		=
		\bigl((N^{-1})^\top\bigr)^{\varrho} \, N\,B^{\varrho}\, N^{-1}\,(N^\top)^{\varrho},
		\]
		and so 
		\[
		B
		=
		(N^{-1})^\top \, N^{\varrho}\,B\, \left(N^{-1}\right)^{\varrho}\,N^\top.
		\]
		This implies that 
		\[
		\left(N^{-1}\right)^{\varrho}\,N^\top B
		=
		B\,\left(N^{-1}\right)^{\varrho}N^{\top},\]
		for all $B\in M_n(\mathbb{E}_F)$. Since the center of \(M_n(\mathbb{E}_F)\) is \(\mathbb{E}_F I_n\), there exists \(\lambda\in\mathbb{E}_F^\times\) such that \eqref{eq:N-hermitian-scalar} holds.  Transposing that identity gives
		\[
		N=\lambda(N^{\varrho})^\top,
		\]
		whereas applying \(\varrho\) gives
		\[
		(N^{\varrho})^\top=\varrho(\lambda)N.
		\]
		Consequently, \(N=\lambda\varrho(\lambda)N\), and hence
		\[
		\N_{\mathbb{E}_F/\mathbb{E}_F^{\varrho}}(\lambda)=\lambda\varrho(\lambda)=1.
		\]
		By Hilbert's Theorem~90, there exists \(\mu\in \mathbb{E}_F^\times\) such that
		\[
		\lambda=\mu^{-1}\varrho(\mu).
		\]
		Set
		$H:=\mu N$.
		Then, using \(N^\top=\lambda N^{\varrho}\), we obtain
		\[
		H^\top=\mu N^\top=\mu\lambda N^{\varrho}=\varrho(\mu)N^{\varrho}=H^{\varrho}.
		\]
		Thus \(H\) is a nondegenerate \(\varrho\)-Hermitian matrix. Over the finite field extension \(\mathbb{E}_F/\mathbb{E}_F^{\varrho}\), every nondegenerate \(\varrho\)-Hermitian form is equivalent to the standard Hermitian form, see, e.g., \cite[Theorem~7.1]{taylor1992geometry}. Therefore, there exists \(P\in \GL_n(\mathbb{E}_F)\) such that
		\begin{equation}\label{eq:hermitian-standard}
			(P^{-1})^\top\,H\,(P^{-1})^{\varrho}=I_n.
		\end{equation}
		
		Now define
		\[
		\mathcal{M}_F'(\overline{a}):=P\,\mathcal{M}_F(\overline{a})\,P^{-1},
		\qquad \overline{a}\in R_F.
		\]
		Then \(\mathcal{M}_F'\) is again an \(\mathbb{E}_F\)-algebra isomorphism. Using \eqref{eq:semilinear-transpose}, we compute
		\[
		\mathcal{M}_F'(\overline{a})^\top
		=
		(P^{-1})^\top A^\top P^\top
		=
		(P^{-1})^\top N B^{\varrho} N^{-1} P^\top.
		\]
		Since \(H=\mu N\), identity \eqref{eq:hermitian-standard} implies
		\[
		(P^{-1})^\top N B^{\varrho} N^{-1} P^\top=P^{\varrho}\,B^{\varrho}\,(P^{-1})^{\varrho}
		=
		\bigl(PBP^{-1}\bigr)^{\varrho}
		=
		\mathcal{M}_F'(\Theta(\overline{a}))^{\varrho}.
		\]
		This proves \eqref{eq:normalized}.
	\end{proof}

	Equation \eqref{eq:normalized} provides the intrinsic expression of the compatibility between transpose and the involution \(\Theta\). After a suitable normalization, the anti-automorphism \(\Theta\) becomes precisely the adjoint involution defined by the standard \(\varrho\)-Hermitian form on \(\mathbb{E}_F^n\). It follows that, whenever \(\varrho\neq \id\), the algebra with involution \((R_F,\Theta)\) is isomorphic to
	\[
	\bigl(M_n(\mathbb{E}_F),\,A\mapsto A^{*}\bigr),
	\qquad A^{*}=(A^{\varrho})^\top.
	\]

	In the rest of the paper, in view of \Cref{th:normalization}, we fix an \(\mathbb{E}_F\)-algebra isomorphism
	\[
	\mathcal{M}_F:R_F\stackrel{\sim}{\longrightarrow} M_n(\mathbb{E}_F)
	\]
	such that
	\begin{equation}\label{eq:normalized-hermitian}
		\mathcal{M}_F(\Theta(\overline{a}))=\mathcal{M}_F(\overline{a})^*
	\end{equation}
	for all \(\overline{a}\in R_F\).  The relation \eqref{eq:normalized-hermitian} immediately identifies Hermitian matrices with the fixed points of \(\Theta\).
	
	\begin{proposition}\label{prop:Hermitian-fixed-points}
		For \(\overline{a}\in R_F\), the matrix \(\mathcal{M}_F(\overline{a})\) is Hermitian if and only if
		\[
		\Theta(\overline{a})=\overline{a}.
		\]
		Equivalently,
		\[
		\mathcal{M}_F\bigl(\mathrm{Fix}(\Theta)\bigr)
		=
		\{H\in M_n(\mathbb{E}_F):H^*=H\}=\mathcal{H}_n(\E_F/\E_F^{\varrho}),
		\]
		where
		\[
		\mathrm{Fix}(\Theta):=\{\overline{a}\in R_F:\Theta(\overline{a})=\overline{a}\}.
		\]
	\end{proposition}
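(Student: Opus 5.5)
The statement follows directly from the normalization \eqref{eq:normalized-hermitian} together with the injectivity and surjectivity of \(\mathcal{M}_F\). First I will prove the pointwise equivalence. Fix \(\overline{a}\in R_F\) and set \(A=\mathcal{M}_F(\overline{a})\). By \eqref{eq:normalized-hermitian}, \(\mathcal{M}_F(\Theta(\overline{a}))=A^*\). If \(\Theta(\overline{a})=\overline{a}\), applying \(\mathcal{M}_F\) gives \(A^*=A\), so \(A\) is Hermitian. Conversely, suppose \(A^*=A\). Then \(\mathcal{M}_F(\Theta(\overline{a}))=\mathcal{M}_F(\overline{a})\). Since \(\mathcal{M}_F\) is an isomorphism, in particular injective, we get \(\Theta(\overline{a})=\overline{a}\).

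Next I will derive the set equality. The equivalence just proved gives \(\mathcal{M}_F(\Fix(\Theta))\subseteq\{H: H^*=H\}\). For the reverse inclusion, take any \(H\in M_n(\E_F)\) with \(H^*=H\). Surjectivity of \(\mathcal{M}_F\) gives some \(\overline{a}\) with \(\mathcal{M}_F(\overline{a})=H\), and the equivalence shows \(\overline{a}\in\Fix(\Theta)\).

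Finally, the identification \(\{H:H^*=H\}=\mathcal{H}_n(\E_F/\E_F^{\varrho})\) is the definition of Hermitian matrices from Section~\ref{sec:preliminaries}. It applies with \(\E=\E_F\), \(\F=\E_F^{\varrho}\) and \(\rho=\varrho\), because Proposition~\ref{prop:rho-on-center} shows that \(\varrho\) is the nontrivial element of \(\Gal(\E_F/\E_F^{\varrho})\) and that \([\E_F:\E_F^{\varrho}]=2\).

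There is no real obstacle here. All the substantive work is in Theorem~\ref{th:normalization}, which produces the normalized isomorphism; the only care needed is to use both injectivity and surjectivity of \(\mathcal{M}_F\).
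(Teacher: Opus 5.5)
Your proposal is correct and follows essentially the paper's own (implicit) argument. The paper states the proposition without a separate proof, remarking only that the normalized relation \eqref{eq:normalized-hermitian} immediately identifies Hermitian matrices with the fixed points of $\Theta$. Your write-up makes explicit the injectivity and surjectivity of $\mathcal{M}_F$, and uses Proposition~\ref{prop:rho-on-center} to identify $\varrho$ as the relevant Galois conjugation, which is exactly what that remark relies on.
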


	We now determine \(\mathrm{Fix}(\Theta)\) explicitly inside \(R_F\). Recall that $s$ is the degree of the monic irreducible self-reciprocal polynomial $F$. Thus $s$ is even. Set
	\[
	m:=\frac{ns}{2},
	\]
	and write \(R_F\) using symmetric powers of \(X\). Since \(X+RF(X^n)\) is invertible in \(R_F\), right multiplication by \(X^{-m}\) gives an \(\LL\)-linear automorphism of \(R_F\). Therefore
	\[
	\mathcal{B}:=\{X^{-m},X^{-m+1},\dots,X^{-1},1,X,\dots,X^{m-1}\}
	\]
	defines an \(\LL\)-basis of \(R_F\). In particular, every element \(\overline{a}\in R_F\) can be written uniquely in the form
	\begin{equation}\label{eq:symmetric-expansion}
		\overline{a}=\sum_{i=-m}^{m-1}\alpha_i X^i + RF(X^n),
	\end{equation}
	with $\alpha_i\in \LL$. Recall that
	\[
	\Theta\!\left(\sum_{i=-m}^{m-1}\alpha_i X^i+RF(X^n)\right)
	=
	\sum_{i=-m}^{m-1}\sigma^{-i}(\alpha_i)X^{-i}+RF(X^n).
	\]
	Hence, after reindexing,
	\[
	\Theta(\overline{a})
	=
	\alpha_0+\sum_{i=1}^{m-1}\sigma^{-i}(\alpha_i)X^{-i}
	+\sum_{i=1}^{m}\sigma^i(\alpha_{-i})X^i+RF(X^n).
	\]
	This yields an explicit description of the fixed subspace of \(\Theta\).
	
	\begin{theorem}\label{th:explicit-hermitian-elements}
		We have
		\[
		\Fix(\Theta)
		=
		\left\{
		\sum_{i=1}^{m-1}\sigma^{-i}(\alpha_i)X^{-i}
		+\alpha_0
		+\sum_{i=1}^{m-1}\alpha_i X^i
		+RF(X^n)
		:\;
		\alpha_0,\alpha_1,\dots,\alpha_{m-1}\in\LL
		\right\}.
		\]
	\end{theorem}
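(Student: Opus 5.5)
The plan is a containment-plus-counting argument. It avoids rewriting the term $X^{m}$ (which is not in the basis $\mathcal B$) back into the basis.

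\emph{Containment.} Denote the displayed set by $\mathcal S'$. For $\overline a=\sum_{i=1}^{m-1}\sigma^{-i}(\alpha_i)X^{-i}+\alpha_0+\sum_{i=1}^{m-1}\alpha_iX^i+RF(X^n)$, apply the reindexed formula for $\Theta$ stated just before the theorem, with $\alpha_{-i}:=\sigma^{-i}(\alpha_i)$ for $1\le i\le m-1$ and $\alpha_{-m}:=0$. The coefficient of $X^{-i}$ in $\Theta(\overline a)$ is $\sigma^{-i}(\alpha_i)$. The coefficient of $X^{i}$ is $\sigma^{i}(\alpha_{-i})=\sigma^i\sigma^{-i}(\alpha_i)=\alpha_i$. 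The term with $i=m$ vanishes because $\alpha_{-m}=0$. Hence $\Theta(\overline a)=\overline a$ and $\mathcal S'\subseteq\Fix(\Theta)$. Both sets are $\K$-subspaces of $R_F$: $\Theta$ is additive and $\K$-linear, since $\sigma$ fixes $\K$ and $\K\subseteq\E_F$ is fixed by $\varrho$. The map $\LL^m\to\mathcal S'$, $(\alpha_0,\dots,\alpha_{m-1})\mapsto\overline a$, is $\K$-linear. It is injective because $\mathcal B$ is an $\LL$-basis of $R_F$ and the coefficients $\alpha_0,\dots,\alpha_{m-1}$ appear directly as coefficients of $1,X,\dots,X^{m-1}$. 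Therefore $\dim_\K\mathcal S'=nm=n^2s/2$.

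\emph{Counting $\Fix(\Theta)$.} By Proposition~\ref{prop:Hermitian-fixed-points}, $\mathcal M_F$ maps $\Fix(\Theta)$ bijectively onto $\mathcal H_n(\E_F/\E_F^{\varrho})$. From the preliminaries, $|\mathcal H_n(\E/\F)|=|\F|^{n^2}$. By Proposition~\ref{prop:rho-on-center}, $[\E_F:\E_F^\varrho]=2$ and $[\E_F:\K]=s$, so $|\E_F^\varrho|=q^{s/2}$. Hence $|\Fix(\Theta)|=q^{sn^2/2}=|\mathcal S'|$. Since $\mathcal S'\subseteq\Fix(\Theta)$ are finite sets of the same size, they are equal.

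\emph{Main obstacle.} The only delicate point is the boundary index $X^{-m}$. A direct coefficient comparison would require rewriting $X^{m}=X^{-m}X^{ns}$ via $X^{ns}\equiv-\sum_{j<s}F_jX^{nj}$ and tracking how this contributes to lower coefficients. The cardinality argument sidesteps this entirely, and it implicitly shows that every fixed element has vanishing $X^{-m}$-coefficient in the expansion \eqref{eq:symmetric-expansion}. I would therefore rely on counting, and only remark that a direct computation is also possible.
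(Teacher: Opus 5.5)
Your proof is correct and follows essentially the same route as the paper: you show the displayed set is contained in $\Fix(\Theta)$ by applying $\Theta$ termwise, then compare cardinalities via $|\Fix(\Theta)|=|\mathcal H_n(\E_F/\E_F^{\varrho})|=q^{sn^2/2}=|\LL|^{m}$. You also check explicitly that the parametrization by $(\alpha_0,\dots,\alpha_{m-1})$ is injective, using that $\mathcal B$ is an $\LL$-basis; the paper leaves this step implicit.
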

	
	\begin{proof}
		Set
		\[
		D:=
		\left\{
		\sum_{i=1}^{m-1}\sigma^{-i}(\alpha_i)X^{-i}
		+\alpha_0
		+\sum_{i=1}^{m-1}\alpha_i X^i
		+RF(X^n)
		:\;
		\alpha_0,\alpha_1,\dots,\alpha_{m-1}\in\LL
		\right\}.
		\]
		We first show that \(D\subseteq \Fix(\Theta)\). Let
		$\overline{a}=a+RF(X^n)\in D,$ where
		\[
		a=
		\sum_{i=1}^{m-1}\sigma^{-i}(\alpha_i)X^{-i}
		+\alpha_0
		+\sum_{i=1}^{m-1}\alpha_i X^i
		\]
		for some \(\alpha_0,\alpha_1,\dots,\alpha_{m-1}\in\LL\). Then
		\[
		\Theta(\overline{a})
		=
		\sum_{i=1}^{m-1}\Theta(\alpha_i X^i)
		+\Theta(\alpha_0)
		+\sum_{i=1}^{m-1}\Theta\!\bigl(\sigma^{-i}(\alpha_i)X^{-i}\bigr)
		=
		\sum_{i=1}^{m-1}\sigma^{-i}(\alpha_i)X^{-i}
		+\alpha_0
		+\sum_{i=1}^{m-1}\alpha_i X^i
		=
		\overline{a},
		\]
		and therefore \(\overline{a}\in \Fix(\Theta)\). A dimension count gives the reverse inclusion. Indeed, by Proposition~\ref{prop:Hermitian-fixed-points},
		\[
		|\Fix(\Theta)|=|\mathcal{H}_n(\E_F/\E_F^{\varrho})|=|\E_F|^{\frac{n^2}{2}}.
		\]
		
		On the other hand, \(D\) has size
		\[
		|\LL|^{m}=|\K|^{\frac{n^2s}{2}}= |\E_F|^{\frac{n^2}{2}}.\]
		Since \(D\subseteq \Fix(\Theta)\) and the two sets have the same cardinality, the assertion follows.
	\end{proof}
	
	\section{New constructions} \label{sec:newconstructiontotal}
	
	In this section we provide a new family of maximum additive Hermitian
	\(d\)-codes using the central simple algebra with involution
	\((R_F,\Theta)\) studied above.  The construction below yields maximum additive Hermitian \(d\)-codes for all admissible choices
	of the parameters. We then introduce a scalar-field invariant for additive
	Hermitian codes, which enables us to distinguish the present family from the
	classical constructions of Schmidt and shows that these codes are genuinely
	new.
	
	\subsection{New constructions} \label{sec:newconstruction}
	Throughout this section, let \(\LL=\F_{q^n}\), \(\K=\F_q\), let \(s\) be a
	positive even integer, and let \(F\in\F_q[Y]\) be a monic irreducible
	self-reciprocal polynomial of degree \(s\). By \(R_F\) we denote the associated central simple algebra endowed with the involution of the second kind \(\Theta\) introduced in Section~\ref{sec:skewpolynomialsetting}. Via the $\mathbb{E}_F$-algebra isomorphism $
	\mathcal{M}_F\colon R_F\longrightarrow M_n(\mathbb{E}_F),
	$
	the \(\Theta\)-fixed subspace of \(R_F\) is identified with the space $\mathcal{H}_n(\E_F/\E_F^{\varrho})$ of $\varrho$-Hermitian matrices in \(M_n(\mathbb{E}_F)\). Our aim is to construct explicit additive families attaining the general bound for Hermitian \(d\)-codes.
	
	For \(0\le \ell \le ns/2-1\), define
	\begin{equation} \label{eq:ourcode}
		\newH_{\ell}(F):=
		\left\{
		\alpha_0+
		\sum_{i=1}^{\ell}
		\bigl(\sigma^{-i}(\alpha_i)X^{-i}+\alpha_i X^i\bigr)
		+RF(X^n)
		\;:\;
		\alpha_i\in\F_{q^n}
		\right\}
		\subseteq R_F.
	\end{equation}
		\begin{theorem}\label{th:new-construction}
		For all integers \(d\) and \(n\) with \(1\le d\le n\), the set \(\newH_{\ell}(F)\), with 
		\[
		\ell=\frac{s(n-d+1)-2}{2}
		\] 
		is a maximum additive Hermitian \(d\)-code in
		\(\mathcal{H}_n(\mathbb{E}_F/\E_F^{\varrho})\).
	\end{theorem}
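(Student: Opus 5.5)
Three things have to be checked: $\newH_\ell(F)$ lies in $\Fix(\Theta)$ and is additive; its cardinality equals Schmidt's bound \eqref{eq:hermitian-bound}; and every nonzero element has rank at least $d$. Containment is immediate from Theorem~\ref{th:explicit-hermitian-elements}. Indeed $\ell\le m-1$: from $d\ge1$ we get $\ell\le (sn-2)/2=m-1$. Additivity is clear, since the map $(\alpha_0,\dots,\alpha_\ell)\mapsto$ element is $\K$-linear.

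For the cardinality, the basis $\mathcal B$ of \eqref{eq:symmetric-expansion} shows that the map $(\alpha_0,\dots,\alpha_\ell)\mapsto \alpha_0+\sum_{i=1}^{\ell}(\sigma^{-i}(\alpha_i)X^{-i}+\alpha_iX^i)$ is injective. The powers $X^{-\ell},\dots,X^{\ell}$ all lie in $\mathcal B$ because $\ell\le m-1$. Hence
\[
|\newH_\ell(F)|=q^{n(\ell+1)}=q^{ns(n-d+1)/2}=|\E_F^{\varrho}|^{n(n-d+1)},
\]
using $[\E_F^{\varrho}:\K]=s/2$ from Proposition~\ref{prop:rho-on-center}. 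This matches \eqref{eq:hermitian-bound} with $\F=\E_F^{\varrho}$. So once the minimum distance is established, maximality follows from Schmidt's bound.

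The main step is the rank bound. Let $\overline a\in\newH_\ell(F)$ be nonzero. I would multiply by the unit $X^{\ell}$ on the left; this does not change the rank, since it corresponds to multiplying by an invertible matrix under $\mathcal M_F$. We get $\overline{b}=X^\ell\overline a$, represented by the ordinary skew polynomial
\[
b=\sum_{i=1}^{\ell}\sigma^{\ell-i}(\alpha_i)X^{\ell-i}+\sigma^\ell(\alpha_0)X^\ell+\sum_{i=1}^{\ell}\sigma^\ell(\alpha_i)X^{\ell+i}.
\]
This requires checking that $X^\ell\cdot X^{-i}=X^{\ell-i}$ in $R_F$, which holds since $X^{-i}$ is the inverse of $X^i$ there. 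Then $b$ is a nonzero element of $R$ of degree at most $2\ell<ns$, so it is the canonical representative of $\overline b$. By Theorem~\ref{th:rankpolynomial_rephrased},
\[
\rk(\overline a)=\frac{1}{s}\bigl(ns-\deg\gcrd(b,F(X^n))\bigr)\ge \frac{ns-2\ell}{s}=\frac{ns-s(n-d+1)+2}{s}=d-1+\frac{2}{s}.
\]
Since the rank is an integer and $2/s>0$, this gives $\rk(\overline a)\ge d$.

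I expect the main obstacle to be the justification of the shift. One must confirm that the element $X^{-i}+RF(X^n)$, defined via Bezout as the inverse $u$, multiplies as a genuine negative power. Concretely, $X^{\ell}u^{i}\equiv X^{\ell-i}$ modulo $RF(X^n)$, which follows from $X^iu^i\equiv1$ and the centrality of $F(X^n)$. One must also confirm that $\deg b<ns$, so the gcrd formula applies to $b$ itself, and that $b\neq0$. The last point holds because the coefficients $\alpha_i$ appear injectively.

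Finally, I would note that $\rk$ of the element in $R_F$ equals the rank of its Hermitian matrix over $\E_F$. Hence $\drk(\newH_\ell(F))\ge d$ holds in $\mathcal H_n(\E_F/\E_F^{\varrho})$ via Proposition~\ref{prop:Hermitian-fixed-points}, which completes the proof.
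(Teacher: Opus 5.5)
Your proof is correct and follows the paper's argument. You get containment from Theorem~\ref{th:explicit-hermitian-elements}, match the count $q^{sn(n-d+1)/2}$ against Schmidt's bound, and shift by the unit $X^{\ell}$ to get an ordinary skew polynomial of degree at most $2\ell<ns$ to which Theorem~\ref{th:rankpolynomial_rephrased} applies. The differences are cosmetic: you multiply by $X^{\ell}$ on the left rather than the right, and you pass from $\rk(\overline a)\ge d-1+2/s$ to $\rk(\overline a)\ge d$ by integrality of the rank, whereas the paper cites Sheekey's result that right divisors of $F(X^n)$ have degree divisible by $s$. These two ways of rounding are equivalent, and yours avoids the extra citation.
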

	
	\begin{proof}
		By \Cref{th:explicit-hermitian-elements}, every element of \(\newH_{\ell}(F)\) is fixed by \(\Theta\), and therefore
		$
		\newH_{\ell}(F)\subseteq \Fix(\Theta)$.
		Hence, via the algebra isomorphism
		$
		\mathcal{M}_F\colon R_F\stackrel{\sim}{\longrightarrow} M_n(\mathbb{E}_F),
		$ such that \eqref{eq:normalized-hermitian} holds, 
		the set \(\newH_{\ell}(F)\) is identified with a family of Hermitian matrices, namely with a subset of \(\mathcal{H}_n(\mathbb{E}_F/\E_F^{\varrho})\). Clearly, \(\newH_{\ell}(F)\) is an \(\F_q\)-subspace of \(R_F\), with
		\[
		|\newH_{\ell}(F)|
		=
		(q^n)^{\frac{s(n-d+1)}{2}}
		=
		q^{\frac{sn(n-d+1)}{2}}.
		\]
		
		It remains to prove that \(\newH_{\ell}(F)\) is a \(d\)-code in \(\mathcal{H}_n(\mathbb{E}_F/\E_F^{\varrho})\), that is, every nonzero element has rank at least \(d\). Let
		\[
		\overline{a}=
		\alpha_0+
		\sum_{i=1}^{\frac{s(n-d+1)-2}{2}}
		\bigl(\sigma^{-i}(\alpha_i)X^{-i}+\alpha_i X^i\bigr)
		+RF(X^n)
		\in
		\newH_{\ell}(F)\setminus\{0\}.
		\]
		We claim that \(\rk(\overline{a})\ge d\). 
		First note that \(X^{\ell}+RF(X^n)\) is invertible in \(R_F\). Hence right multiplication by \(X^{\ell}+RF(X^n)\) preserves rank, and therefore
		\[
		\rk(\overline{a})=\rk(\overline{a}X^{\ell}).
		\]
		
		Moreover, no negative powers of \(X\) occur in \(\overline{a}X^{\ell}\). Thus \(\overline{a}X^{\ell}\) admits a representative \(b\in R\) such that
		\[
		\deg(b)\le \ell+\ell=2\ell=s(n-d+1)-2=s(n-d)+s-2.
		\]
		
		Now every common right divisor of \(F(X^n)\) and \(b\) is also a right divisor of \(F(X^n)\). Since \(F(Y)\) has degree \(s\), it follows from \cite[Theorem~2]{sheekey2020new} that every right divisor of \(F(X^n)\) has degree divisible by \(s\). Therefore
		\[
		\deg\bigl(\gcrd(F(X^n),b)\bigr)\le s(n-d)+s-2,
		\]
		and, because \(\deg\bigl(\gcrd(F(X^n),b)\bigr)\) is a multiple of \(s\), we in fact obtain
		\[
		\deg\bigl(\gcrd(F(X^n),b)\bigr)\le s(n-d).
		\]
		Applying \Cref{th:rankpolynomial_rephrased}, we conclude that
		\[
		\rk(\overline{a})
		=
		\rk(\overline{a}X^{\ell})
		=
		n-\frac{1}{s}\deg\bigl(\gcrd(F(X^n),b)\bigr)
		\ge
		n-\frac{1}{s}\,s(n-d)
		=
		d.
		\]
		This proves that \(\newH_{\ell}(F)\) is an additive Hermitian \(d\)-code in \(\mathcal{H}_n(\mathbb{E}_F/\E_F^{\varrho})\).
		Finally, by \eqref{eq:hermitian-bound}, every additive Hermitian \(d\)-code in \(\mathcal{H}_n(\mathbb{E}_F/\E_F^{\varrho})\) has cardinality at most
		\[
		q^{\frac{sn(n-d+1)}{2}}.
		\]
		Since \(\newH_{\ell}(F)\) attains this bound, it is maximum.
	\end{proof}

	By Proposition~\ref{prop:rho-on-center}, the extension \(\mathbb E_F/\E_F^{\varrho}\) is quadratic; equivalently, \(\E_F^{\varrho}\cong \F_{q^{s/2}}\). Thus, for every even value of \(s\) and every self-reciprocal irreducible polynomial \(F\) as above, the construction takes place over the Hermitian space associated with the quadratic extension
	\[
	\mathbb E_F/\E_F^{\varrho} \cong \mathbb F_{q^s}/\mathbb F_{q^{s/2}}.
	\]
	Consequently, Theorem~\ref{th:new-construction} gives, for every length \(n\) and every minimum distance \(d\), an explicit maximum additive Hermitian \(d\)-code of size
	\[
	|\mathbb E_F|^{\frac{n(n-d+1)}2},
	\]
	in $\mathcal{H}_n(\E_F/\E_F^{\varrho})$, attaining the Schmidt bound \eqref{eq:hermitian-bound}. In particular, no parity restriction on the pair \((n,d)\) remains. This fills the exceptional even--even range \(4\le d\le n-2\), which was left open by Schmidt's constructions. At the same time, this construction places all admissible values of \(n\) and \(d\) inside a single uniform family.
	
	In the next section we recall Schmidt's original constructions, in the cases where they exist, in a skew-polynomial form compatible with the present notation. This will allow us to compare the classical trace constructions with the new family above.

	\subsection{Schmidt's constructions in the skew-polynomial framework}
	\label{sec:schmidt-skew}
	
	We now recall Schmidt's trace constructions for maximum additive Hermitian
	rank-metric codes \cite{Schmidt2018Hermitian}, recast in a skew-polynomial
	setting. We work over the quadratic extension
	\[
	\E:=\F_{q^2},
	\qquad
	\F:=\F_q,
	\]
	and we regard
	\[
	\LL:=\F_{q^{2n}}
	\]
	as an \(n\)-dimensional vector space over \(\E\).  Let
	\[
	S:=\LL[X;\sigma],
	\qquad
	\sigma(\alpha)=\alpha^{q^2},
	\]
	and set
	\[
	S_0:=\frac{S}{S(X^n-1)}.
	\]
	Since \(\sigma^n=\id_{\LL}\), the polynomial \(X^n-1\) is central in \(S\).
	Thus, \(S_0\) is an \(\E\)-algebra isomorphic to
	\(M_n(\E)\).  Also, we can identify \(S_0\) with the algebra of \(q^2\)-linearized polynomials over
	\(\LL\) modulo \(Z^{q^{2n}}-Z\) via 
	\[
	f=\sum_{i=0}^{n-1} c_iX^i
	\mapsto 
	P_f:=\sum_{i=0}^{n-1} c_i Z^{q^{2i}}.
	\]
	To \(f\in S_0\) we associate the \(\E\)-sesquilinear form
	\begin{equation}\label{eq:schmidt-skew-form}
		H_f(u,v)
		:=
		\Tr_{\LL/\E}\bigl(P_f(u)v^q\bigr),
		\qquad u,v\in \LL.
	\end{equation}

	The Hermitian condition can now be written directly on the coefficients of
	\(f\).  Namely, if
	$
	f=\sum_{i=0}^{n-1} c_iX^i,$
	then \(H_f\) is Hermitian over \(\E/\F\) if and only if
	\begin{equation}\label{eq:schmidt-hermitian-coefficients}
		c_{n-i+1}=c_i^{q^{2n-2i+1}},
	\end{equation}
	for every \(i\in\{0,\ldots,n-1\}\), where the indices are read modulo
	\(n\); see also \cite{Longobardi2020automorphism}. Equivalently, the Hermitian
	forms are
	the fixed points of the semilinear involution
	\[
	\Theta_{\rm cyc}:S_0\longrightarrow S_0,
	\qquad
	\Theta_{\rm cyc}\!\left(\sum_{i=0}^{n-1} c_iX^i\right)
	=
	\sum_{i=0}^{n-1} c_i^{q^{2n-2i+1}}X^{n-i+1}.
	\]
	Thus
	\begin{equation}\label{eq:schmidt-fixed-hermitian}
		H_f \text{ is Hermitian over } \E/\F
		\quad\Longleftrightarrow\quad
		\Theta_{\rm cyc}(f)=f.
	\end{equation}
	
	We now recall the two constructions of maximum additive Hermitian rank-metric codes given by Schmidt \cite{Schmidt2018Hermitian}.
	
	\begin{theorem}[see \textnormal{\cite[Theorem~4]{Schmidt2018Hermitian}}]
		\label{thm:schmidt-first}
		Assume that \(1\le d\le n-1\) and that \(n\) and \(d\) have opposite parity.
		Set
		$
		t:=\frac{n-d+1}{2}$. Define
		\begin{equation}\label{eq:schmidt-first-skew}
			\mathcal K^{1}_{d}
			:=
			\left\{
			\sum_{j=1}^{t}
			\left(
			\alpha_j^qX^j+\alpha_j^{q^{2n-2j+2}}X^{n-j+1}
			\right)
			+S(X^n-1)
			:\;
			\alpha_1,\ldots,\alpha_t\in \LL
			\right\}
			\subseteq S_0.
		\end{equation}
		Then, under the identification above, \(\mathcal K^{1}_{d}\) is a maximum
		additive Hermitian \(d\)-code in \(\mathcal H_n(\E/\F)\).
	\end{theorem}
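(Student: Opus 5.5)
The plan has three steps: show that $\mathcal K^1_d$ consists of Hermitian forms, count it, and bound the rank of nonzero codewords using the skew-polynomial framework. The bound \eqref{eq:hermitian-bound} then gives maximality, exactly as at the end of the proof of Theorem~\ref{th:new-construction}.

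\textbf{Hermitian and size.} For the Hermitian property I will apply the coefficient criterion \eqref{eq:schmidt-hermitian-coefficients}, equivalently $\Theta_{\rm cyc}(f)=f$ from \eqref{eq:schmidt-fixed-hermitian}, to each summand $\alpha_j^qX^j+\alpha_j^{q^{2n-2j+2}}X^{n-j+1}$. Then $\Theta_{\rm cyc}$ swaps the exponents $j\leftrightarrow n-j+1$. The exponents on $\alpha_j$ should be compatible: on $\LL=\F_{q^{2n}}$ we have $q\cdot q^{2n-2j+1}=q^{2n-2j+2}$, and applying the map twice returns $q^{4n}\equiv1$. This is a routine check; the only care needed is the indexing modulo $n$, including the index $0\equiv n$.

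For the size, opposite parity of $n$ and $d$ makes $t$ an integer. Since $2t=n-d+1\le n$, the exponents $1,\dots,t$ and $n-t+1,\dots,n$ are pairwise distinct modulo $n$. Hence the map $(\alpha_1,\dots,\alpha_t)\mapsto$ codeword is injective and $\F_q$-linear, and
\[
|\mathcal K^1_d|=q^{2nt}=q^{n(n-d+1)},
\]
which is the bound \eqref{eq:hermitian-bound}.

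\textbf{Minimum rank, the main step.} Set $X^n=1$ in $S_0$, so $X^{n-j+1}=X^{-(j-1)}$ there. Right-multiplying by the invertible element $X^{t-1}$ preserves rank and turns a nonzero codeword into the class of a skew polynomial $b$ with exponents in $\{0,1,\dots,2t-1\}$. So $\deg b\le 2t-1=n-d$.

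Theorem~\ref{th:rankpolynomial_rephrased}, applied with $F=Y-1$ and $s=1$, gives
\[
\rk\ge n-\deg\gcrd(b,X^n-1)\ge n-\deg b\ge d.
\]
Equivalently, a $q^2$-linearized polynomial of $q^2$-degree at most $n-d$ has kernel of $\E$-dimension at most $n-d$. Since the code $\mathcal K^1_d$ lives over $\E/\F$ with $\LL$ viewed over $\E$, I need to confirm that this kernel bound transfers to the matrix rank of the form $H_f$. It does, because the radical of $H_f$ is exactly $\ker P_f$, by nondegeneracy of the trace form $(x,y)\mapsto \Tr_{\LL/\E}(xy^q)$.

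I expect the main obstacle to be the index bookkeeping. One must verify that after the shift by $X^{t-1}$ no wrap-around modulo $n$ pushes the degree above $2t-1$. The exponents involved, $-(j-1)+t-1$ and $j+t-1$ for $1\le j\le t$, land in $[0,2t-1]\subseteq[0,n-1]$, so this holds. One must also make sure the correspondence between $S_0$ and $M_n(\E)$ used for ranks is the one induced by $H_f$. Once these are checked, maximality follows from \eqref{eq:hermitian-bound} as in Theorem~\ref{th:new-construction}.
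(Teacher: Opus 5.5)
Your proof is correct: the Hermitian check, the injectivity and size count, the shift by $X^{t-1}$ to a polynomial of degree at most $2t-1=n-d$, and the identification of the radical of $H_f$ with $\ker P_f$ through the nondegenerate trace form are all sound. The paper does not reprove this statement (it only cites Schmidt), but your argument is the standard linearized-polynomial root-counting proof and matches the paper's own proof of Theorem~\ref{th:new-construction} specialized to $F=Y-1$, $s=1$ over $\F_{q^{2n}}/\F_{q^2}$. The radical identification is the extra step needed to transfer the algebra rank to the rank of the Hermitian form.
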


	\begin{theorem}[see \textnormal{\cite[Theorem~5]{Schmidt2018Hermitian}}]
		\label{thm:schmidt-second}
		Assume that \(n\) and \(d\) are odd and that \(1\le d\le n\).
		Set $t:=\frac{n-d}{2}$. Define
		\begin{equation}\label{eq:schmidt-second-skew}
			\mathcal K^{2}_{d}
			:=
			\left\{
			\alpha_0X^{\frac{n+1}2}
			+
			\sum_{j=1}^{t}
			\left(
			\alpha_j^qX^{\frac{n-2j+1}2}
			+
			\alpha_j^{q^{n+2j+1}}X^{\frac{n+2j+1}2}
			\right)
			+S(X^n-1)
			:
			\substack{\alpha_0\in\F_{q^n},\\
			\alpha_1,\ldots,\alpha_t\in \LL}
			\right\}
			\subseteq S_0.
		\end{equation}
		Then, under the identification above, \(\mathcal K^{2}_{d}\) is a maximum
		additive Hermitian \(d\)-code in \(\mathcal H_n(\E/\F)\).
	\end{theorem}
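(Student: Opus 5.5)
The statement has three parts: the elements of \(\mathcal K^{2}_{d}\) are Hermitian, the code has the right cardinality, and every nonzero codeword has rank at least \(d\). I would prove them in that order, working throughout in \(S_0\) and the linearized-polynomial model \(f\mapsto P_f\), and using the identification \eqref{eq:schmidt-fixed-hermitian} of Hermitian forms with fixed points of \(\Theta_{\rm cyc}\).

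\textbf{Hermitian condition.} By \eqref{eq:schmidt-fixed-hermitian} it suffices to check \(\Theta_{\rm cyc}(f)=f\) for every \(f\in\mathcal K^{2}_{d}\), which I would do monomial by monomial using \eqref{eq:schmidt-hermitian-coefficients}.
\begin{itemize}
\item \emph{Paired terms.} The involution \(\Theta_{\rm cyc}\) sends the exponent \(i\) to \(n-i+1\) modulo \(n\), so it interchanges the exponents \(\frac{n-2j+1}{2}\) and \(\frac{n+2j+1}{2}\). One then verifies the exponent identity
\[
(\alpha_j^{q})^{q^{2n-2i+1}}=\alpha_j^{q^{n+2j+1}}
\qquad\text{for } i=\tfrac{n-2j+1}{2},
\]
reducing exponents modulo \(q^{2n}\) because \(\alpha_j\in\LL=\F_{q^{2n}}\). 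The reverse direction of the pairing then follows from \(\Theta_{\rm cyc}^2=\id\).
\item \emph{Middle term.} The exponent \(\frac{n+1}{2}\) is fixed by \(i\mapsto n-i+1\). For it, condition \eqref{eq:schmidt-hermitian-coefficients} becomes \(\alpha_0=\alpha_0^{q^{n}}\), which holds exactly because \(\alpha_0\in\F_{q^n}\).
\end{itemize}
This bookkeeping of Frobenius exponents modulo \(2n\), with the index conventions read modulo \(n\), is the step where I expect the most care to be needed. Conceptually, however, it is only a finite check.

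\textbf{Cardinality.} The \(2t+1\) exponents \(\frac{n-2t+1}{2},\dots,\frac{n+2t+1}{2}\) are distinct and lie in \(\{1,\dots,n\}\), with \(X^n\equiv 1\) in \(S_0\). Hence the map \((\alpha_0,\dots,\alpha_t)\mapsto f\) is injective and additive. The code therefore has size
\[
q^{n}\,(q^{2n})^{t}=q^{n(2t+1)}=q^{n(n-d+1)},
\]
which is the bound \eqref{eq:hermitian-bound}. So once the distance is established, the code is maximum.

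\textbf{Minimum distance.} Because \(\Tr_{\LL/\E}\) is nondegenerate and \(v\mapsto v^q\) is bijective, the rank of \(H_f\) equals the \(\E\)-rank of the \(\E\)-linear map \(P_f\colon\LL\to\LL\). The proof then proceeds in three steps.
\begin{enumerate}
\item Set \(k=\frac{n-2t+1}{2}\). The nonzero coefficients of a codeword \(f\) occupy the consecutive exponents \(k,k+1,\dots,k+2t\).
\item Right-multiply \(f\) by \(X^{-k}\), which is invertible in \(S_0\). At the linearized level this means composing \(P_f\) with the \(\E\)-linear bijection \(u\mapsto u^{q^{-2k}}\), so the rank is unchanged. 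The result is a nonzero \(q^2\)-polynomial of \(q^2\)-degree at most \(2t=n-d\); it is nonzero because \(f\ne0\) and distinct exponents below \(n\) stay distinct.
\item A nonzero \(q^2\)-polynomial of \(q^2\)-degree at most \(n-d\) has at most \(q^{2(n-d)}\) roots. Its kernel, an \(\E\)-subspace, therefore has \(\E\)-dimension at most \(n-d\), and so its rank is at least \(d\).
\end{enumerate}
Combined with the cardinality count, this gives that \(\mathcal K^{2}_{d}\) is a maximum additive Hermitian \(d\)-code.
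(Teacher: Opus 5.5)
Your proof is correct. It is also a complete, self-contained verification, not just a sketch:
\begin{itemize}
\item The Hermitian check via \eqref{eq:schmidt-hermitian-coefficients} works: $i\mapsto n-i+1$ pairs the exponents $\frac{n+1}{2}\pm j$, and at the fixed exponent $\frac{n+1}{2}$ the condition is $\alpha_0^{q^n}=\alpha_0$.
\item The count $q^{n(2t+1)}=q^{n(n-d+1)}$ is right.
\item The rank bound goes through. The left radical of $H_f$ is $\ker P_f$, and right multiplication by $X^{-k}$ turns $P_f$ into a nonzero $q^2$-polynomial of $q^2$-degree at most $n-d$.
\end{itemize}

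One small nit concerns the exponent identity. No reduction is actually needed there, since $2n-2i+2=n+2j+1$ exactly for $i=\frac{n-2j+1}{2}$. In general, Frobenius exponents are reduced modulo $2n$, not modulo $q^{2n}$.

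The paper itself gives no proof of this statement; it simply quotes Schmidt. Your argument is the standard verification for this trace construction. It is exactly the $s=1$ case ($F(Y)=Y-1$) of the shift-then-bound-the-gcrd-degree argument the paper uses for its own codes in Theorem~\ref{th:new-construction}.
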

	
	\subsection{A scalar-field invariant}
	\label{sec:scalar-invariant-schmidt}
	
	We introduce a simple invariant which separates the codes constructed in
	\Cref{th:new-construction} from the trace constructions of Schmidt, whenever
	the latter are available over the same Hermitian space.  The invariant measures
	the largest field of scalars, inside the fixed field of the Hermitian
	involution, which preserves the code.
	
	Let \(\E/\F\) be a quadratic extension of finite fields, and let
	\(\rho\) be the nontrivial element of \(\Gal(\E/\F)\). We use the standard notion of equivalence for Hermitian rank-metric codes.  For
	\(N\in\GL_n(\E)\), \(\mu\in\F^\times\), \(\varphi\in\Aut(\E)\), and
	\(B\in\mathcal H_n(\E/\F)\), the map
	\[
	\Phi_{\mu,N,\varphi,B}:\mathcal H_n(\E/\F)\longrightarrow
	\mathcal H_n(\E/\F)
	\]
	defined by
	\[
	\Phi_{\mu,N,\varphi,B}(A)
	=
	\mu\,N A^\varphi N^*+B
	\]
	preserves the rank distance.  Here \(A^\varphi\) denotes the matrix obtained by
	applying \(\varphi\) entrywise.  
	
	Conversely, every semilinear isometry of the Hermitian space preserving the rank-metric is of this form; see
	\cite{wan1996geometry}. Accordingly, two codes
	\(\mathcal C,\mathcal D\subseteq\mathcal H_n(\E/\F)\) are said to be
	\textbf{Hermitian equivalent} if
	\[
	\mathcal D=\Phi_{\mu,N,\varphi,B}(\mathcal C)
	\]
	for some \(N\in\GL_n(\E)\), \(\mu\in\F^\times\),
	\(\varphi\in\Aut(\E)\), and \(B\in\mathcal H_n(\E/\F)\).  For additive codes
	the translation term may be omitted; equivalently,
	\begin{equation}\label{eq:hermitian-equivalence}
		\mathcal D
		=\Phi_{\mu,N,\varphi}(\mathcal C)=
		\left\{
		\mu\,N A^\varphi N^*
		:
		A\in\mathcal C
		\right\}.
	\end{equation}
	When \(\varphi=\id_{\mathbb E}\), we say that \(\mathcal C\) and
	\(\mathcal D\) are \textbf{linearly Hermitian equivalent}.
	
	The following elementary invariant records the largest subfield of the fixed
	field over which an additive Hermitian code remains linear.
	\begin{definition}
		Let
		$
		\mathcal C\subseteq \mathcal H_n(\E/\F)
		$
		be a nonzero additive code.  We define
		\[
		\mathcal{F}(\mathcal C)
		:=
		\{\lambda\in \F:\lambda\mathcal C\subseteq \mathcal C\}.
		\]
	\end{definition}
	
	\begin{lemma}\label{lem:scalar-field}
		For every nonzero additive code
		$
		\mathcal C\subseteq \mathcal H_n(\E/\F),
		$
		the set \(\mathcal{F}(\mathcal C)\) is a subfield of \(\F\).
	\end{lemma}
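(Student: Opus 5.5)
The plan is to check the subfield axioms for \(\mathcal F(\mathcal C)\) directly. The key observation is that \(\mathcal F(\mathcal C)\) is the set of scalars whose multiplication map preserves \(\mathcal C\); it is therefore the preimage of a subring of \(\End(\mathcal C)\), and a finite subring of a field is automatically a subfield. Throughout we use that \(\mathcal C\) is an additive subgroup of the \(\F\)-vector space \(\mathcal H_n(\E/\F)\), and that scalar multiplication by \(\lambda\in\F\) maps Hermitian matrices to Hermitian matrices, since \(\rho(\lambda)=\lambda\).

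We first verify closure under the ring operations. Clearly \(0\mathcal C=\{0\}\subseteq\mathcal C\), because \(\mathcal C\) contains \(0\), and \(1\mathcal C=\mathcal C\). So \(0,1\in\mathcal F(\mathcal C)\). Now let \(\lambda,\mu\in\mathcal F(\mathcal C)\) and \(A\in\mathcal C\).
\begin{itemize}
\item Since \(\mathcal C\) is additive, \((\lambda-\mu)A=\lambda A-\mu A\in\mathcal C\), so \(\lambda-\mu\in\mathcal F(\mathcal C)\).
\item Also \((\lambda\mu)A=\lambda(\mu A)\in\lambda\mathcal C\subseteq\mathcal C\), so \(\lambda\mu\in\mathcal F(\mathcal C)\).
\end{itemize}
Hence \(\mathcal F(\mathcal C)\) is a subring of \(\F\) containing \(1\).

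It remains to obtain inverses, and this is the only step needing an argument beyond the definitions. Take \(\lambda\in\mathcal F(\mathcal C)\) with \(\lambda\neq0\). Since \(\F\) is finite, \(\lambda\) has finite multiplicative order \(k\), so \(\lambda^{-1}=\lambda^{k-1}\). This power lies in \(\mathcal F(\mathcal C)\) by closure under products, where for \(k=1\) we use \(1\in\mathcal F(\mathcal C)\).

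An alternative route avoids the order argument. The map \(A\mapsto\lambda A\) is injective on \(\mathcal C\) and sends the finite set \(\mathcal C\) into itself, so it is a bijection, giving \(\lambda\mathcal C=\mathcal C\) and hence \(\lambda^{-1}\mathcal C=\mathcal C\). Either way, \(\mathcal F(\mathcal C)\) is a subfield of \(\F\).

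The nonzero hypothesis on \(\mathcal C\) is not needed for the argument itself; it only ensures the invariant is meaningful.
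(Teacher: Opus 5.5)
Your proof is correct and follows essentially the paper's argument: $0,1\in\mathcal F(\mathcal C)$ and closure under sums, negatives and products come straight from additivity of $\mathcal C$, and inverses come from finiteness. Your ``alternative route'' ($\lambda\mathcal C=\mathcal C$ by injectivity on a finite set, hence $\lambda^{-1}\mathcal C=\mathcal C$) is exactly the paper's step, while your primary route $\lambda^{-1}=\lambda^{k-1}$ is an equally valid minor variant.
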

	
	\begin{proof}
		Clearly \(0,1\in \mathcal{F}(\mathcal C)\).  If
		\(\lambda,\eta\in \mathcal{F}(\mathcal C)\), then, since \(\mathcal C\) is
		additive,
		\[
		(\lambda+\eta)\mathcal C\subseteq \mathcal C,
		\qquad
		\lambda\eta\,\mathcal C\subseteq \mathcal C.
		\]
		Moreover, \(-\lambda\in\mathcal F(\mathcal C)\) whenever \(\lambda\in\mathcal F(\mathcal C)\).  If \(\lambda\neq0\), then the inclusion
		\(\lambda\mathcal C\subseteq \mathcal C\) is an equality because \(\mathcal C\) is finite, and hence
		\(\lambda^{-1}\mathcal C=\mathcal C\).  Thus
		\(\mathcal{F}(\mathcal C)\) is a subfield of \(\F\).
	\end{proof}
	
	\begin{proposition}\label{prop:scalar-field-invariant}
		Let
		$
		\mathcal C,\mathcal D\subseteq \mathcal H_n(\E/\F)
		$
		be nonzero additive codes.  Suppose that there exist
		$
		N\in\GL_n(\E),
		\mu\in \F^\times,$ and $
		\varphi\in\Aut(\E),
		$
		such that
		\[
		\mathcal D
		=
		\Phi_{\mu,N,\varphi}(\mathcal C).
		\]
		Then
		\[
		\mathcal F(\mathcal D)
		=
		\varphi\bigl(\mathcal F(\mathcal C)\bigr).
		\]
		In particular, the cardinality of \(\mathcal F(\mathcal C)\) is invariant
		under Hermitian equivalence.
	\end{proposition}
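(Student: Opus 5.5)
The plan is to prove the two inclusions \(\varphi(\mathcal F(\mathcal C))\subseteq\mathcal F(\mathcal D)\) and \(\mathcal F(\mathcal D)\subseteq\varphi(\mathcal F(\mathcal C))\). Both follow from the behaviour of \(\Phi:=\Phi_{\mu,N,\varphi}\) with respect to scalars in \(\F\).

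First I record a preliminary fact. Since \(\varphi\in\Aut(\E)\) and \(\E\) is a finite field, \(\varphi\) is a power of the Frobenius. It therefore commutes with \(\rho\) and maps \(\F\), the fixed field of \(\rho\), onto itself. This also shows that \(\Phi\) maps \(\mathcal H_n(\E/\F)\) into itself: we have \((A^\varphi)^*=(A^*)^\varphi\), so \((\mu NA^\varphi N^*)^*=\mu NA^\varphi N^*\) because \(\mu\in\F\).

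The key identity is the one displayed in the introduction. For \(\lambda\in\F\) and \(A\in\mathcal H_n(\E/\F)\), the element \(\varphi^{-1}(\lambda)\) lies in \(\F\), and \((\varphi^{-1}(\lambda)A)^\varphi=\lambda A^\varphi\). Hence
\[
\Phi\bigl(\varphi^{-1}(\lambda)A\bigr)=\mu N\lambda A^\varphi N^*=\lambda\,\Phi(A).
\]
For the first inclusion, take \(\lambda\in\varphi(\mathcal F(\mathcal C))\), so that \(\varphi^{-1}(\lambda)\in\mathcal F(\mathcal C)\). Then
\[
\lambda\mathcal D=\lambda\Phi(\mathcal C)=\Phi\bigl(\varphi^{-1}(\lambda)\mathcal C\bigr)\subseteq\Phi(\mathcal C)=\mathcal D.
\]

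For the reverse inclusion, I observe that \(\Phi\) is invertible, with inverse \(B\mapsto\mu^{-1}N^{-1}B^{\varphi^{-1}}(N^{-1})^*\). This inverse is again of the form \(\Phi_{\mu',N',\varphi^{-1}}\), where \(\mu'=\varphi^{-1}(\mu^{-1})\in\F^\times\) and \(N'=(N^{-1})^{\varphi^{-1}}\). To confirm this, I will check that applying \(\varphi^{-1}\) entrywise commutes with \({}^*\). Since \(\mathcal C=\Phi^{-1}(\mathcal D)\), the first inclusion applied to this inverse map gives \(\varphi^{-1}(\mathcal F(\mathcal D))\subseteq\mathcal F(\mathcal C)\). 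Applying \(\varphi\) yields the reverse inclusion, and so \(\mathcal F(\mathcal D)=\varphi(\mathcal F(\mathcal C))\).

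The cardinality statement follows immediately, because \(\varphi\) is a bijection. There is no real obstacle here; the only point that needs care is verifying that \(\varphi\) commutes with \(\rho\) and preserves \(\F\), and this holds because \(\Aut(\E)\) is cyclic and hence abelian.
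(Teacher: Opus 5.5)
Your argument is correct and rests on the same identity $\lambda\,\Phi_{\mu,N,\varphi}(A)=\Phi_{\mu,N,\varphi}(\varphi^{-1}(\lambda)A)$ as the paper. The paper gets both inclusions at once from this identity, implicitly using that $\Phi_{\mu,N,\varphi}$ is injective, whereas you get the reverse inclusion by applying the forward one to the inverse map. There is one harmless slip: that inverse is $B\mapsto(\mu^{-1}N^{-1}B(N^{-1})^*)^{\varphi^{-1}}$, not $\mu^{-1}N^{-1}B^{\varphi^{-1}}(N^{-1})^*$, although the parameters $\mu'=\varphi^{-1}(\mu^{-1})$ and $N'=(N^{-1})^{\varphi^{-1}}$ you then state are exactly the right ones.
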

	
	\begin{proof}
		Let \(\lambda\in \F\).  For every \(A\in\mathcal C\), we
		have
		\[
		\lambda \Phi_{\mu,N,\varphi}(A)
		=
		\mu\,N\bigl((\varphi^{-1}(\lambda)A)^\varphi\bigr)N^*
		=
		\Phi_{\mu,N,\varphi}\bigl(\varphi^{-1}(\lambda)A\bigr).
		\]
		Thus
		\[
		\lambda\mathcal D\subseteq\mathcal D
		\quad\Longleftrightarrow\quad
		\varphi^{-1}(\lambda)\mathcal C\subseteq\mathcal C.
		\]
		Equivalently,
		\[
		\lambda\in\mathcal F(\mathcal D)
		\quad\Longleftrightarrow\quad
		\varphi^{-1}(\lambda)\in\mathcal F(\mathcal C).
		\]
		Therefore
		$
		\mathcal F(\mathcal D)
		=
		\varphi\bigl(\mathcal F(\mathcal C)\bigr),
		$ proving the assertion.
	\end{proof}
	
	We first compute the invariant for Schmidt's constructions. Here, \(\E=\F_{q^2}\), \(\F=\F_q\),
	\(\LL=\F_{q^{2n}}\), \(S=\LL[X;\sigma]\), with
	\(\sigma(\alpha)=\alpha^{q^2}\), and \(S_0=S/S(X^n-1)\).  Via the
	identification described above, the codes
	\[
	\mathcal K^1_d\subseteq S_0 \cong M_n(\F_{q^2})
	\qquad\text{and}\qquad
	\mathcal K^2_d\subseteq S_0 \cong M_n(\F_{q^2})
	\]
	defined in \eqref{eq:schmidt-first-skew} and
	\eqref{eq:schmidt-second-skew} are regarded as additive codes in
	\(\mathcal H_n(\F_{q^2}/\fq)\).
	
	\begin{proposition}\label{prop:schmidt-scalar-field}
		Let \(\mathcal C\subseteq \mathcal H_n(\F_{q^2}/\fq)\) be either the code
		corresponding to \(\mathcal K^1_d\) in \eqref{eq:schmidt-first-skew}, when
		\(1\le d\le n-1\) and \(n,d\) have opposite parity, or the code corresponding
		to \(\mathcal K^2_d\) in \eqref{eq:schmidt-second-skew}, when \(n,d\) are odd.
		Then
		\[
		\mathcal F(\mathcal C)=\fq.
		\]
	\end{proposition}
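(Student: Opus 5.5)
Since \(\mathcal F(\mathcal C)\) is by definition a subset of \(\F=\fq\), and by Lemma~\ref{lem:scalar-field} it is a subfield, it suffices to show that \(\lambda\mathcal C\subseteq\mathcal C\) for every \(\lambda\in\fq\). The plan is to work entirely inside \(S_0\) via the identification \(f\mapsto H_f\). We first check that scalar multiplication of Hermitian matrices by \(\lambda\in\fq\) corresponds to left multiplication \(f\mapsto\lambda f\) in \(S_0\). Indeed,
\[
H_{\lambda f}(u,v)=\Tr_{\LL/\E}\bigl(\lambda P_f(u)v^q\bigr)=\lambda H_f(u,v),
\]
because \(P_{\lambda f}=\lambda P_f\) and \(\Tr_{\LL/\E}\) is \(\E\)-linear, hence \(\fq\)-linear. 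Consequently \(M_{\mathcal B}(H_{\lambda f})=\lambda M_{\mathcal B}(H_f)\), and it remains to show that the skew-polynomial sets \(\mathcal K^1_d\) and \(\mathcal K^2_d\) are closed under \(f\mapsto\lambda f\) for \(\lambda\in\fq\).

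For \(\mathcal K^1_d\), take a typical element \(\sum_{j}\bigl(\alpha_j^qX^j+\alpha_j^{q^{2n-2j+2}}X^{n-j+1}\bigr)\) and multiply it on the left by \(\lambda\in\fq\). We replace each \(\alpha_j\) by \(\lambda\alpha_j\). Since \(\lambda^{q^k}=\lambda\) for every \(k\), we get
\[
(\lambda\alpha_j)^q=\lambda\alpha_j^q,\qquad (\lambda\alpha_j)^{q^{2n-2j+2}}=\lambda\alpha_j^{q^{2n-2j+2}},
\]
so \(\lambda f\) is again of the required form with parameters \(\lambda\alpha_j\in\LL\).

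For \(\mathcal K^2_d\), we argue the same way. We replace \(\alpha_0\in\F_{q^n}\) by \(\lambda\alpha_0\), which stays in \(\F_{q^n}\) because \(\fq\subseteq\F_{q^n}\), and each \(\alpha_j\) by \(\lambda\alpha_j\). The Frobenius exponents \(q\) and \(q^{n+2j+1}\) fix \(\lambda\), so the form is preserved.

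Hence \(\fq\subseteq\mathcal F(\mathcal C)\subseteq\fq\), which gives the claimed equality. The only point requiring care is the compatibility of the identification with scalars: left multiplication by \(\lambda\) in \(S_0\) must match matrix scaling. This holds because \(\lambda\) is central in \(S_0\) (it lies in \(\fq\), so \(X\lambda=\lambda X\)) and the trace is \(\E\)-linear. No deeper obstacle is expected.
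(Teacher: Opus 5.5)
Your proposal is correct and follows the paper's proof: both reduce to checking that left multiplication by \(\lambda\in\fq\) maps \(\mathcal K^1_d\) and \(\mathcal K^2_d\) into themselves, using \(\lambda^{q^k}=\lambda\) to absorb \(\lambda\) into the parameters \(\alpha_j\). Your extra check that \(H_{\lambda f}=\lambda H_f\) makes explicit the compatibility between skew-polynomial scaling and matrix scaling, which the paper leaves implicit.
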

	
	\begin{proof}
		By definition, \(\mathcal F(\mathcal C)\subseteq \fq\).  Thus, it is enough to prove that \(\fq\) acts on the corresponding code in \(S_0\). Assume first that \(\mathcal C\) corresponds to
		\(\mathcal K^1_d\).  For \(\lambda\in \F_q\), we have
		\[
		\lambda
		\left(
		a_j^qX^j+a_j^{q^{2n-2j+2}}X^{n-j+1}
		\right)
		=
		(\lambda a_j)^qX^j
		+
		(\lambda a_j)^{q^{2n-2j+2}}X^{n-j+1},
		\]
		since \(\lambda^q=\lambda\). This shows that
		$
		\lambda\mathcal K^1_d\subseteq \mathcal K^1_d$. The same calculation gives
		$
		\lambda\mathcal K^2_d\subseteq \mathcal K^2_d.
		$ In either case \(\fq\subseteq\mathcal F(\mathcal C)\), and the reverse
		inclusion was already observed. The assertion follows.
	\end{proof}

	We now consider the family constructed in \Cref{th:new-construction}.  We use
	the notation of \Cref{sec:newconstruction}.  Thus
	$
	\LL=\F_{q^n},$ $\K=\F_q,$ and \(F\in\F_q[Y]\) is a monic, irreducible, self-reciprocal polynomial of even
	degree \(s\). We set
	\(R_F:=R/RF(X^n)\). The center of \(R_F\) is
	\[
	\E=\E_F:=Z(R_F)\simeq \F_{q^s}.
	\]
	The involution \(\Theta\) induces on \(\E_F\) an involution \(\varrho\), whose
	fixed field is
	\[
	\F:=\E_F^\varrho\simeq \F_{q^{s/2}}.
	\]
	Thus the ambient Hermitian space associated with \(F\) is
	$
	\mathcal H_n(\E_F/\E_F^\varrho)$.
	
	\begin{proposition}\label{prop:our-scalar-field-proper}
		Assume that \(2\le d\le n\), and put
		$
		\ell:=\frac{s(n-d+1)-2}{2}$.
		Let
		\[
		\C:=\newH_{\ell}(F)
		=
		\left\{
		\alpha_0+
		\sum_{i=1}^{\ell}
		\left(
		\sigma^{-i}(\alpha_i)X^{-i}+\alpha_i X^i
		\right)
		+RF(X^n)
		:
		\alpha_0,\ldots,\alpha_\ell\in\LL
		\right\}\subseteq \mathcal{H}_n(\E/\F).
		\]
		Then
		\[
		\mathcal F(\C)\subsetneq \F
		\]
		whenever \(s>2\).  Moreover, if
		$
		\ell<\frac{ns}{4},$
		then
		\[
		\mathcal F(\C)=\F_q.
		\]
	\end{proposition}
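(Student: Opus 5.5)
First I will show that \(\F_q\subseteq\mathcal F(\C)\). Every \(\lambda\in\F_q\) is central in \(R\) and fixed by \(\sigma\). Hence
\[
\lambda\bigl(\sigma^{-i}(\alpha_i)X^{-i}+\alpha_iX^i\bigr)=\sigma^{-i}(\lambda\alpha_i)X^{-i}+(\lambda\alpha_i)X^i ,
\]
so \(\lambda\C=\C\). The statement "\(\mathcal F(\C)\subsetneq\F\) for \(s>2\)" then follows from the second statement when \(\ell<ns/4\). For general \(\ell\) I need an element of \(\F\setminus\mathcal F(\C)\). Every proper subfield of \(\F\cong\F_{q^{s/2}}\) lies in a maximal proper subfield, so it suffices to find one \(\lambda\in\F\) with \(\lambda\C\not\subseteq\C\).

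The natural candidate is \(w:=X^n+X^{-n}\), which generates \(\F\) over \(\F_q\) by Proposition~\ref{prop:rho-on-center}. I claim that \(w\) does not preserve \(\C\) whenever \(w\notin\F_q\). Write every element in the symmetric basis \(\mathcal B=\{X^{-m},\dots,X^{m-1}\}\) with \(m=ns/2\).

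Take \(\alpha_\ell X^\ell+\sigma^{-\ell}(\alpha_\ell)X^{-\ell}\in\C\) with \(\alpha_\ell\neq0\). Multiplying by \(w\) produces the monomials \(X^{\ell+n}\) and \(X^{-\ell-n}\) with nonzero coefficients \(\alpha_\ell\) and \(\sigma^{-\ell}(\alpha_\ell)\). If \(\ell+n\le m-1\), these exponents lie in the range of \(\mathcal B\) and exceed \(\ell\). By uniqueness of the expansion \eqref{eq:symmetric-expansion}, the product then lies outside \(\C\).

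If \(\ell+n\ge m\), the high power \(X^{\ell+n}\) must be reduced using \(F(X^n)=0\), that is,
\[
X^{ns/2}=-\sum_{j\ne s/2}F_jX^{n(j-s/2)} .
\]
Since \(F\) is self-reciprocal with \(F_0=1\), this is a symmetric relation. The argument becomes more delicate here, and I would instead use a different element \(\lambda=z(w)\) with small "degree". Concretely, let \(k\) be minimal with \(\lambda\) expressible as \(\sum_{|j|\le k}c_jX^{nj}\), \(c_k\neq0\), and \(k\ge1\); the case \(k=1\) is \(w\) itself. Every \(\lambda\in\F\setminus\F_q\) has \(k\ge1\) because \(\F\) is spanned by the \(X^{nj}+X^{-nj}\) for \(0\le j\le s/4\)-ish. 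I would then compare the leading exponent \(\ell+nk\) of \(\lambda(\alpha_\ell X^\ell+\cdots)\) with \(\ell\).

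For the second statement (\(\ell<ns/4\)), let \(\lambda\in\F\setminus\F_q\). I expand \(\lambda=\sum_{j}c_jX^{nj}\) in \(\mathcal B\) with \(c_j\in\F_q\), using that \(\E_F=\F_q[X^n]\), and the expansion is symmetric, \(c_{-j}=c_j\). Since \(\lambda\notin\F_q\), some \(c_j\neq0\) with \(1\le j\le s/4\); take \(j\) maximal. Then \(nj\le ns/4\), hence
\[
\ell+nj<ns/2=m .
\]
So \(\lambda\cdot1\) (using \(\alpha_0=1\), all other \(\alpha_i=0\)) already has a nonzero coefficient at \(X^{nj}\) with \(nj>\ell\) when \(nj>\ell\). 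In the remaining case \(nj\le\ell\), I use \(\lambda\cdot\alpha X^\ell\) instead, which gives the exponent \(\ell+nj\in(\ell,m)\). Both contradict membership in \(\C\), so \(\mathcal F(\C)=\F_q\).

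For the general \(s>2\) claim, I would take \(\lambda=w\) with \(j=1\) and use the element \(X^{a}+\ldots\) for a suitable \(0\le a\le\ell\) such that \(a+n\) or \(a-n\) lands strictly outside \([-\ell,\ell]\) but inside \([-m,m-1]\). This works as long as \(\ell<m-1\), which holds since \(d\ge2\) gives \(\ell\le m-s/2-1\), and \(w\notin\F_q\) since \(s>2\).

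The main obstacle is controlling the reduction modulo \(F(X^n)\) so that no cancellation occurs at the chosen exponent. The step I expect to need care is verifying that the expansion of \(\lambda\) in powers \(X^{nj}\), \(|j|\le s/2\), is the one given by the basis \(\mathcal B\), together with the boundary exponent \(-m\).
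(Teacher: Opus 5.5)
Your strategy is the paper's: for the first claim you multiply a suitable codeword \(X^a+X^{-a}\) by \(w=X^n+X^{-n}\), and for the second you take the top index \(j\) in the symmetric expansion of \(\lambda\in\F\setminus\F_q\) and test \(\lambda\) against the codewords \(1\) and \(X^\ell+X^{-\ell}\).

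\textbf{A false step in the second claim.} You assert that the maximal \(j\) with \(c_j\neq0\) satisfies \(j\le s/4\), and you deduce \(\ell+nj<m\) from \(nj\le ns/4\). This is wrong. Since \([\F:\F_q]=s/2\), a general \(\lambda\in\F\) expands as \(c_0+\sum_{j=1}^{s/2-1}c_j(X^{nj}+X^{-nj})\), so the top index can be \(s/2-1\). For example, take \(n=3\), \(s=8\), \(d=3\). Then \(\ell=3<ns/4\), but \(\lambda=X^{9}+X^{-9}\) gives \(\ell+nj=12=m\).

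This expansion also settles your worry about the boundary exponent \(-m\). Write \(\lambda=\sum_{h=0}^{s/2-1}\lambda_h w^h\), as the paper does; then all exponents lie in \([-(m-n),m-n]\).

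Your case split still works once it is justified correctly:
\begin{itemize}
\item If \(nj>\ell\), you only need \(nj\le m-n<m\) to conclude \(\lambda=\lambda\cdot 1\notin\C\).
\item If \(nj\le\ell\), then \(\ell<\ell+nj\le 2\ell<m\) because \(\ell<m/2\). So every exponent of \(\lambda\bigl(\alpha X^\ell+\sigma^{-\ell}(\alpha)X^{-\ell}\bigr)\) lies in \([-m+1,m-1]\), and no reduction modulo \(F(X^n)\) occurs. By maximality of \(j\), the coefficient of \(X^{\ell+nj}\) is exactly \(c_j\alpha\neq0\).
\end{itemize}
With this correction your argument is exactly the paper's; the paper phrases the first case as ``\(\lambda\in\C\) forces \(tn\le\ell\)''.

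\textbf{The first claim.} Drop the detour through an element \(z(w)\) of minimal degree. Every \(\lambda\in\F\setminus\F_q\) has \(k\ge1\), so \(\ell+nk\ge\ell+n\ge m\) in precisely the case the detour was meant to rescue. What works is changing the codeword rather than \(\lambda\), as in your last paragraph.

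Make that choice explicit: take \(a=\min\{\ell,m-n-1\}\).
\begin{itemize}
\item Since \(s\ge4\) gives \(\ell\ge1\) and \(m\ge 2n\), you get \(a\ge1\).
\item \(\ell<a+n\le m-1\), because \(d\ge2\) gives \(\ell\le m-s/2-1<m-1\).
\end{itemize}
Then \(w(X^a+X^{-a})=X^{a+n}+X^{a-n}+X^{n-a}+X^{-a-n}\) has all exponents in \([-m+1,m-1]\) and coefficient \(1\) at \(X^{a+n}\notin[-\ell,\ell]\). Hence it is not in \(\C\). This matches the paper's choice \(i=\ell\) or \(i=m-n-1\). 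The paper uses the codeword \(1\) when \(\ell<n\), a case your choice of \(a\) also covers.
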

	
	\begin{proof}
		We have $\mathcal{F}(\C)=\{\lambda \in \F \colon \lambda \C \subseteq \C\}$. Every element \(\lambda\in\F\) has a unique expression
		\begin{equation}\label{eq:fixed-field-eta-expansion}
			\lambda
			=
			\sum_{h=0}^{s/2-1}\lambda_h(X^{-n}+X^n)^h+RF(X^n),
		\end{equation}
		with $\lambda_h\in\F_q$.
		Let $m:=\frac{ns}{2}$. By \Cref{th:explicit-hermitian-elements}, every element of \(\Fix(\Theta)\)
		has a unique symmetric expansion with exponents in the range $[-m+1,m-1]$. In this expansion, the elements of \(\C=\newH_\ell(F)\) are precisely
		those fixed elements whose symmetric support is contained in the interval $[-\ell,\ell]$. We first prove that \(\mathcal F(\C)\subsetneq \F\) when \(s>2\).  Set
		\[
		\eta:=X^{-n}+X^n+RF(X^n)\in\F.
		\]
		It is enough to show that \(\eta\mathcal C\nsubseteq\mathcal C\).
		
		If \(\ell<n\), then \(1+RF(X^n)\in\C\), whereas
		\[
		\eta(1+RF(X^n))=X^{-n}+X^n+RF(X^n)
		\]
		has symmetric support \(\{-n,n\}\), which is not contained in
		\([-\ell,\ell]\).  Hence \(\eta\notin\mathcal F(\C)\).  Assume now that \(\ell\ge n\).  We claim that there exists an integer \(i\)
		such that
		\[
		0<i\le \ell,
		\qquad
		\ell<i+n<m.
		\]
		Indeed, if \(\ell<m-n\), we may take \(i=\ell\).  If \(\ell\ge m-n\), we take
		\[
		i:=m-n-1.
		\]
		This integer is positive since \(s>2\) and \(n\ge2\). Moreover,
		\(i\le\ell\), and
		\[
		i+n=m-1.
		\]
		On the other hand, since \(2\le d\le n\), we have
		\[
		\ell=\frac{s(n-d+1)-2}{2}
		\le
		\frac{s(n-1)-2}{2}
		=
		m-\frac{s}{2}-1
		<
		m-1.
		\]
		Thus \(\ell<i+n<m\), as required. Now
		\[
		\overline{u}_i:=X^i+X^{-i}+RF(X^n)
		\]
		belongs to \(\C\).  Multiplying by \(X^n+X^{-n}+RF(X^n)\), we obtain
		\[
		X^{i+n}+X^{i-n}+X^{n-i}+X^{-i-n}+RF(X^n).
		\]
		The exponent \(i+n\) satisfies
		$
		\ell<i+n<m$. Therefore, $(X^{-n}+X^n)\overline{u}_i \notin \C$ and consequently
		\[
		\mathcal F(\C)\subsetneq \F
		\]
		for \(s>2\).
		
		We now assume that
		$
		\ell<\frac{ns}{4}=\frac m2$
		and prove that
		$
		\mathcal F(\C)=\F_q.
		$
		The inclusion \(\F_q\subseteq\mathcal F(\C)\) is immediate, since \(\C\) is an
		\(\F_q\)-subspace. Conversely, let
		\[
		\lambda\in\mathcal F(\C)\subseteq\F,
		\]
		and write \(\lambda\) as in \eqref{eq:fixed-field-eta-expansion}.  Let \(t\)
		be the largest index such that \(\lambda_t\neq0\).  We prove that \(t=0\).
		
		Suppose, to the contrary, that \(t>0\).  Since \(1+RF(X^n)\in\C\) and
		\(\lambda\C\subseteq\C\), we have
		\(\lambda\in\C\).
		In the expansion
		\[
		\lambda=\sum_{h=0}^{s/2-1}\lambda_h(X^n+X^{-n})^h+RF(X^n),
		\]
		the highest positive exponent is \(tn\), and its coefficient is
		\(\lambda_t\neq0\).  Thus, since \(\lambda\in\C\), we get
		$
		tn\le \ell$. Now consider
		\[
		\overline{u}_\ell:=X^\ell+X^{-\ell}+RF(X^n)\in\C.
		\]
		Since \(\lambda\in\mathcal F(\C)\), we must have
		$
		\lambda \overline{u}_\ell\in\C$.
		However, the product \(\lambda \overline{u}_\ell\) has a nonzero component at the exponent $
		\ell+tn$. Since \(t>0\),
		we have
		$
		\ell+tn>\ell$.
		Moreover, using \(tn\le\ell\) and \(\ell<m/2\), we get
		\[
		\ell+tn\le 2\ell<m.
		\]
		Thus \(\ell+tn\) lies in the range $[-m+1,m-1]$ but outside the interval \([-\ell,\ell]\).  This contradicts \(\lambda \overline{u}_\ell\in\C\).  Hence
		\(t=0\), and therefore \(\lambda\in\F_q\). This proves the claim.
	\end{proof}

	We can now separate the new family from Schmidt's trace constructions.
	
	\begin{theorem}\label{thm:not-equivalent-to-schmidt}
		Assume that \(s>2\) and \(2\le d\le n\). Let
		\[
		\C=\newH_{\ell}(F)
		\subseteq \mathcal H_n(\E/\F),
		\qquad
		\ell=\frac{s(n-d+1)-2}{2},
		\]
		be the maximum additive Hermitian \(d\)-code constructed in
		\Cref{th:new-construction}, where
		$\E=\E_F \cong \F_{q^s},$
		$\F=\E_F^\varrho=\F_{q^{s/2}}$. 
		Let
		\[
		\mathcal C'\subseteq \mathcal H_n(\E/\F)
		\]
		be a Schmidt trace construction from Theorem~\ref{thm:schmidt-first} or
		Theorem~\ref{thm:schmidt-second}, whichever applies to the parameters \(n\)
		and \(d\), realized in the Hermitian space \(\mathcal H_n(\E/\F)\). Then
		\(\C\) and \(\C'\) are not Hermitian equivalent.
	\end{theorem}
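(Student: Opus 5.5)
The plan is to use the scalar-field invariant of Proposition~\ref{prop:scalar-field-invariant}: a Hermitian equivalence preserves $|\mathcal F(\cdot)|$, so it is enough to show that $|\mathcal F(\C)|<|\mathcal F(\C')|$.

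First, I will compute $\mathcal F(\C')$ for the Schmidt code realized in $\mathcal H_n(\E/\F)$ with $\E\cong\F_{q^s}$ and $\F\cong\F_{q^{s/2}}$. Proposition~\ref{prop:schmidt-scalar-field} was stated over $\F_{q^2}/\F_q$. I will apply it with the base field $q$ replaced by $Q:=q^{s/2}$, which is legitimate because Theorems~\ref{thm:schmidt-first} and~\ref{thm:schmidt-second} hold for every prime power. Then $\E=\F_{Q^2}$, $\F=\F_Q$, and the calculation in that proof gives $\mathcal F(\C')=\F_Q=\F$: the only property used there is $\lambda^{Q}=\lambda$ for $\lambda\in\F_Q$, so the coefficient pattern in \eqref{eq:schmidt-first-skew} or \eqref{eq:schmidt-second-skew} is preserved. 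The phrase ``realized in $\mathcal H_n(\E/\F)$'' only involves a choice of field isomorphism and of basis. By Proposition~\ref{prop:scalar-field-invariant}, such a choice does not affect the cardinality of the scalar field, since changing the identification is itself a Hermitian equivalence, possibly twisted by a field automorphism. Hence $|\mathcal F(\C')|=|\F|=q^{s/2}$.

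Second, since $s>2$ and $2\le d\le n$, Proposition~\ref{prop:our-scalar-field-proper} gives $\mathcal F(\C)\subsetneq\F$. Therefore $|\mathcal F(\C)|<q^{s/2}$.

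Finally, suppose $\C'=\Phi_{\mu,N,\varphi}(\C)$ for some $\mu,N,\varphi$. Proposition~\ref{prop:scalar-field-invariant} would give $\mathcal F(\C')=\varphi(\mathcal F(\C))$, and $\varphi$ is a bijection, so the two scalar fields would have equal cardinality. This contradicts the two previous steps, so $\C$ and $\C'$ are not Hermitian equivalent.

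The only delicate point is the first step. One must check that transporting Schmidt's code into the specific space $\mathcal H_n(\E_F/\E_F^\varrho)$ keeps its scalar field equal to all of $\F$, and not merely to an isomorphic copy. I will handle this by noting that any two identifications $\F_{Q^2}\cong\E_F$ differ by a field automorphism compatible with conjugation. A field automorphism maps $\F$ onto itself, so the scalar field is the whole $\F$ in either identification.
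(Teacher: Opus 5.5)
Your proposal is correct and follows the paper's proof. Both arguments get $\mathcal F(\mathcal C')=\F$ from Proposition~\ref{prop:schmidt-scalar-field} with $q$ replaced by $q^{s/2}$, get $\mathcal F(\C)\subsetneq\F$ from Proposition~\ref{prop:our-scalar-field-proper}, and conclude by the invariance of $|\mathcal F(\cdot)|$ in Proposition~\ref{prop:scalar-field-invariant}; your remarks on the choice of identification $\F_{Q^2}\cong\E_F$ and of basis only make explicit what the paper leaves implicit.
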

	
	\begin{proof}
		Apply Proposition~\ref{prop:schmidt-scalar-field} with the base field there replaced by \(\F\) (equivalently, with \(q\) there replaced by \(q^{s/2}\)).  The code \(\mathcal C'\) then satisfies
		\[
		\mathcal F(\mathcal C')=\F.\]
		On the other hand, by Proposition~\ref{prop:our-scalar-field-proper}, the code
		\(\C\) satisfies
		\[
		\mathcal F(\C)\subsetneq \F.
		\]
		Since the cardinality of \(\mathcal F(\,\cdot\,)\) is invariant under
		Hermitian equivalence by \Cref{prop:scalar-field-invariant}, the two codes
		cannot be Hermitian equivalent.
	\end{proof}
	
	\begin{remark}
		The restriction \(s>2\) is precisely what makes the scalar-field invariant
		effective for the separation argument above.  Indeed, when \(s=2\), one has
		\[
		\E_F\simeq \F_{q^2}
		\qquad\text{and}\qquad
		\F\simeq \F_q,
		\]
		independently of the choice of \(F\).  Hence this invariant coincides with the
		corresponding invariant of the classical trace construction and cannot, by
		itself, distinguish the quadratic members of our family from the previously
		known construction.
		
		This limitation is only for the invariant, not for the construction.
		The case \(s=2\) is treated by a different argument in
		Subsection~\ref{subsec:enumerative-consequences}, where we use the equivalence
		criterion to prove that, in the range \(n/2<d<n\), the codes
		\[
		\newH_{n-d}(F),
		\qquad
		F\in\mathcal I^{\mathrm{sr}}_{q,2},
		\]
		give rise, under full Hermitian equivalence, to at least
		\[
		\left\lceil\frac{L_2(q)}{2h}\right\rceil
		\]
		pairwise inequivalent codes, where \(q=p^h\); see
		Corollary~\ref{cor:enumerative-Hermitian-combined}.  Thus, already in the
		quadratic case, the construction produces many new equivalence classes,
		although the scalar-field invariant alone does not separate them from the
		classical construction.
		
		Moreover, the case \(d=1\) is excluded from the separation statement because
		then
		\[
		\ell=\frac{sn-2}{2},
		\]
		and the construction gives the whole Hermitian space.
	\end{remark}

	\section{Lyapunov maps and structural invariants}
	\label{sec:lyapunov-invariants}
	
	In this section, we introduce Lyapunov maps as structural tools for the
	central simple algebra \(R_F\) endowed with the involution of the second kind
	\(\Theta\).  For any additive Hermitian code in this setting, these maps give
	rise to an
	additive subgroup consisting of all matrices whose associated Lyapunov
	operators stabilize the code. We prove that this subgroup is invariant under
	Hermitian semilinear equivalence. We then compute it explicitly for the new
	family of codes constructed in the previous section. The resulting formula
	provides a key structural invariant for the equivalence analysis carried out
	in the next section.
	
	\subsection{The Lyapunov invariant}
	
	Let \(\E/\F\) be a quadratic extension of finite fields, and let
	\(\varrho\) be the nontrivial element of \(\Gal(\E/\F)\). Consider the space of Hermitian matrices 
	$
	\mathcal H_n(\E/\F)$.

	For \(B\in M_n(\E)\), define the \(\F\)-linear map
	\[
	\Delta_B:M_n(\E)\longrightarrow M_n(\E),
	\qquad
	\Delta_B(A):=BA+AB^*,
	\]
	where \(B^*:=(B^\varrho)^\top\), consistently with the notation of Section~\ref{sec:preliminaries}.
	
	If \(A\in\mathcal H_n(\E/\F)\), then
	\[
	\Delta_B(A)^*
	=
	(BA+AB^*)^*
	=
	A^*B^*+BA^*
	=
	AB^*+BA
	=
	\Delta_B(A).
	\]
	Thus \(\Delta_B\) maps \(\mathcal H_n(\E/\F)\) into itself.  This is the
	Hermitian analogue of the classical Lyapunov map
	$
	A\longmapsto BA+AB^*,$
	in the context of algebras with involution of the second kind.
	
	\begin{definition}\label{def:lyapunov-invariant}
		Let
		$
		\mathcal C\subseteq\mathcal H_n(\E/\F)
		$
		be an additive code.  We define the \textbf{Lyapunov invariant} of $\C$ by
		\[
		\mathcal L(\mathcal C)
		:=
		\{B\in M_n(\E):\Delta_B(\mathcal C)\subseteq\mathcal C\}.
		\]
	\end{definition}
	
	It is straightforward to show that the set \(\mathcal L(\mathcal C)\) is an additive subgroup of \(M_n(\E)\).

	\begin{proposition}\label{prop:lyapunov-invariant-matrix}
		Let
		$
		\mathcal C,\mathcal D\subseteq\mathcal H_n(\E/\F)
		$
		be additive Hermitian codes.  Suppose that
		\[
		\mathcal D
		=
		\left\{
		\mu\,N A^\varphi N^*
		:
		A\in\mathcal C
		\right\}
		\]
		for some
		$
		N\in\GL_n(\E),
		\mu\in\F^\times,
		\varphi\in\Aut(\E).
		$
		Then
		\[
		\mathcal L(\mathcal D)
		=
		\left\{
		N B^\varphi N^{-1}
		:
		B\in\mathcal L(\mathcal C)
		\right\}=N \cL(\C)^{\varphi} N^{-1}.
		\]
		In particular, \(\lvert\mathcal L(\mathcal C)\rvert=\lvert\mathcal L(\mathcal D)\rvert\).
	\end{proposition}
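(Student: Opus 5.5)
The plan is to establish one conjugation identity for the Lyapunov operators under the equivalence map, and then read off both inclusions from it. Write $\Phi:=\Phi_{\mu,N,\varphi}$, so that $\Phi(A)=\mu N A^\varphi N^*$ and $\mathcal D=\Phi(\mathcal C)$. Since $\varphi$ is an automorphism of the finite field $\E$, it is a power of the Frobenius. Hence it commutes with $\varrho$, and applying it entrywise commutes with matrix multiplication, transposition and conjugation. This gives $(B^*)^\varphi=(B^\varphi)^*$ and $(XY)^\varphi=X^\varphi Y^\varphi$. The scalar $\mu\in\F^\times$ is central and fixed by $\varrho$.

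The key identity, for every $B\in M_n(\E)$ and $A\in\mathcal H_n(\E/\F)$, is
\[
\Delta_{NB^\varphi N^{-1}}\bigl(\Phi(A)\bigr)=\Phi\bigl(\Delta_B(A)\bigr).
\]
To check it, put $B'=NB^\varphi N^{-1}$, so that $B'^*=(N^{-1})^*(B^\varphi)^*N^*$. Then
\[
B'\,\mu NA^\varphi N^*+\mu NA^\varphi N^*\,B'^*=\mu N\bigl(B^\varphi A^\varphi+A^\varphi (B^\varphi)^*\bigr)N^*=\mu N\bigl(BA+AB^*\bigr)^\varphi N^*.
\]
The factors $N^{-1}N$ and $N^*(N^{-1})^*=(N^{-1}N)^*$ cancel, and the last step uses the compatibility of $\varphi$ with products and with $*$.

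With the identity in hand, the inclusion $N\mathcal L(\mathcal C)^\varphi N^{-1}\subseteq\mathcal L(\mathcal D)$ follows directly. Take $B\in\mathcal L(\mathcal C)$ and $D=\Phi(A)\in\mathcal D$. Then $\Delta_{B'}(D)=\Phi(\Delta_B(A))\in\Phi(\mathcal C)=\mathcal D$.

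For the reverse inclusion I will use the symmetric argument with the inverse equivalence. One has $\Phi^{-1}(D)=\mu^{-1}M D^{\varphi^{-1}}M^*$ with $M=(N^{-1})^{\varphi^{-1}}$. Indeed, $\Phi(\mu^{-1}MD^{\varphi^{-1}}M^*)=NM^\varphi D (M^\varphi)^*N^*=D$. This $\Phi^{-1}$ is again of the required form and maps $\mathcal D$ onto $\mathcal C$. Applying the first inclusion to it gives $M\mathcal L(\mathcal D)^{\varphi^{-1}}M^{-1}\subseteq\mathcal L(\mathcal C)$. Conjugating back by $N$ and applying $\varphi$ entrywise then yields $\mathcal L(\mathcal D)\subseteq N\mathcal L(\mathcal C)^\varphi N^{-1}$.

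Finally, the map $B\mapsto NB^\varphi N^{-1}$ is a bijection of $M_n(\E)$, so the two sets have the same cardinality. There is no real obstacle here. The only point needing care is the bookkeeping of $(N^{-1})^*$ and the commutation of $\varphi$ with $\varrho$; the latter holds because $\Aut(\E)$ is cyclic, hence abelian.
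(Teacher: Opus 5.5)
Your argument is correct and follows the paper's route: both rest on the identity $\Delta_{NB^\varphi N^{-1}}(\Phi(A))=\Phi(\Delta_B(A))$; the paper then gets both inclusions at once from the injectivity of $\Phi$ and the bijectivity of $B\mapsto NB^\varphi N^{-1}$, whereas you obtain the reverse inclusion by applying the forward one to $\Phi^{-1}$. One small slip: $\varphi$ need not fix $\mu$, so the scalar in $\Phi^{-1}$ should be $\varphi^{-1}(\mu)^{-1}$ rather than $\mu^{-1}$ (your check $\Phi(\mu^{-1}MD^{\varphi^{-1}}M^*)=D$ silently assumes $\varphi(\mu)=\mu$), but this scalar still lies in $\F^\times$, so the rest of your argument is unchanged.
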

	
	\begin{proof}
		Let \(B\in M_n(\E)\), and set
		$
		B':=N B^\varphi N^{-1}$.
		For \(A\in\mathcal C\), we compute
		\[
		\begin{aligned}
			\Delta_{B'}(\mu N A^\varphi N^*)
			&=
			B'(\mu N A^\varphi N^*)
			+
			(\mu N A^\varphi N^*)B'^*  \\
			&=
			\mu N B^\varphi A^\varphi N^*
			+
			\mu N A^\varphi (B^*)^\varphi N^* \\
			&=
			\mu N (BA+AB^*)^\varphi N^* \\
			&=
			\mu N\Delta_B(A)^\varphi N^*.
		\end{aligned}
		\]
		Hence
		\[
		\Delta_{B'}(\mathcal D)\subseteq\mathcal D
		\quad\Longleftrightarrow\quad
		\Delta_B(\mathcal C)\subseteq\mathcal C.
		\]
		Therefore
		\[
		B\in\mathcal L(\mathcal C)
		\quad\Longleftrightarrow\quad
		N B^\varphi N^{-1}\in\mathcal L(\mathcal D),
		\]
		which proves the asserted equality.  
	\end{proof}

	\subsection{Computation of the Lyapunov invariant for \(\newH_{\ell}(F)\)}
	\label{subsec:lyapunov-computation}
		
	We keep the notation of \Cref{sec:newconstruction}. Thus
	\[
	\LL=\F_{q^n},
	\qquad
	\K=\F_q,
	\qquad
	R=\LL[X;\sigma],
	\qquad
	\langle\sigma\rangle=\Gal(\LL/\K).
	\]
	Let \(F\in\F_q[Y]\) be monic, irreducible, and self-reciprocal of even degree
	\(s\), and set
	\[
	R_F:=R/RF(X^n).
	\]
	
	The center of \(R_F\) is
	\[
	\E=\E_F:=Z(R_F)\simeq \F_{q^s}.
	\]
	The involution \(\Theta\) induces an involution \(\varrho\) on \(\E\), and
	we write
	\[
	\F:=\E^\varrho\simeq \F_{q^{s/2}}.
	\]
	Thus \(\Fix(\Theta)\subseteq R_F\) is identified with the Hermitian space
	\[
	\mathcal H_n(\E/\F).
	\]
	
	Throughout this subsection, we compute the preimage in \(R_F\) of the
	Lyapunov invariant under the normalized representation \(\mathcal M_F\). Equivalently, for a code \(\mathcal C\subseteq\Fix(\Theta)\), we write
	\[
	\mathcal L(\mathcal C)
	:=
	\{\overline b\in R_F:
	\overline b\,\overline{a}+\overline{a}\Theta(\overline b)\in\mathcal C
	\text{ for all }\overline{a}\in\mathcal C\}.
	\]
	Under \(\mathcal M_F\), this agrees with the matrix invariant introduced
	above.

	We now compute the Lyapunov invariant of the Hermitian code \(\newH_{\ell}(F)\).  The result is stated in Theorem~\ref{thm:lyapunov-invariant-computation}.  As in Section~\ref{sec:newconstruction}, set
	\[
	m:=\frac{ns}{2}.
	\]
	Since \(F\) is a monic irreducible self-reciprocal polynomial of degree \(s\), the integer \(s\) is even, and hence \(m\) is an integer.  Every element \(\overline{b}\in R_F\) admits a
	unique expansion
	\[
	\overline{b}=\sum_{i=-m}^{m-1} \beta_iX^i+RF(X^n),
	\qquad \beta_i\in\LL,
	\]
	with respect to the interval \([-m,m-1]\). 
	
	We shall repeatedly use
	the following elementary fact without further mention: if \(I\subseteq\mathbb Z\)
	is an interval of length \(<ns\), then the monomials \(X^i+RF(X^n)\), \(i\in I\), are
	\(\LL\)-linearly independent in \(R_F\).
	
	Define
	\[
	\mathcal B_F
	:=
	\{\lambda\in\E:\lambda+\Theta(\lambda)\in\K\}.
	\]
	
	\begin{proposition}\label{prop:easy-inclusion-lyapunov}
		For every \(0\le \ell \le ns/2-1\), one has
		\[
		\LL+\mathcal B_F\subseteq\mathcal L(\newH_{\ell}(F)).
		\]
	\end{proposition}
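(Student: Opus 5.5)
Since \(\mathcal L(\newH_\ell(F))\) is an additive subgroup of \(R_F\) (the map \(\overline b\mapsto\Delta_{\overline b}\) is additive and \(\newH_\ell(F)\) is additive), it suffices to prove separately that \(\LL\subseteq\mathcal L(\newH_\ell(F))\) and \(\mathcal B_F\subseteq\mathcal L(\newH_\ell(F))\). In both cases we fix a generator-type element of the code and show that \(\Delta_{\overline b}\) sends it back into the code. The code is spanned over \(\K\) by the elements \(\alpha_0\) (\(\alpha_0\in\LL\)) and \(\overline u_i(\alpha):=\sigma^{-i}(\alpha)X^{-i}+\alpha X^i\) (\(1\le i\le\ell\), \(\alpha\in\LL\)), so additivity reduces everything to these.

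\emph{The case \(\overline b=\beta\in\LL\).} By \eqref{eq:definitionanti}, \(\Theta(\beta)=\beta\). Thus \(\Delta_\beta(\overline a)=\beta\overline a+\overline a\beta\). For \(\overline a=\alpha_0\) this gives \(2\beta\alpha_0\in\LL\subseteq\newH_\ell(F)\). For \(\overline a=\overline u_i(\alpha)\) we use the commutation rules \(X^i\beta=\sigma^i(\beta)X^i\) and \(X^{-i}\beta=\sigma^{-i}(\beta)X^{-i}\); the latter follows from the former by inverting in \(R_F\). We obtain
\[
\beta\overline u_i(\alpha)+\overline u_i(\alpha)\beta
=\sigma^{-i}\bigl(\sigma^i(\beta)\alpha+\alpha\beta\bigr)X^{-i}
+\bigl(\beta\alpha+\alpha\sigma^i(\beta)\bigr)X^i .
\]
Setting \(\gamma:=\alpha\beta+\alpha\sigma^i(\beta)\), the coefficient of \(X^{-i}\) is \(\sigma^{-i}(\gamma)\). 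Hence the element equals \(\overline u_i(\gamma)\in\newH_\ell(F)\), and so \(\LL\subseteq\mathcal L(\newH_\ell(F))\).

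\emph{The case \(\overline b=\lambda\in\mathcal B_F\).} Here \(\lambda\) is central and \(\Theta(\lambda)=\varrho(\lambda)\) is also central. Therefore
\[
\Delta_\lambda(\overline a)=\lambda\overline a+\overline a\varrho(\lambda)=(\lambda+\varrho(\lambda))\overline a .
\]
By definition of \(\mathcal B_F\) we have \(c:=\lambda+\varrho(\lambda)\in\K\). Since \(\newH_\ell(F)\) is a \(\K\)-subspace, \(c\,\newH_\ell(F)\subseteq\newH_\ell(F)\), which gives the second inclusion.

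The only point needing care is the identity \(X^{-i}\beta=\sigma^{-i}(\beta)X^{-i}\) in \(R_F\). It follows by multiplying \(X^i\sigma^{-i}(\beta)=\beta X^i\) on both sides by \(X^{-i}\), so the first case is a short coefficient computation. The remaining ingredients are the centrality of \(\E_F\) and the \(\K\)-linearity of the code, both already established, so no genuine obstacle arises.
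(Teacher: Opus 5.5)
Your proof is correct and follows essentially the same route as the paper's: reduce to $\LL$ and $\mathcal B_F$ separately by additivity of $\mathcal L$, use $\Theta(\beta)=\beta$ and the commutation rules $X^{\pm i}\beta=\sigma^{\pm i}(\beta)X^{\pm i}$ to show $\Delta_\beta$ preserves the paired-coefficient shape, and use centrality to get $\Delta_\lambda(\overline a)=(\lambda+\Theta(\lambda))\overline a$ with $\lambda+\Theta(\lambda)\in\K$. The one small difference is presentational. You verify directly that $\Delta_\beta(\overline u_i(\alpha))=\overline u_i(\gamma)$, and you justify the $X^{-i}$ commutation rule explicitly. The paper instead compares coefficients of $X^{\pm i}$ and leaves that rule implicit.
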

	
	\begin{proof}
		Let \(\alpha \in\LL\).  Since \(\Theta(\alpha +RF(X^n))=\alpha +RF(X^n)\), for every
		\[
		\overline{a}=\alpha_0+
		\sum_{i=1}^{\ell}
		\left(
		\sigma^{-i}(\alpha_i)X^{-i}+\alpha_iX^i
		\right)+RF(X^n)
		\in\newH_{\ell}(F)
		\]
		we have
		\[
		\Delta_{\alpha}(\overline{a})=\alpha \overline{a}+\overline{a} \alpha.
		\]
		Hence, the coefficient at \(X^i\) is
		\[
		\alpha\alpha_i+\alpha_i\sigma^i(\alpha),
		\]
		while the coefficient at \(X^{-i}\) is
		\[
		\alpha\sigma^{-i}(\alpha_i)+\sigma^{-i}(\alpha_i)\sigma^{-i}(\alpha)
		=
		\sigma^{-i}\!\left(\alpha\alpha_i+\alpha_i\sigma^i(\alpha)\right).
		\]
		Thus \(\Delta_{\alpha}(\overline{a})\in\newH_{\ell}(F)\), and hence
		\(\LL\subseteq\mathcal L(\newH_{\ell}(F))\).
		
		Now let \(\lambda\in\mathcal B_F\).  Since \(\lambda\) is central, for every
		\(\overline{a}\in\newH_{\ell}(F)\) we have
		\[
		\Delta_{\lambda}(\overline{a})
		=
		\lambda\overline{a}+\overline{a}\Theta(\lambda)
		=
		(\lambda+\Theta(\lambda))\overline{a}.
		\]
		By definition, \(\lambda+\Theta(\lambda)\in\K=\F_q\).  Since \(\newH_{\ell}(F)\) is an
		\(\F_q\)-space, it follows that
		$
		\Delta_{\lambda}(\overline{a})\in\newH_{\ell}(F)$.
		Thus
		\[
		\mathcal B_F\subseteq\mathcal L(\newH_{\ell}(F)).
		\]
		Since \(\mathcal L(\newH_{\ell}(F))\) is additive, the assertion follows.
	\end{proof}
	
	We now prove the reverse inclusion in the regime $\ell < m/2$. 
	
	\begin{theorem}\label{thm:lyapunov-invariant-computation}
		Assume that $1\leq  \ell < m/2$ and $n\geq 3$. Then
		\[
		\mathcal L(\newH_{\ell}(F))=\LL+\mathcal B_F.
		\]
	\end{theorem}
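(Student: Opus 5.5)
The plan is to prove the reverse inclusion \(\mathcal L(\newH_\ell(F))\subseteq\LL+\mathcal B_F\); the other inclusion is Proposition~\ref{prop:easy-inclusion-lyapunov}. Fix \(\overline b\in\mathcal L(\newH_\ell(F))\) with expansion \(\overline b=\sum_{i=-m}^{m-1}\beta_iX^i+RF(X^n)\). I will test \(\Delta_{\overline b}\) on the elements
\[
\overline a_{k,\alpha}:=\alpha X^k+\sigma^{-k}(\alpha)X^{-k}+RF(X^n)\in\newH_\ell(F),\qquad 0\le k\le\ell,\ \alpha\in\LL,
\]
and read off coefficients. Throughout I use the fact that any window of fewer than \(ns\) consecutive exponents is \(\LL\)-free, together with the hypothesis \(\ell<m/2\). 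The point of that hypothesis is that the products \(X^{i}\cdot X^{\pm k}\) with \(|i|\le m/2\) and \(k\le\ell\) stay inside a window of length \(<ns\), so no reduction modulo \(F(X^n)\) is needed there.

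\emph{Step 1 (test on \(1\)).} Take \(\overline a=1\). Then \(\overline b+\Theta(\overline b)\in\newH_\ell(F)\), and the coefficient of this element at \(X^j\) is \(\beta_j+\sigma^j(\beta_{-j})\). Hence \(\beta_j=-\sigma^j(\beta_{-j})\) for \(\ell<|j|\le m-1\); the index \(-m\) is handled by rewriting \(X^{-m}\) via the window \([-m+1,m]\). So \(\overline b=\overline h+\overline c\), where \(\overline h\) has support in \([-\ell,\ell]\) and \(\Theta(\overline c)=-\overline c\). Consequently \(\Delta_{\overline c}(\overline a)=\overline c\,\overline a-\overline a\,\overline c\) is a commutator. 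I would then try to show directly that \(\overline h\in\LL+\mathcal B_F\), and that \(\overline c\) lies in the centralizer of enough of \(\newH_\ell(F)\).

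\emph{Step 2 (test on \(\alpha\in\LL\), i.e.\ \(k=0\)).} For \(\alpha\in\LL\), the element \(\Delta_{\overline b}(\alpha)\) has coefficient \(\beta_i\sigma^i(\alpha)+\alpha\beta_i\) at \(X^i\) from \(\overline b\alpha+\alpha\Theta(\overline b)\), after regrouping with the \(X^{-i}\) terms. Membership in \(\newH_\ell(F)\) forces these coefficients to vanish outside \([-\ell,\ell]\) and to satisfy the pairing condition inside \([-\ell,\ell]\). Comparing the conditions for \(\alpha\) and for \(\alpha\alpha'\), and letting \(\alpha\) range over \(\LL\), should force \(\beta_i=0\) unless \(i\equiv0\pmod n\) or \(|i|\) is small. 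The surviving terms in \(\overline c\) then lie in the centralizer \(\LL\,\E_F\) of \(\LL\), that is, in sums of terms \(\gamma_jX^{jn}\) with \(\gamma_j\in\LL\).

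\emph{Step 3 (test on \(k=\ell\) and \(1\le k<\ell\)).} Next I take the top test element \(\overline a_{\ell,\alpha}\). The extreme exponents produced are \(jn\pm\ell\). For any \(j\neq0\) with \(|jn|<m/2\), these exponents fall outside \([-\ell,\ell]\) but inside the free window, and the coefficient there is \(\gamma_j\alpha-\alpha\sigma^{jn}(\cdot)\)-type, i.e.\ \(\gamma_j(\alpha-\sigma^{\ell}(\alpha)\text{-twisted})\). With \(n\ge3\) one can choose \(\alpha\) making this nonzero unless \(\gamma_j\) is fixed appropriately. The role of \(n\ge3\) is to separate the residues \(\ell\) and \(-\ell\) modulo \(n\), so that the \(X^{jn+\ell}\) and \(X^{jn-\ell}\) contributions do not cancel. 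This should give \(\gamma_j\in\K\) for \(j\ne0\) and \(\gamma_0\in\LL\). The same argument applied to \(\overline h\) shows that its \(X^{\pm i}\) coefficients with \(0<i\le\ell\), \(n\nmid i\), vanish. Thus \(\overline b=\alpha_0+\lambda\) with \(\alpha_0\in\LL\) and \(\lambda\in\E_F\). Finally, Step 1 gives \(\lambda+\Theta(\lambda)\in\newH_\ell(F)\cap\E_F\); since \(\ell<m/2\), the computation in Proposition~\ref{prop:our-scalar-field-proper} shows this intersection equals \(\K\) (up to the \(\LL\)-part absorbed in \(\alpha_0\)), so \(\lambda\in\mathcal B_F\).

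The main obstacle is the bookkeeping in Steps 2–3 for coefficients \(\beta_i\) with \(|i|\) close to \(m\). There the products \(X^iX^{\pm k}\) leave the window \([-m,m-1]\) and must be reduced modulo \(F(X^n)\), which mixes exponents. My first attempt would be to exploit the antisymmetry \(\beta_{j}=-\sigma^j(\beta_{-j})\) from Step 1. This should let me always multiply on the side that moves a large exponent \(i\) toward \(0\): I would use \(\overline a\,\overline c\) instead of \(\overline c\,\overline a\), or use the \(X^{-k}\) half of \(\overline a_{k,\alpha}\). The resulting coefficients then stay in a length-\(<ns\) window, where linear independence applies.
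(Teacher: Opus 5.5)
Your overall plan is the paper's strategy: evaluate \(\Delta_{\overline b}\) on \(1\), on \(\LL\), and on the elements \(\sigma^{-k}(\alpha)X^{-k}+\alpha X^k\), and compare coefficients in a window where no reduction modulo \(F(X^n)\) occurs. Your Steps~1--2 recover the paper's first step once the sign is fixed: outside \([-\ell,\ell]\) the coefficient is \(\beta_i(\sigma^i(\alpha)-\alpha)\). After that there are genuine gaps.

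\textbf{The inner part \(\overline h\).} ``The same argument'' is not available here. It relied on \(\Theta(\overline c)=-\overline c\), whereas Step~1 gives no relation between \(\beta_i\) and \(\beta_{-i}\) for \(|i|\le\ell\). The missing idea is the paper's descending induction on \(h=\ell,\dots,1\). Test on the element with \(j=\ell+1-h\) and read the overflow coefficient at \(X^{\ell+1}\), namely \(\beta_h\sigma^h(\alpha)+\alpha\sigma^{\ell+1}(\beta_{-h})+u_h\alpha\). Here \(u_h\) comes from the already absorbed central part, whose Lyapunov action is multiplication by \(\lambda+\Theta(\lambda)\). Taking \(\alpha=1\) couples \(\beta_{\pm h}\) and kills them when \(n\nmid h\). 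When \(n\mid h\), a second test with \(j+1\) at \(X^{\ell+2}\) forces \(\beta_{\pm h}\in\K\). You never prove this last fact, and without it you only get \(\overline b\in\LL\,\E_F\), not \(\overline b\in\LL+\E_F\).

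\textbf{The outside central coefficients.} Testing with \(k=\ell\) only yields \(\gamma_j\in\Fix(\sigma^\ell)=\F_{q^{\gcd(\ell,n)}}\). Also, \(n\ge3\) does not separate \(\ell\) from \(-\ell\) modulo \(n\) (take \(n=4\), \(\ell=2\), \(s=4\)). What works is \(k=1\), where \(n\ge 3\) gives exactly \(1\not\equiv-1\pmod n\). Your proposed cure near \(\pm m\), multiplying on the other side, changes nothing, since \(X^iX^k\) and \(X^kX^i\) have the same exponent. The real reason large exponents are harmless is that a central exponent \(t\) with \(\ell<t<m\) satisfies \(t\le m-n\). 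Hence \(t+1,t+2\le m-1\) and no reduction occurs. Only \(\beta_{-m}\) needs a separate comparison at \(X^{-m+1}\), using the independence of \(\id,\sigma,\sigma^{-1}\).

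\textbf{The final step is wrong as stated.} \(\newH_\ell(F)\cap\E_F\) is not \(\K\) in general. When \(n\le\ell<ns/4\), which is possible for \(s>4\) (e.g.\ \(n=3\), \(s=6\), \(\ell=3\)), it contains \(X^n+X^{-n}\notin\K\). In fact it equals \(\langle 1,w,\dots,w^{\lfloor \ell/n\rfloor}\rangle_{\K}\) with \(w=X^n+X^{-n}\). So membership of \(\lambda+\Theta(\lambda)\) in the code does not give \(\lambda\in\mathcal B_F\). What you need is that \(\lambda=\overline b-\alpha_0\) itself lies in \(\mathcal L(\newH_\ell(F))\), because \(\alpha_0\in\LL\subseteq\mathcal L(\newH_\ell(F))\). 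Then \(u:=\lambda+\Theta(\lambda)\in\E_F^{\varrho}\) satisfies \(u\,\newH_\ell(F)\subseteq\newH_\ell(F)\). Let \(t\) be the largest positive central exponent of \(u\), and test \(u\) against \(\sigma^{-j}(\alpha)X^{-j}+\alpha X^j\) with \(j=\ell+1-t\). This produces the nonzero coefficient \(u_t\alpha\) at \(X^{\ell+1}\), a contradiction unless \(u\in\K\). This is the paper's Step~4, equivalently the scalar-field statement \(\mathcal F(\newH_\ell(F))=\F_q\) from the argument of Proposition~\ref{prop:our-scalar-field-proper}. It is not a statement about the intersection with \(\E_F\).
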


	\begin{proof}
		By Proposition \ref{prop:easy-inclusion-lyapunov}, it remains to prove the reverse
		inclusion. 
				
		Let
		\[
		\overline b
		=
		\sum_{i=-m}^{m-1}\beta_iX^i+RF(X^n)
		\in
		\mathcal L(\newH_\ell(F)),
		\]
		with $\beta_i\in\LL$, be written with respect to the basis
		\[
		X^{-m},X^{-m+1},\ldots,X^{-1},1,X,\ldots,X^{m-1}
		\]
		of $R_F$. Write also
		\[
		\Theta(\overline b)
		=
		\sum_{i=-m}^{m-1}\gamma_iX^i+RF(X^n),
		\]
		with $\gamma_i\in\LL$, with respect to the same basis.
		
		We derive the required restrictions on the coefficients \(\beta_i\) in four steps.
		\begin{itemize}
			\item[Step~1.] Derive coefficient relation imposed by the condition
			\(1+RF(X^n)\in\newH_\ell(F)\), and use it to prove that all noncentral
			coefficients with exponents outside the interval \([-\ell,\ell]\) vanish.
			\item[Step~2.] Consider the remaining central coefficients outside
			\([-\ell,\ell]\) and show that they belong to \(\K\), thereby identifying the corresponding contribution as central.
			\item[Step~3.] Treat the coefficients with exponents in \([-\ell,\ell]\) by a
			descending induction to show that, modulo central terms, all nonconstant
			coefficients vanish.
			\item[Step 4.] Determine the possible central summand and prove that it belongs
			to \(\mathcal B_F\), which yields the desired inclusion.
		\end{itemize}

		\noindent\textbf{Step 1: On the noncentral coefficients outside the interval \([-\ell,\ell]\).}
		
		Since \(1+RF(X^n)\in\newH_\ell(F)\), we have
		\[
		\overline b+\Theta(\overline b)
		=
		\Delta_{\overline b}(1+RF(X^n))
		\in\newH_\ell(F).
		\]
		Therefore, for every \(i\in[-m,m-1]\) with \(|i|>\ell\), the coefficient of
		\(X^i\) in \(\overline b+\Theta(\overline b)\) is zero, that is,
		\begin{equation}\label{eq:outside-constant-relation}
			\beta_i+\gamma_i=0.
		\end{equation}
		
		In this step, we eliminate all noncentral coefficients outside the interval $[-\ell,\ell]$. Let
		\[
		|i|>\ell,
		\qquad
		i\not\equiv0\pmod n.
		\]
		For every \(\alpha\in\LL\), the constant element \(\alpha+RF(X^n)\) belongs to
		\(\newH_\ell(F)\). Hence
		\[
		\Delta_{\overline b}(\alpha+RF(X^n))=\overline{b}\alpha+\alpha \Theta(\overline{b})\in\newH_\ell(F).
		\]
		The coefficient of \(X^i+RF(X^n)\) in \(\Delta_{\overline b}(\alpha)\) is
		\[
		\beta_i\sigma^i(\alpha)+\alpha\gamma_i.
		\]
		Since \(X^i\) lies outside the interval $[-\ell,\ell]$, this coefficient is zero for every
		\(\alpha\in\LL\). Using \eqref{eq:outside-constant-relation}, we obtain
		\[
		\beta_i\bigl(\sigma^i(\alpha)-\alpha\bigr)=0,
		\]
		for every \(\alpha\in\LL\). Since \(i\not\equiv0\pmod n\), one has
		\(\sigma^i\ne\id_{\LL}\). Thus \(\beta_i=0\). Then
		\eqref{eq:outside-constant-relation} also gives \(\gamma_i=0\). Consequently,
		all noncentral coefficients of \(\overline b\) outside \([-\ell,\ell]\)
		vanish.
		
		\noindent\textbf{Step 2: On the central coefficients outside the interval \([-\ell,\ell]\).}
		
		Write \(s=2r\), so that \(m=nr\), and put \(Y=X^n\). Since \(F\) is monic, self-reciprocal and has
		constant term \(1\), the relation \(F(Y)=0\), after multiplication by
		\(Y^{-r}\), gives
		\begin{equation}\label{eq:boundary-relation-lyap-final}
			X^m
			=
			-X^{-m}+w
			\qquad\text{in }R_F,
		\end{equation}
		where
		\[
		w\in
		\sum_{\substack{-m<\nu n<m}}\K X^{\nu n}.
		\]
		Thus rewriting the boundary term \(X^m\) only produces central powers with
		coefficients in \(\K\). 
		
		Let \(t\) be a positive multiple of \(n\) such that
		\[
		\ell<t<m.
		\]
		For \(\alpha\in\LL\), set
		\[
		Z_1(\alpha):=\sigma^{-1}(\alpha)X^{-1}+\alpha X+RF(X^n)
		\in\newH_\ell(F).
		\]
		Since
		\[
		\Delta_{\overline b}(Z_1(\alpha))\in\newH_\ell(F),
		\]
		its coefficient of \(X^{t+1}\) is zero for every \(\alpha\in\LL\). 
		
		We claim
		that this coefficient forces \(\beta_t\in\K\). Indeed, since \(t\) is a positive multiple of \(n\) and \(t<m\), we have
		\[
		t\le m-n.
		\]
		As \(n\ge3\), it follows that
		\[
		t+2\le m-n+2\le m-1.
		\]
		Thus both \(t+1\) and \(t+2\) belong to the chosen set of representatives
		\([-m,m-1]\). Moreover,
		\[
		\ell<t+1<m,
		\]
		so \(X^{t+1}\) lies outside the interval \([-\ell,\ell]\). Using the multiplication rule \(X^i a=\sigma^i(a)X^i\), the only direct
		contributions to the coefficient of \(X^{t+1}\) in
		\[
		\Delta_{\overline b}(Z_1(\alpha))
		=
		\overline b Z_1(\alpha)+Z_1(\alpha)\Theta(\overline b)
		\]
		come from
		\[
		\beta_tX^t\cdot \alpha X,
		\qquad
		\beta_{t+2}X^{t+2}\cdot\sigma^{-1}(\alpha)X^{-1},
		\]
		and from
		\[
		\alpha X\cdot\gamma_tX^t,
		\qquad
		\sigma^{-1}(\alpha)X^{-1}\cdot\gamma_{t+2}X^{t+2}.
		\]
		Their contributions are respectively
		\[
		\beta_t\sigma^t(\alpha),
		\qquad
		\beta_{t+2}\sigma^{t+2}(\sigma^{-1}(\alpha)),
		\]
		and
		\[
		\alpha\sigma(\gamma_t),
		\qquad
		\sigma^{-1}(\alpha)\sigma^{-1}(\gamma_{t+2}).
		\]
		Since \(t\equiv0\pmod n\), we have \(\sigma^t=\id_{\LL}\), and hence
		$
		\sigma^{t+2}(\sigma^{-1}(\alpha))=\sigma(\alpha)$.
		Therefore the coefficient of \(X^{t+1}\) is
		\[
		\bigl(\beta_t+\sigma(\gamma_t)\bigr)\alpha
		+
		\beta_{t+2}\sigma(\alpha)
		+
		\sigma^{-1}(\gamma_{t+2})\sigma^{-1}(\alpha).
		\]
		
		Now \(t+2>\ell\), and, since \(t\equiv0\pmod n\) and \(n\ge3\),
		\[
		t+2\not\equiv0\pmod n.
		\]
		Thus \(t+2\) is the index of a noncentral exponent outside the interval $[-\ell,\ell]$. By the previous step,
		\[
		\beta_{t+2}=0.
		\]
		Moreover, by \eqref{eq:outside-constant-relation},
		\[
		\gamma_{t+2}=-\beta_{t+2}=0.
		\]
		Hence the coefficient of \(X^{t+1}\) reduces to
		\[
		\bigl(\beta_t+\sigma(\gamma_t)\bigr)\alpha.
		\]
		It is zero for every \(\alpha\in\LL\). Taking \(\alpha=1\), we get
		\[
		\beta_t+\sigma(\gamma_t)=0.
		\]
		
		On the other hand, by applying again \eqref{eq:outside-constant-relation}, we have
		\[
		\gamma_t=-\beta_t.
		\]
		Consequently
		\[
		0=\beta_t+\sigma(\gamma_t)
		=
		\beta_t-\sigma(\beta_t).
		\]
		Thus
		\[
		\beta_t\in\Fix(\sigma)=\K.
		\]
		This proves that every positive central coefficient of \(\overline b\) having support in the set $[-m+1,m-1]\setminus [-\ell,\ell]$ belongs to \(\K\).
		
		We next handle the boundary coefficient \(\beta_{-m}\). We compare the
		coefficient of \(X^{-m+1}\) in
		\[
		\Delta_{\overline b}(Z_1(\alpha)).
		\]
		The exponent \(-m+1\) lies outside the interval, since \(\ell<m/2\). The relevant
		terms are
		\[
		\beta_{-m}X^{-m}\cdot \alpha X,
		\qquad
		\beta_{-m+2}X^{-m+2}\cdot\sigma^{-1}(\alpha)X^{-1},
		\]
		and
		\[
		\alpha X\cdot\gamma_{-m}X^{-m},
		\qquad
		\sigma^{-1}(\alpha)X^{-1}\cdot\gamma_{-m+2}X^{-m+2}.
		\]
		Their contributions to the coefficient of \(X^{-m+1}\) are
		\[
		\beta_{-m}\sigma^{-m}(\alpha),
		\qquad
		\beta_{-m+2}\sigma^{-m+2}(\sigma^{-1}(\alpha)),
		\]
		and
		\[
		\alpha\sigma(\gamma_{-m}),
		\qquad
		\sigma^{-1}(\alpha)\sigma^{-1}(\gamma_{-m+2}).
		\]
		Here possible boundary
		rewritings cannot contribute to this term since $n \geq 3$. Since \(m\equiv0\pmod n\), this gives 
		\[
		\bigl(\beta_{-m}+\sigma(\gamma_{-m})\bigr)\alpha
		+
		\beta_{-m+2}\sigma(\alpha)
		+
		\sigma^{-1}(\gamma_{-m+2})\sigma^{-1}(\alpha)
		=
		0
		\]
		for every \(\alpha\in\LL\).
		Since $n\ge3$, the three automorphisms
		\[
		\id_{\LL},\qquad \sigma,\qquad \sigma^{-1}
		\]
		are pairwise distinct and they are linearly independent over $\LL$ as maps
		$\LL\to\LL$. Consequently, the identity above, valid for every
		$\alpha\in\LL$, implies
		\[
			\beta_{-m+2}=0,
			\qquad
			\gamma_{-m+2}=0,
			\qquad
			\beta_{-m}+\sigma(\gamma_{-m})=0.
		\]
		By \eqref{eq:outside-constant-relation}, \(\gamma_{-m}=-\beta_{-m}\). Hence
		\[
		\beta_{-m}-\sigma(\beta_{-m})=0,
		\]
		and therefore
		\[
		\beta_{-m}\in\K.
		\]
		
		It remains to handle the negative central coefficients outside the interval $[-\ell,\ell]$. Arguing as before, we get 
		\[
		\beta_r=-\gamma_r\in\K,
		\]
		for every central exponent \(r\) such that
		\(-m<r<-\ell\).
		
		Thus every central coefficient of \(\overline b\) outside the interval $[-\ell,\ell]$ belongs to
		\(\K\). Consequently the outside central part of \(\overline b\) is an element of the
		center \(\E\). We may therefore write
		\[
		\overline b=\overline{c}+\lambda_0,
		\]
		where \(\lambda_0\in\E_F\) and
		\[
		\overline{c}=
		\sum_{i=-\ell}^{\ell}\beta_iX^i+RF(X^n).
		\]
		Since \(\overline{c}\) is supported in \([-\ell,\ell]\), and since
		\(\ell<m/2<m\), applying \(\Theta\) to \(\overline{c}\), we get
		\[
		\Theta(\overline{c})
		=
		\sum_{i=-\ell}^{\ell}\sigma^{-i}(\beta_i)X^{-i}+RF(X^n).
		\]
		Therefore \(\overline{c}+\Theta(\overline{c})\) is fixed by \(\Theta\) and supported in
		\([-\ell,\ell]\). This
		implies
		\[
		\overline{c}+\Theta(\overline{c})\in\newH_\ell(F).
		\]
		Since also
		\[
		\overline b+\Theta(\overline b)\in\newH_\ell(F),
		\]
		we get
		\[
		\overline{u}:=\lambda_0+\Theta(\lambda_0)
		=
		\bigl(\overline b+\Theta(\overline b)\bigr)
		-
		\bigl(\overline{c}+\Theta(\overline{c})\bigr)
		\in\newH_\ell(F).
		\]
		In particular, \(\overline{u}\) is a central element supported in the interval $[-\ell,\ell]$, and its
		central coefficients are in \(\K\).
		
		\noindent\textbf{Step 3: On the coefficients inside the interval \([-\ell,\ell]\).}
		
		We now eliminate the nonconstant part of \(\overline{c}\), modulo central terms. For
		\(1\le j\le\ell\) and \(\alpha\in\LL\), set
		\[
		Z_j(\alpha)
		:=
		\sigma^{-j}(\alpha)X^{-j}+\alpha X^j+RF(X^n)
		\in\newH_\ell(F).
		\]
		We proceed by descending induction on \(h=\ell,\ell-1,\ldots,1\). At the
		beginning of the step \(h\), we write
		\[
		\overline b=\overline{c}_h+\lambda_h,
		\]
		where \(\lambda_h\in\E\),
		\[
		\overline{u}_h:=\lambda_h+\Theta(\lambda_h)\in\newH_\ell(F),
		\]
		and \(\overline{c}_h\) is supported in \([-h,h]\) together with the still untreated
		coefficients at \(\pm h\). More precisely, all coefficients with absolute value
		strictly larger than \(h\) have already either vanished, if their exponent is
		not divisible by \(n\), or have been absorbed into \(\lambda_h\), if their exponent is
		divisible by \(n\). Put
		\[
		j:=\ell+1-h.
		\]
		Then \(1\le j\le\ell\). We compare the coefficient of \(X^{\ell+1}\) in
		\[
		\Delta_{\overline b}(Z_j(\alpha))
		=
		\Delta_{\overline{c}_h}(Z_j(\alpha))+\overline{u}_hZ_j(\alpha).
		\]
		Here \(\overline{c}_h\), \(\overline{u}_h\), and \(Z_j(\alpha)\) are all supported in exponents whose
		sums lie in \([-2\ell,2\ell]\). Since \(\ell<m/2\), this interval is contained
		in \([-m+1,m-1]\). Hence no reduction modulo \(F(X^n)\) is involved in this
		coefficient comparison.
		
		The only contribution from \(\overline{c}_h\) to the coefficient of \(X^{\ell+1}\) comes
		from the pair \(\beta_h,\beta_{-h}\). It is
		\[
		\beta_h\sigma^h(\alpha)
		+
		\alpha\sigma^{\ell+1}(\beta_{-h}).
		\]
		The central summand contributes through \(\overline{u}_hZ_j(\alpha)\). Such a contribution
		can occur only when \(h\equiv0\pmod n\). Let \(u_h\in\K\) denote the coefficient
		of \(X^h\) in \(\overline{u}_h\), with the convention \(u_h=0\) if
		\(h\not\equiv0\pmod n\). Since \(X^{\ell+1}\) lies outside the interval, we obtain
		\begin{equation}\label{eq:inner-coeff-comparison}
			\beta_h\sigma^h(\alpha)
			+
			\alpha\sigma^{\ell+1}(\beta_{-h})
			+
			u_h\alpha
			=
			0,
		\end{equation}
		for every $\alpha\in\LL$. Assume first that \(h\not\equiv0\pmod n\). Then \(u_h=0\). Taking
		\(\alpha=1\) in \eqref{eq:inner-coeff-comparison} gives
		\[
		\beta_h+\sigma^{\ell+1}(\beta_{-h})=0.
		\]
		Substituting this into \eqref{eq:inner-coeff-comparison}, we get
		\[
		\beta_h\bigl(\sigma^h(\alpha)-\alpha\bigr)=0,
		\]
		for every \(\alpha\in\LL\). Since \(\sigma^h\ne\id_{\LL}\), it follows that
		\[
		\beta_h=0,
		\qquad
		\beta_{-h}=0.
		\]
		
		Assume now that \(h\equiv0\pmod n\). Then \(\sigma^h=\id_{\LL}\), and
		\[
		\ell+1=h+j.
		\]
		Thus \eqref{eq:inner-coeff-comparison} gives
		\begin{equation}\label{eq:central-inner-first}
			\beta_h+\sigma^j(\beta_{-h})+u_h=0.
		\end{equation}
		Since \(n\ge3\), the condition \(h\equiv0\pmod n\) implies \(h\ge3\), and hence
		\(j+1\le\ell\). Applying the same comparison to \(Z_{j+1}(\alpha)\) and to the
		coefficient of \(X^{\ell+2}\), we obtain
		\begin{equation}\label{eq:central-inner-second}
			\beta_h+\sigma^{j+1}(\beta_{-h})+u_h=0.
		\end{equation}
		Subtracting \eqref{eq:central-inner-first} from
		\eqref{eq:central-inner-second} gives
		\[
		\sigma^{j+1}(\beta_{-h})=\sigma^j(\beta_{-h}),
		\]
		and therefore
		\[
		\sigma(\beta_{-h})=\beta_{-h}.
		\]
		Hence
		\[
		\beta_{-h}\in\K.
		\]
		Since \(u_h\in\K\), equation \eqref{eq:central-inner-first} also gives
		\[
		\beta_h\in\K.
		\]
		Therefore
		\[
		\beta_hX^h+\beta_{-h}X^{-h}\in\E.
		\]
		We absorb this central pair into the central summand \(\lambda_h\), and continue the
		induction. Iterating this argument for \(h=\ell,\ell-1,\ldots,1\), we obtain
		\[
		\overline b=\beta+\lambda
		\]
		with \(\beta\in\LL\) and \(\lambda\in\E\).
		
		\noindent\textbf{Step 4: Determination of the central summand.}
		
		It remains to determine which central elements \(\lambda\) may occur. Since
		$
		\beta\in\LL\subseteq\mathcal L(\newH_\ell(F))
		$
		and
		$
		\overline b\in\mathcal L(\newH_\ell(F)),
		$
		we have
		\[
		\lambda=\overline b-\beta\in\mathcal L(\newH_\ell(F)).
		\]
		Because \(\lambda\) is central, for every \(Z\in\newH_\ell(F)\),
		\[
		\Delta_{\lambda}(Z)
		=
		\lambda Z+Z\Theta(\lambda)
		=
		(\lambda+\Theta(\lambda))Z.
		\]
		Set
		\[
		\overline{u}:=\lambda+\Theta(\lambda)\in\E.
		\]
		Since \(1+RF(X^n)\in\newH_\ell(F)\), we have
		\[
		\overline{u}=\Delta_{\lambda}(1+RF(X^n))\in\newH_\ell(F).
		\]
		We claim that \(\overline{u}\in\K\).
		
		Suppose, to the contrary, that \(\overline{u}\notin\K\). Since
		\(\overline{u}\in\E\cap\newH_\ell(F)\), its support consists only of central
		exponents lying in \([-\ell,\ell]\). Let \(t=\xi n>0\) be the largest positive
		central exponent occurring in monomials of \(\overline{u}\), and let
		\(u_t\in\K^\times\) be its coefficient. Then \(t\le\ell\). Put
		\[
		j:=\ell+1-t.
		\]
		Then \(1\le j\le\ell\), and hence
		$
		Z_j(\alpha)\in\newH_\ell(F)$, for every $\alpha\in\LL$. The coefficient of \(X^{\ell+1}\) in
		$
		\overline{u}Z_j(\alpha)$
		is precisely
		$
		u_t\alpha.
		$
		Since
		\[
		\overline{u}Z_j(\alpha)=\Delta_{\lambda}(Z_j(\alpha))\in\newH_\ell(F),
		\]
		and \(X^{\ell+1}\) lies outside the interval $[-\ell,\ell]$, this coefficient must be zero for
		every \(\alpha\in\LL\). This forces \(u_t=0\), a contradiction. Therefore \(\overline{u}\) has no positive nonconstant central term. Since
		\(\overline{u}=\Theta(\overline{u})\), it has no negative nonconstant central term either. Hence
		\[
		\overline{u}\in\K.
		\]
		Thus
		$
		\lambda+\Theta(\lambda)\in\K,
		$
		that is,
		\[
		\lambda\in\mathcal B_F.
		\]
		Consequently
		\[
		\overline b=\beta+\lambda\in\LL+\mathcal B_F.
		\]
		This proves
		\[
		\mathcal L(\newH_\ell(F))\subseteq\LL+\mathcal B_F.
		\]
		Together with Proposition \ref{prop:easy-inclusion-lyapunov}, the theorem follows.
	\end{proof}

	\section{Equivalence}
	\label{sec:equivalence}

	\subsection{Main results}
	\label{subsec:main-results-equivalence}
	
	The construction in Section~\ref{sec:newconstruction} is parametrized by
	monic irreducible self-reciprocal polynomials.  Since many such polynomials
	exist in each prescribed even degree, a natural question is how many
	inequivalent members arise from this construction.
	
	Our main goal in this section is to answer this question.  We begin with the two main
	equivalence theorems---one for each parameter range---and their enumerative consequences for full Hermitian equivalence.
	
	We first fix the notation used throughout this section:
	\[
	\LL=\F_{q^n},
	\qquad
	\K=\F_q,
	\qquad
	R=\LL[X;\sigma],
	\]
	where \(\sigma\) generates \(\Gal(\LL/\K)\), and we assume that \(n>2\).
	Let \(s\ge2\) be even, and put
	\[
	\E:=\F_{q^s},
	\qquad
	\F:=\F_{q^{s/2}}.
	\]
	The nontrivial automorphism of the quadratic extension \(\E/\F\) is
	\[
	\varrho:\E\longrightarrow\E,
	\qquad
	x\longmapsto x^{q^{s/2}}.
	\]
	Let \(F(Y),G(Y)\in\K[Y]\) be monic irreducible self-reciprocal
	polynomials of degree \(s\).
	
	Choose roots \(u_F,u_G\in\E\) of \(F\) and \(G\), respectively.  Since
	\(u_H^\varrho=u_H^{-1}\) for \(H\in\{F,G\}\), the elements
	\[
	w_H:=u_H+u_H^{-1}
	\]
	belong to \(\F\).
	
	The next two theorems are the main results of this section about the linear Hermitian equivalence between $\newH_\ell(F)$ and $\newH_\ell(G)$.
	\begin{theorem}[Short-range equivalence]
		\label{thm:Hermitian-equivalence-polynomial-condition}
		Assume that
		\[
		n>2
		\qquad\text{and}\qquad
		1\le \ell<\min\{n,ns/4\}.
		\]
		If \(\newH_\ell(F)\) and \(\newH_\ell(G)\) are linearly Hermitian
		equivalent, then there exists \(\lambda\in\F_q^\times\) such that
		\[
		F(Y)=\lambda^sG(\lambda^{-1}Y).
		\]
	\end{theorem}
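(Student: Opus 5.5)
Assume $\Phi(A)=\mu NAN^*$ maps $\newH_\ell(F)$ onto $\newH_\ell(G)$. Fix normalized representations $\mathcal M_F,\mathcal M_G$ onto the same $M_n(\E)$, each a fixed isomorphism of centers onto $\E$, and each satisfying \eqref{eq:normalized-hermitian}. Then $\Phi$ transports to a map $R_F\to R_G$ between the fixed-point spaces. Since $\varphi=\id$, Proposition~\ref{prop:lyapunov-invariant-matrix} gives $\mathcal L(\newH_\ell(G))=N\mathcal L(\newH_\ell(F))N^{-1}$. Theorem~\ref{thm:lyapunov-invariant-computation} applies because $n\ge3$ and $1\le\ell<ns/4=m/2$, so conjugation by $N$ carries $\LL+\mathcal B_F$ onto $\LL+\mathcal B_G$.

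The first goal is to recover the copy of $\LL$ intrinsically from $\LL+\mathcal B_F$. My plan is to characterize $\LL+\mathcal B_F$ as the set of its elements whose subring they generate is commutative together with $\LL$, or more simply to take the centralizer: inside $\mathcal L$, the elements of the form $x+\lambda$ with $x\in\LL$ and $\lambda$ central. I would then show that $\LL\cdot\E$ is the $\E$-span of $\mathcal L$. This span is a maximal commutative subalgebra $\LL\otimes_{\K}\E\cong\E^n$, i.e.\ a split Cartan subalgebra of diagonalizable matrices. Hence $N$ conjugates $\LL\E_F$ onto $\LL\E_G$. Next I would intersect with the set of elements $z$ satisfying $z^{q^n}=z$ and $\Theta(z)=z$, or with the elements of $\mathcal L$ that are $\Theta$-fixed, to isolate $\LL$ itself. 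I expect the outcome $N\LL N^{-1}=\LL$ as subsets of $M_n(\E)$, under the fixed identifications.

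Once $N$ normalizes $\LL$, conjugation induces a field automorphism $\sigma^k$ of $\LL$. Skolem--Noether inside $R_G\cong M_n(\E)$ then shows that $N$ corresponds to an element of the form $\beta X^k$ times an element of the centralizer of $\LL$, which is $\LL\E$. Equivalently, $N$ is a monomial-type element $c\,X^k$ with $c\in(\LL\E)^\times$, viewed through the isomorphism $R_F\cong R_G$ induced by $\Phi$. I would then use $\Phi(1)=\mu NN^*\in\newH_\ell(G)$ and the images of $Z_j(\alpha)$. The support constraint $[-\ell,\ell]$ with $\ell<n$ contains no nonzero central exponent, so the constants and the $Z_j$ must go to elements whose central parts in the $X$-expansion are trivial. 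This should force the isomorphism $R_F\to R_G$ to send $X\mapsto \gamma X$ for some $\gamma\in(\LL\E_G)^\times$, up to inversion $X\mapsto X^{-1}$, which is harmless since $F$ is self-reciprocal.

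Taking norms, $X^n\mapsto \N(\gamma)X^n$. The requirement that $\ell<n$ keep the code's coefficients in $\LL$ rather than in $\LL\E$ should force $\N_{\LL\E/\E}(\gamma)$ to lie in $\K^\times$; call it $\lambda$. Since the isomorphism carries the center $\K[u_F]$ to $\K[u_G]$ via $u_F\mapsto\lambda u_G$, the minimal polynomials satisfy $F(\lambda Y)=\lambda^s G(Y)$, which is equivalent to $F(Y)=\lambda^sG(\lambda^{-1}Y)$ after rescaling. Self-reciprocity might also force $\lambda=\pm1$, but the statement does not require this.

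The main obstacle is the middle step: proving that an $N$ conjugating $\LL+\mathcal B_F$ onto $\LL+\mathcal B_G$ must normalize $\LL$ exactly, and that the resulting semilinear map $R_F\to R_G$ has the shape $X\mapsto\gamma X^{\pm1}$ with $\N(\gamma)\in\K$. I would first attempt this via the $\Theta$-fixed part of $\mathcal L$, namely $\LL+(\mathcal B_F\cap\F)$. I would then use that $\LL$ is the unique subfield of degree $n$ over $\K$ consisting of semisimple elements in it whose $\E$-span is a Cartan subalgebra.
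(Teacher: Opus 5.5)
You have the paper's skeleton (Lyapunov invariant, then normalization of $\LL$, then the scaling relation). However, both steps that carry the content are left as expectations, and the tools you propose for them do not work as stated.

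\textbf{Isolating $\LL$.} Three of your devices fail.
\begin{enumerate}
\item The Hermitian part of $\mathcal L$ is not transported by $B\mapsto NBN^{-1}$. Indeed, $(NBN^{-1})^*=(N^*)^{-1}B^*N^*$ equals $NB^*N^{-1}$ only if $N^*N$ commutes with $B^*$, and $N$ is an arbitrary element of $\GL_n(\E)$.
\item The condition $z^{q^n}=z$ is transported, but it only cuts $\LL+\mathcal B_F$ down to $\LL+(\mathcal B_F\cap\F_{q^{\gcd(n,s)}})$. It is vacuous when $s\mid n$ (e.g.\ $n=s=4$).
\item Taking $\E$-spans does correctly give $N(\LL\E_F)N^{-1}=\LL\E_G$. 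But $\LL\otimes_\K\E\cong\F_{q^{\mathrm{lcm}(n,s)}}^{\gcd(n,s)}$, which is not $\E^n$ in general. So this pins down $\LL$ only when $\gcd(n,s)=1$, where $\LL\E$ is a field with a unique subfield of order $q^n$. When $\gcd(n,s)>1$, in particular in the even--even range, $\LL\E$ contains several subfields of order $q^n$ with full $\E$-span. Examples are images of $\LL$ under $\E$-automorphisms of $\LL\E$ that act differently on different factors. Hence your ``unique semisimple/Cartan'' characterization is false.
\end{enumerate}
The missing ingredient is Lemma~\ref{lem:subfields-inside-L-plus-E}: a subfield $W\subseteq\LL+\E$ with $\dim_\K W=n$ and $W\cap\E=\K$ equals $\LL$. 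Its proof writes $W=\{a+f(a)\}$ with $f:\LL\to\E_0$. Closure under multiplication, together with injectivity of $\E\otimes_\K\LL\to\End_\E(V)$, gives $\varphi_i(a)b+\varphi_i(b)a\in\K$ for the coordinate functionals $\varphi_i$ of $f$, and $n>2$ then forces $f=0$. Apply it to $W=N\LL N^{-1}$, which meets the center in $\K$ because conjugation fixes $\E$.

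\textbf{The final step.} Once normalization is known, the shape of the map is automatic but carries no information by itself. The induced $\E$-isomorphism $\psi:R_F\to R_G$ acts on $\LL$ as some $\sigma^k$, so $\psi(X)\in(\LL\E_G)X$; no inversion case arises. Also $C=\mu NN^*$ centralizes $\LL$, because $C$ and $\alpha C$ both lie in the $\Theta$-fixed code. However, $\N_{\LL\E/\E}$ is onto $\E^\times$, so such $\psi$ exist for every pair $F,G$. Everything therefore rests on comparing the spans $\langle X^{-\ell},\ldots,X^{\ell}\rangle_\LL$ in each graded piece $\mathcal A_r$.

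If $\ell<n-1$, your route can be completed. The pieces $\mathcal A_0$ and $\mathcal A_1$ meet $[-\ell,\ell]$ only in exponents $0$ and $1$. This gives $C\in\LL^\times$, then $\gamma\in\LL^\times$ for $\psi(X)=\gamma X$, hence $u_F=\N_{\LL/\K}(\gamma)u_G$, and your conclusion follows.

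For $\ell=n-1$ it breaks down. This value is allowed as soon as $s\ge4$ (e.g.\ $n=6$, $s=4$, $d=4$). Every nonzero residue class then contains both $r$ and $r-n$. The $\mathcal A_1$-comparison only yields $\langle1,u_F^{-1}\rangle_\LL\,\gamma=\langle1,u_G^{-1}\rangle_\LL$, which does not force $\gamma\in\LL$: it holds with $\gamma=u_F$ when $u_G=u_F^{-1}$. Your remark that $\ell<n$ keeps the coefficients in $\LL$ does not address this. The paper avoids the componentwise analysis altogether. It multiplies by $T^\ell$ to reach $\langle1,T,\ldots,T^{2\ell}\rangle_\LL$ and invokes Proposition~\ref{prop:short-power-equivalence-arbitrary-length}, which is valid for all $2\ell<ns/2$.
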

	
	\begin{theorem}[Large-range equivalence]
		\label{thm:Hermitian-equivalence-PGL2-large-ell}
		Assume that
		\[
		n>2,
		\qquad
		s>4,
		\qquad
		n<\ell<\frac{ns}{4}.
		\]
		If \(\newH_\ell(F)\) and \(\newH_\ell(G)\) are linearly Hermitian
		equivalent, then there exist \(a,b,c,d\in\K\), with \(ad-bc\ne0\), such that
		\[
		w_G=\frac{aw_F+b}{cw_F+d}.
		\]
	\end{theorem}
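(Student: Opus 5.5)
The plan is to turn a linear Hermitian equivalence into an \(\E\)-algebra isomorphism between \(R_F\) and \(R_G\), use the Lyapunov invariant (Theorem~\ref{thm:lyapunov-invariant-computation}) to pin that isomorphism down, and then read off the Möbius relation from the central elements of the codes. The range \(n<\ell\) matters because then \(X^n+X^{-n}\) lies in the code, so the code "sees" the element \(w_F\). The bound \(\ell<ns/4\) keeps all products in the symmetric window, so no reduction modulo \(F(X^n)\) occurs.

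\textbf{Step 1 (from an equivalence to an algebra map).} Let the equivalence be \(\mathcal M_G(\newH_\ell(G))=\{\mu N A N^*: A\in \mathcal M_F(\newH_\ell(F))\}\). Define the \(\E\)-algebra isomorphism \(\psi:=\mathcal M_G^{-1}\circ(\mathrm{Ad}\,N)\circ\mathcal M_F:R_F\to R_G\). Because \(\varphi=\id\), \(\psi\) is the identity on the common centre \(\E\), i.e.\ on \(\E_F\) and \(\E_G\) identified with \(\E\). Put \(g:=\mathcal M_G^{-1}(\mu NN^*)\in\Fix(\Theta_G)\). The code map becomes \(\overline a\mapsto\psi(\overline a)\,g\), and \(\psi(\Theta_F(\overline a))=g\,\Theta_G(\psi(\overline a))\,g^{-1}\). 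Since \(1\in\newH_\ell(F)\), we get \(g\in\newH_\ell(G)\).

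\textbf{Step 2 (Lyapunov invariant).} By Proposition~\ref{prop:lyapunov-invariant-matrix} and Theorem~\ref{thm:lyapunov-invariant-computation}, \(\psi(\LL+\mathcal B_F)=\LL+\mathcal B_G\). Here \(\mathcal B_F=\mathcal B_G=\{\lambda\in\E:\lambda+\varrho(\lambda)\in\K\}\) as subsets of \(\E\). The copy of \(\LL\) is recovered as follows. The \(\E\)-span of \(\LL+\mathcal B\) is the maximal commutative subalgebra \(\E\LL\). Then \(\LL\) is its unique subfield of order \(q^n\) meeting \(\E\) only in \(\K\); for \(n>2\) one may have to check this using the additive structure. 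Hence \(\psi(\LL)=\LL\) and \(\psi(\E\LL)=\E\LL\).

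So \(\psi(X)\) normalizes \(\E\LL\), inducing some \(\sigma^k\) on \(\LL\). Thus \(\psi(X)=cX^k\) with \(c\in(\E\LL)^\times\) and \(\gcd(k,n)=1\). Next use \(\psi(\overline a)g\in\newH_\ell(G)\) for \(\overline a=\alpha X+\sigma^{-1}(\alpha)X^{-1}\), together with the support bound \([-\ell,\ell]\) and \(2\ell<m\). This should force \(k=\pm1\). It should also let me replace \(\psi\) by \(\psi\circ\mathrm{Ad}(\beta)\) for a suitable \(\beta\in\E\LL\), an inner automorphism by an element commuting with \(\LL\). After this, \(g\) has no noncentral part, i.e.\ \(g\in\E\cap\Fix(\Theta_G)=\F\).

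I expect this normalization \(g\in\F\) to be the main obstacle. I would attack it by coefficient comparison as in Steps~1--3 of the proof of Theorem~\ref{thm:lyapunov-invariant-computation}, applied to the images of the \(Z_j(\alpha)\).

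\textbf{Step 3 (central part).} With \(g\in\F\) central, the code map restricted to \(\newH_\ell(F)\cap\E\) is \(\lambda\mapsto\lambda g\). For \(H\in\{F,G\}\) put \(\kappa:=\lfloor \ell/n\rfloor\ge1\). Then \(\newH_\ell(H)\cap\E=V_H\), the \(\K\)-span of \((X^{jn}+X^{-jn})\), \(0\le j\le\kappa\); this is the space \(\K[w_H]_{\le\kappa}\) of polynomials in \(w_H\) of degree \(\le\kappa\). Therefore
\[
g\cdot\K[w_F]_{\le\kappa}=\K[w_G]_{\le\kappa}\subseteq\F .
\]
Since \(\kappa<s/4\) and \([\F:\K]=s/2\), the elements \(1,w_H,\dots,w_H^{2\kappa}\) are \(\K\)-independent. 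Hence these relations can be treated as identities of rational functions. Write \(g=P_0(w_G)\) and \(w_Fg=P_1(w_G)\) with \(\deg P_i\le\kappa\), so \(w_F=P_1(w_G)/P_0(w_G)\). Symmetrically, \(w_G\) is a rational function of \(w_F\) of degree \(\le\kappa\).

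To get degree one, note that the equality of the spaces above says the rational map \(w_G\mapsto w_F\) of degree \(e\) pulls back the linear system \(|\kappa\infty|\) to a system of degree \(\le\kappa\) on \(\mathbb P^1\). Thus \(\kappa e\le\kappa\) and \(e=1\). Equivalently, \(w_F^jg\) has degree \(\le\kappa\) in \(w_G\) for all \(j\le\kappa\), which forces \(\deg P_1,\deg P_0\le1\). The hypothesis \(s>4\) is what allows \(\kappa\ge1\) inside the window, so that these comparisons are legitimate rather than wrapping around modulo the minimal polynomial of \(w_G\). Finally, invert the resulting relation \(w_F=(a'w_G+b')/(c'w_G+d')\), with \(a'd'-b'c'\neq0\) since \(w_F\notin\K\), to obtain \(w_G=(aw_F+b)/(cw_F+d)\) with \(ad-bc\ne0\).
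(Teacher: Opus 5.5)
\textbf{Main gap: the normalization \(g\in\F\) is unproved and not needed.} All of your Step~3 depends on it: you need \(g\) central so that you can restrict the code map to \(\newH_\ell(F)\cap\E\), where \(\psi\) is the identity. You leave this step open, and the mechanism you propose does not work. Replacing \(\psi\) by \(\psi\circ\mathrm{Ad}(\beta)\) with \(\beta\in\E\LL\) changes the map \(\overline a\mapsto\psi(\overline a)g\). The new map no longer sends \(\newH_\ell(F)\) onto \(\newH_\ell(G)\). To change \(g\) you would have to compose with automorphisms of the codes. The evident ones (\(\overline a\mapsto\delta\overline a\delta\) with \(\delta\in\LL^\times\), \(\K^\times\)-scaling, conjugation by powers of \(X\)) only replace \(g\) by elements such as \(\delta g\delta\), \(\lambda g\), or \(X^jgX^{-j}\). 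Nothing in your sketch shows that \(g\) can be made central.

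\textbf{How to repair Step~2 without it.} It suffices to show \(g\in\Cen(\LL)=\E\LL\), which is easy. Both \(g\) and \(\psi(\alpha)g\) lie in \(\newH_\ell(G)\subseteq\Fix(\Theta_G)\), \(\psi(\LL)=\LL\), and \(\Theta_G\) fixes \(\LL\) pointwise. Hence \(\alpha g=\Theta_G(\alpha g)=g\alpha\). Since \(\psi(\LL)=\LL\), \(\psi\) respects the decomposition into the components \(\{A:A\alpha=\sigma^r(\alpha)A\}\), and right multiplication by \(g\) does too. Incidentally, conjugation by \(\psi(X)\) induces \(\psi\sigma\psi^{-1}=\sigma\) on \(\LL\), so your \(k\) is automatically \(1\) and needs no code argument.

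\textbf{How to repair Step~3.} Intersect with the degree-zero component \(\E\LL\) instead of with \(\E\). Since \(\psi\) fixes \(\E\) and maps \(\LL\) onto \(\LL\), this gives \(\langle1,w_G,\ldots,w_G^{\kappa}\rangle_{\LL}=\langle1,w_F,\ldots,w_F^{\kappa}\rangle_{\LL}\,g\) inside \(\F\otimes_\K\LL\). This is the paper's route. Your degree-one argument is sound for \(\K\)-spans with central \(g\), but it must now be run with \(\LL\)-coefficients in the algebra \(\F\otimes_\K\LL\), which is in general not a field. Write \(w_G^i=B_i(w_F)\,g\) with \(B_i\in\LL[X]\) and \(\deg B_i\le\kappa\). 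Since \(1,w_F,\ldots,w_F^{s/2-1}\) is an \(\LL\)-basis and \(2\kappa<s/2\), the relations \(B_iB_0=B_{i-1}B_1\) hold in \(\LL[X]\). A divisibility argument then yields \(w_G=(a_1w_F+b_1)/(c_1w_F+d_1)\) with \(a_1,b_1,c_1,d_1\in\LL\). You still need a descent step. A nontrivial \(\LL\)-linear relation among \(1,w_F,w_G,w_Fw_G\in\F\) forces a nontrivial \(\K\)-linear one, and that gives the M\"obius relation over \(\K\).

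\textbf{Smaller gap: recovering \(\LL\) in Step~2.} Your claim that \(\LL\) is the unique subfield of order \(q^n\) of \(\E\LL\) meeting \(\E\) only in \(\K\) is false when \(\gcd(n,s)>1\). Take \(n=3\), \(s=6\). Then \(\E\otimes_\K\LL\cong\E^3\), with \(\E\) embedded diagonally and \(\LL\) as \(\{(x,x^q,x^{q^2}):x\in\LL\}\). But \(\{(x,x,x^q):x\in\LL\}\) is another subfield of order \(q^3\) meeting the diagonal only in \(\K\). You have to use the additive information \(\psi(\LL)\subseteq\LL+\mathcal B\subseteq\LL+\E\). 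Combine it with the fact that a subfield \(W\subseteq\LL+\E\) with \(\dim_\K W=n\) and \(W\cap\E=\K\) equals \(\LL\). The paper proves this by writing \(W\) as a graph \(a\mapsto a+f(a)\) over \(\LL\) and using \(n>2\) to force \(f=0\). With these two repairs your outline coincides in substance with the paper's proof, which carries out the same steps in the common semilinear model \(\End_\E(V)\).
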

	Since \(w_F\notin\K\), the denominator \(cw_F+d\) is nonzero for every
	nonzero pair \((c,d)\in\K^2\).  Hence the fractional-linear expression in
	Theorem \ref{thm:Hermitian-equivalence-PGL2-large-ell} is well defined.
	
	For the counting statement, let
	\[
	\mathcal I^{\mathrm{sr}}_{q,s}
	:=
	\left\{
	H(Y)\in\F_q[Y]:
	H\text{ is monic, irreducible, self-reciprocal, and }\deg(H)=s
	\right\},
	\]
	and write
	\[
	N_q(s):=\bigl|\mathcal I^{\mathrm{sr}}_{q,s}\bigr|.
	\]
	For an integer \(d\) with \(1\le d\le n\), set
	\begin{equation}\label{eq:ell-d-main-results}
		\ell_d:=\frac{s(n-d+1)-2}{2}.
	\end{equation}
	Thus \(\newH_{\ell_d}(H)\) is the maximum additive Hermitian \(d\)-code
	constructed in Theorem~\ref{th:new-construction}.
	For the quadratic case, i.e.,~$s=2$, put
	\[
	L_2(q):=
	\begin{cases}
	\dfrac q2, & q\textup{ even},\\[2ex]
	\dfrac{q-1}{4}, & q\equiv1\pmod4,\\[2ex]
	\dfrac{q+1}{4}, & q\equiv3\pmod4.
	\end{cases}
	\]
	
	\begin{corollary}[Hermitian enumerative bounds]
		\label{cor:enumerative-Hermitian-combined}
		\label{cor:enumerative-s-equal-2-semilinear}
		\label{cor:enumerative-semilinear-s-geq-4-short}
		\label{cor:enumerative-semilinear-large-ell-PGL2}
		Write \(q=p^h\) and assume that \(n>2\).  For each even integer \(s\ge2\), let
		\[
			\Lambda_{s,d}:= \left\{
			\newH_{\ell_d}(H):H\in\mathcal I^{\mathrm{sr}}_{q,s}
			\right\}.
		\]
		The following lower bounds hold.
		\begin{enumerate}[label=\textup{(\roman*)},leftmargin=*]
			\item If \(s=2\) and \(n/2<d<n\), then $\Lambda_{s,d}$ contains at least 
			\[ 
			\left\lceil\frac{L_2(q)}{2h}\right\rceil
			\]
			pairwise Hermitian inequivalent maximum additive Hermitian \(d\)-codes in
			\(\mathcal H_n(\F_{q^2}/\F_q)\).
			
			\item If \(s\ge4\) is even and \(\ell_d<n\), then $\Lambda_{s,d}$
			contains at least
			\[
			\left\lceil
			\frac{N_q(s)}{sh\,\gcd(s,q-1)}
			\right\rceil
			\]
			pairwise Hermitian inequivalent maximum additive Hermitian \(d\)-codes in
			\(\mathcal H_n(\F_{q^s}/\F_{q^{s/2}})\).
			
			\item If \(s>4\) is even and
			\[
			n<\ell_d<\frac{ns}{4},
			\]
			then $\Lambda_{s,d}$
			contains at least
			\[
			\left\lceil
			\frac{N_q(s)}{sh\,q(q^2-1)}
			\right\rceil
			\]
			pairwise Hermitian inequivalent maximum additive Hermitian \(d\)-codes in
			\(\mathcal H_n(\F_{q^s}/\F_{q^{s/2}})\).
		\end{enumerate}
	\end{corollary}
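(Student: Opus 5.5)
The plan is to deduce the three bounds from Theorems~\ref{thm:Hermitian-equivalence-polynomial-condition} and~\ref{thm:Hermitian-equivalence-PGL2-large-ell}, upgraded to full semilinear equivalence, by an orbit-counting argument on $\mathcal I^{\mathrm{sr}}_{q,s}$.

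\emph{Reduction to linear equivalence.} Suppose $\Phi_{\mu,N,\varphi}$ maps $\newH_\ell(F)$ onto $\newH_\ell(G)$. Here $\varphi\in\Aut(\F_{q^s})$, and there are $sh$ such automorphisms. The automorphism $\varphi$ restricts to some $\tau\in\Aut(\F_q)$. Applying $\tau$ to the coefficients of $F$ transports the whole construction: the ring map $R\to R$ that applies a lift of $\tau$ to $\LL$ coefficientwise commutes with $\sigma$ and carries $RF(X^n)$ to $RF^\tau(X^n)$. It should therefore identify $\newH_\ell(F)^\varphi$, up to a linear Hermitian equivalence, with $\newH_\ell(F^\tau)$. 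Consequently $\newH_\ell(F^\tau)$ and $\newH_\ell(G)$ are linearly equivalent, so the two theorems apply with $F$ replaced by $F^\tau$ (respectively $w_F$ replaced by $\varphi(w_F)$). This is exactly the semilinear version stated in the introduction.

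\emph{Orbit counting in case (ii).} Let $s\ge4$ and $\ell_d<n$. Since $\ell_d<n\le ns/4$, Theorem~\ref{thm:Hermitian-equivalence-polynomial-condition} applies. Any $G$ equivalent to $F$ has the form $G(Y)=\lambda^{-s}F^\tau(\lambda Y)$ with $\lambda\in\F_q^\times$ and $\tau$ one of at most $h$ automorphisms coming from at most $sh$ choices of $\varphi$.
\begin{itemize}
\item Scaling by $\lambda$ preserves self-reciprocity only when $\lambda^2=1$, but the bound does not need this.
\item The number of distinct polynomials $\lambda^{-s}F(\lambda Y)$ is at most $|\F_q^\times/\{\lambda:\lambda^s=1\text{ acting trivially}\}|$.
\item More simply, I count via roots. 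The roots of $\lambda^{-s}F(\lambda Y)$ are $\lambda^{-1}u_F$, so the polynomial depends on $\lambda$ only modulo $\{\lambda:\lambda^{-1}u_F\text{ conjugate to }u_F\}$.
\end{itemize}
I will bound each class size by $sh\gcd(s,q-1)$, one factor $sh$ for $\varphi$ and one factor $\gcd(s,q-1)$ for the admissible $\lambda$. For the latter, self-reciprocity of both polynomials forces $\lambda^{-1}u_F=u_F^{\pm q^k}$. Taking norms to $\F_q$ then gives $\lambda^{s}=1$, and the number of such $\lambda$ is $\gcd(s,q-1)$. Dividing $N_q(s)$ by the class size and taking the ceiling gives (ii).

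\emph{Orbit counting in case (iii).} Here $w_G=\gamma(\varphi(w_F))$ for some $\gamma\in\mathrm{PGL}_2(\F_q)$, a group of order $q(q^2-1)$. The element $w_H$ determines $H$, because $u_H$ is a root of $Y^2-w_HY+1$, and $H$ is then the minimal polynomial of $u_H$. The choice of root $u_H$ only moves $w_H$ within its Galois orbit, and that is absorbed into $\varphi$. Hence each class has size at most $sh\,q(q^2-1)$.

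\emph{Case (i), $s=2$.} This is the delicate case. For $n/2<d<n$ we have $\ell_d=n-d<n/2=ns/4$ and $\ell_d\ge1$, so Theorem~\ref{thm:Hermitian-equivalence-polynomial-condition} applies. A self-reciprocal irreducible quadratic has the form $Y^2-wY+1$ with $w\in\F_q$ and $Y^2-wY+1$ irreducible. For $\lambda\in\F_q^\times$,
\[
\lambda^2G(\lambda^{-1}Y)=Y^2-\lambda w_GY+\lambda^2 .
\]
Self-reciprocity of this polynomial forces $\lambda^2=1$, so $\lambda=\pm1$ and $w_F=\pm\tau(w_G)$. The number of admissible $w$ is $N_q(2)$, which equals $(q-1)/2$ for $q$ odd by Carlitz's formula \eqref{eq:Carlitz-formula}, and $q/2$ for $q$ even. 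This count has to be reconciled with $L_2(q)$.
\begin{itemize}
\item For $q$ even, $-1=1$, so only $\tau$ acts and the class size is at most $h$. This would give $q/(2h)$; the stated $L_2(q)/(2h)$ with $L_2(q)=q/2$ corresponds to a class bound of $2h$, which I will accept as a safe over-count coming from the $2h$ automorphisms of $\F_{q^2}$.
\item For $q$ odd, the pairs $\{w,-w\}$ are swapped by $\lambda=-1$. The count $L_2(q)$, equal to $(q\mp1)/4$, should be the number of such pairs among irreducible $w$. I need $w\mapsto -w$ to preserve irreducibility, i.e.\ the discriminant $w^2-4$ must be a nonsquare; since it depends only on $w^2$, this holds.
\end{itemize}
Thus $L_2(q)$ is essentially $N_q(2)$ divided by $2$, with the floor or ceiling determined by $q\bmod4$ according to whether $w=0$ is admissible. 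Dividing by the $2h$ field automorphisms then gives the bound in (i).

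\emph{Main obstacle.} The main obstacle I expect is the semilinear reduction step: showing rigorously that $\newH_\ell(F)^\varphi$ is linearly equivalent to $\newH_\ell(F^\tau)$, in a way compatible with the normalized embeddings $\mathcal M_F$. After that, the remaining work is careful orbit-size bookkeeping, especially the exact count giving $L_2(q)$.
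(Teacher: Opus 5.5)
\textbf{Comparison with the paper's proof.} Your route differs from the paper's, and the step you flag as the main obstacle is exactly the one the paper avoids. The paper never proves a semilinear version of Theorems~\ref{thm:Hermitian-equivalence-polynomial-condition} and~\ref{thm:Hermitian-equivalence-PGL2-large-ell}. It only uses the identity $\Phi_{\mu,N,\varphi}(\mathcal C)=\Phi_{\mu,N,\id}(\mathcal C^{\varphi})$, where $\mathcal C^\varphi=\{A^\varphi:A\in\mathcal C\}$. So any code fully Hermitian equivalent to $\mathcal C$ is linearly Hermitian equivalent to one of the $|\Aut(\E)|$ codes $\mathcal C^\varphi$. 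In other words, each full class is a union of at most $sh$ (resp.\ $2h$) linear classes. Whether $\mathcal C^\varphi$ lies in $\Lambda_{s,d}$ is irrelevant. A family of $M$ pairwise linearly inequivalent codes meets each full class in at most $sh$ (resp.\ $2h$) members. Parts (i)--(iii) then follow by dividing the linear counts of Corollary~\ref{cor:enumerative-s-equal-2-linear}, Theorem~\ref{thm:enumerative-linear-s-geq-4-short} and Theorem~\ref{thm:enumerative-linear-large-ell-PGL2}, and taking ceilings. Since the stated bounds already pay the full factor $sh$ (resp.\ $2h$), nothing about how $\varphi$ acts on $F$ is needed.

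\textbf{The unproved step.} As written, your proposal rests on the assertion that $\newH_\ell(F)^\varphi$ is linearly equivalent to $\newH_\ell(F^\tau)$, which you do not prove. So it is incomplete, although the assertion is true and can be filled in as follows.
\begin{enumerate}
\item Lift $\tau$ to the corresponding Frobenius power $\hat\tau$ of $\LL$, which commutes with $\sigma$.
\item Applying $\hat\tau$ coefficientwise gives a ring isomorphism $R_F\to R_{F^\tau}$. It carries $\newH_\ell(F)$ onto $\newH_\ell(F^\tau)$ and intertwines the two involutions.
\item The map $\overline a\mapsto\mathcal M_F(\hat\tau^{-1}(\overline a))^{\varphi}$ is then a normalized representation of $R_{F^\tau}$ with $X^n\mapsto\varphi(u_F)$.
\item Two normalized representations with the same root differ by a unitary similitude, exactly as in the proof of Proposition~\ref{prop:skew-vs-semilinear-linear}. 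A different root choice costs one extra Galois automorphism, which your count absorbs.
\end{enumerate}
With this lemma your route is a legitimate alternative. It would actually establish the semilinear conditions stated in the introduction, which the body of the paper never proves. It could also sharpen the constants, since only the $h$ automorphisms of $\F_q$ act on the polynomial in (i) and (ii). None of this is required for the corollary as stated, though.

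\textbf{A false intermediate claim.} In case (ii) you say self-reciprocity forces $\lambda^{-1}u_F=u_F^{\pm q^k}$. This is false. For $q=5$, $F=Y^2-Y+1$ and $\lambda=-1$, the polynomial $G=F(-Y)=Y^2+Y+1$ is irreducible and self-reciprocal, but its root $-u_F$ is not a conjugate of $u_F^{\pm1}$. The relation $\lambda^s=1$ should instead come from comparing constant terms, using $F(0)=G(0)=1$, as in the proof of Theorem~\ref{thm:enumerative-linear-s-geq-4-short}. Your final count is unaffected.
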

	
	Since \(N_q(s)=\Theta(q^{s/2})\) as $q\rightarrow \infty$, the three parts have orders at least
	\(q/h\), \(q^{s/2}/h\), and \(q^{s/2-3}/h\), respectively, for fixed
	admissible \(s\).  The sharper linearly Hermitian bounds used to prove this
	corollary are stated later in Subsection~\ref{subsec:enumerative-consequences},
	where the relevant parameter-orbit arguments occur.
		
	\subsection{Proof strategy}
	For \(H\in\{F,G\}\), put
	\[
	R_H:=R/RH(X^n).
	\]
	For \(F\neq G\), the codes \(\newH_\ell(F)\subseteq R_F\) and
	\(\newH_\ell(G)\subseteq R_G\) lie in different quotient algebras, so a
	direct comparison is inconvenient.  To place them in a common setting,
	Theorem~\ref{thm:external-semilinear-realization} embeds each quotient
	\(R_H\), for \(H\in\{F,G\}\), into \(\End_{\E}(V)\).  More precisely, it
	provides an irreducible
	\(\sigma\)-semilinear operator \(T_H\) satisfying
	\(T_H^n=u_H\id_V\) and an \(\E\)-algebra isomorphism
	\(\Psi_H:R_H\to\End_\E(V)\) that sends the class of \(X\) to \(T_H\). 
	
	We write
	\[
	\mathfrak C_\ell(H):=\Psi_H(\newH_\ell(H))
	\qquad\text{and}\qquad
	\mathcal B:=\{\lambda\in\E:\lambda+\lambda^\varrho\in\K\}.
	\]
	The involutions induced by \(F\) and \(G\) on the common endomorphism
	algebra need not coincide.  Thus a linear Hermitian equivalence does
	not immediately have the form of a single Hermitian congruence inside
	\(\End_\E(V)\).  Instead, transporting the equivalence through
	separate identifications with the standard Hermitian matrix space
	produces an \(\E\)-algebra automorphism
	\[
	\eta\in\Aut_\E\bigl(\End_\E(V)\bigr)
	\]
	and an invertible element \(C\in\End_\E(V)^\times\) such that
	\[
	\mathfrak C_\ell(G)
	=
	\eta\bigl(\mathfrak C_\ell(F)\bigr)C.
	\]
	The Lyapunov invariant then gives
	\[
	\eta(\LL+\mathcal B)=\LL+\mathcal B.
	\]

	The proofs of both main equivalence theorems begin with the same chain of
	reductions:
	\[
	\begin{tikzcd}[row sep=large]
		\text{linear Hermitian equivalence of }
		\newH_\ell(F)\text{ and }\newH_\ell(G)
		\arrow[d, "\text{common semilinear realization}"]\\
		\text{linear Hermitian equivalence of }
		\mathfrak C_\ell(F)\text{ and }\mathfrak C_\ell(G)
		\arrow[d, "\text{Lyapunov invariance}"]\\
		Q(\LL+\mathcal B)Q^{-1}=\LL+\mathcal B
		\arrow[d, "\text{field normalization}"]\\
		Q\LL Q^{-1}=\LL.
	\end{tikzcd}
	\]
	The semilinear realization is established in
	Theorem~\ref{thm:external-semilinear-realization} and
	Proposition~\ref{prop:skew-vs-semilinear-linear}.  The Lyapunov computation
	gives \(\mathcal L(\mathfrak C_\ell(H))=\LL+\mathcal B\), while
	Theorem~\ref{thm:stabilizer-L-plus-U} shows that stabilizing
	\(\LL+\mathcal B\) already forces normalization of \(\LL\).
	
	After this common reduction, the proof divides according to the value of
	\(\ell\):
	\[
	\begin{tikzcd}[column sep=large,row sep=large]
		1\le\ell<\min\{n,ns/4\}
		\arrow[d, "\LL\text{-linear span}"]
		&
		n<\ell<ns/4
		\arrow[d, "\text{degree-zero component}"]\\
		\left\langle T_H^{-\ell},\ldots,1,\ldots,T_H^\ell\right\rangle_{\LL}
		\arrow[d, "\text{short power-space classification}"]
		&
		\left\langle1,w_H,\ldots,w_H^{\lfloor\ell/n\rfloor}\right\rangle_{\LL}
		\arrow[d, "\text{central interval determination}"]\\
		F(Y)=\lambda^sG(\lambda^{-1}Y)
		&
		w_G=\frac{aw_F+b}{cw_F+d}
	\end{tikzcd}
	\]
	In the short range, Lemma~\ref{lem:L-span-Hermitian-band-short} recovers the
	full interval of powers and
	Proposition~\ref{prop:short-power-equivalence-arbitrary-length} yields the
	scaling relation in
	Theorem~\ref{thm:Hermitian-equivalence-polynomial-condition}.  In the large
	range, Lemmas~\ref{lem:central-component-Hermitian-band} and~\ref{lem:central-interval-determination} recover the fractional-linear relation
	between \(w_F\) and \(w_G\), proving
	Theorem~\ref{thm:Hermitian-equivalence-PGL2-large-ell}.
	
	Finally, the linearly Hermitian enumeration results follow by bounding the
	size of the relevant parameter orbits: the quadratic scaling orbit is governed
	by \(c\mapsto-c\) on \(Y^2+cY+1\), the short-range scaling orbit has size at
	most \(\gcd(s,q-1)\), and a self-reciprocal
	\(\mathrm{PGL}_2(\F_q)\)-orbit has size at most \(q(q^2-1)\).  Passing from
	linear to full Hermitian equivalence introduces at most
	\(|\Aut(\F_{q^s})|=sh\) further identifications (and \(2h\) when \(s=2\)).
	This gives the three parts of
	Corollary~\ref{cor:enumerative-Hermitian-combined}.

	\subsection{A common semilinear realization}
	Although \(\newH_\ell(F)\subseteq R_F\) and
	\(\newH_\ell(G)\subseteq R_G\) lie in different central simple algebras, the
	following construction places both codes in a single endomorphism algebra.
	We first record the required semilinear terminology.
	
	Let \(V\) be a finite-dimensional
	\(\LL\)-vector space. For \(\alpha\in\LL\), we denote by
	\[
	m_\alpha:V\longrightarrow V,
	\qquad
	v\longmapsto \alpha v,
	\]
	the \(\LL\)-linear scalar multiplication by \(\alpha\).  A map
	$
	T:V\longrightarrow V
	$
	is called \textbf{\(\sigma\)-semilinear} if it is additive and
	\[
	T(\alpha v)=\sigma(\alpha)T(v)
	\qquad
	\text{for every }\alpha\in\LL,\ v\in V.
	\]
	In this paper, semilinear operators are always assumed to be bijective. If \(T\) is \(\sigma\)-semilinear, then \(T^i\) is
	\(\sigma^i\)-semilinear for every \(i\geq 0\).  Since \(\sigma\) has order
	\(n\), the operator \(T^n\) is \(\LL\)-linear.  Moreover,
	$
	T^nT=TT^n$
	and
	$
	T^n m_\alpha=m_\alpha T^n$,
	for every $\alpha\in\LL$.  Hence
	\(\K[T^n]\) is a central subalgebra of the \(\K\)-algebra generated by
	$
	m_\alpha$ and 
	$T$.
	This is the semilinear counterpart of the fact that \(X^n\) is central in
	the skew-polynomial ring \(\LL[X;\sigma]\).
	
	An \(\LL\)-subspace \(W\leq V\) is said to be \textbf{\(T\)-invariant} if
	$
	T(W)\subseteq W$.
	Since \(T\) is bijective, this is equivalent to \(T(W)=W\).  We say that
	\(T\) is \textbf{irreducible} if the only \(T\)-invariant \(\LL\)-subspaces of \(V\)
	are
	$
	\{0\}$ and $V$.
	
	Set
	\[
	r:=\gcd(n,s),
	\qquad
	n_0:=\frac nr,
	\qquad
	s_0:=\frac sr,
	\]
	and put
	\[
	\Omega:=\F_{q^{ns/r}}.
	\]
	We shall work on the fixed additive group
	\[
	V:=\Omega^r.
	\]
	We endow \(V\) with the following \(\LL\)-vector space structure:
	\[
	\alpha\cdot (x_0,\ldots,x_{r-1})
	=
	\bigl(
	\alpha x_0,\sigma(\alpha)x_1,\ldots,\sigma^{r-1}(\alpha)x_{r-1}
	\bigr),
	\qquad \alpha\in\LL.
	\]
	Then
	\[
	\dim_{\LL}V
	=
	r[\Omega:\LL]
	=
	rs_0
	=
	s.
	\]
	
	Now let \(H(Y)\neq Y\) be a monic irreducible polynomial in \(\K[Y]\) of
	degree \(s\), and choose a root
	$
	u_H\in\E=\F_{q^s}$
	of \(H\).  For \(u\in\E\subseteq\Omega\), it is clear that
	\[
	u\id_V:V\longrightarrow V \qquad (x_0,\ldots,x_{r-1})
	\longmapsto
	(ux_0,\ldots,ux_{r-1}).
	\]
	The important point is that the underlying \(\LL\)-vector space \(V\) is fixed
	independently of the defining polynomial \(H\).  Moreover,
	if \(T_H\) is chosen so that
	$
	T_H^n=u_H\id_V,$
	then
	\[
	\E_{T_H}:=\K[T_H^n]=\K[u_H\id_V]
	\]
	is a field of endomorphisms of \(V\), naturally isomorphic to
	\[
	\K(u_H)=\E.
	\]
	Thus \(V\) becomes a vector space over \(\E_{T_H}\), with
	$
	\dim_{\E_{T_H}}V=n.
	$
	
	The following result is proved in \cite[Theorem 3.6]{santonastaso2026autotopism}.
	\begin{theorem}
		\label{thm:external-semilinear-realization}
		Let \(H(Y)\neq Y\) be a monic irreducible polynomial in \(\K[Y]\) of degree
		\(s\), and let \(u_H\in\E\) be a root of \(H\).  Then there exists an
		irreducible \(\sigma\)-semilinear operator
		\[
		T_H:V\longrightarrow V
		\]
		on the \(s\)-dimensional \(\LL\)-vector space \(V=\Omega^r\) such that
		$
		T_H^n=u_H\id_V$. Moreover, the map
		\[
		\Psi_H:R_H\longrightarrow \End_{\E_{T_H}}(V)
		\]
		defined by
		\[
		\Psi_H\left(
		\sum_{i=0}^{ns-1}\alpha_iX^i+RH(X^n)
		\right)
		=
		\sum_{i=0}^{ns-1}m_{\alpha_i}T_H^i,
		\qquad
		\alpha_i\in\LL,
		\]
		is an \(\E\)-algebra isomorphism and an \(\LL\)-vector-space isomorphism.
	\end{theorem}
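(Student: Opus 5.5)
The plan is to avoid writing \(T_H\) by an explicit formula. Instead, I will take a simple left \(R_H\)-module, let \(X\) act on it, and then identify that module with the fixed space \(V=\Omega^r\) in a way that respects both the \(\LL\)-action and the \(\E\)-action. Here \(\E\) acts on \(V\) by coordinatewise multiplication in \(\Omega\).

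\textbf{Step 1 (a simple module).} By \eqref{eq:artin}, \(R_H\cong M_n(\E_H)\), where \(\E_H=Z(R_H)\cong\K[Y]/(H)\). Identify \(\E_H\) with \(\E=\K(u_H)\) by sending the class of \(X^n\) to \(u_H\). Let \(M\) be a simple left \(R_H\)-module, for instance a column space \(\E^n\). Then \(\dim_\E M=n\), so \(\dim_\K M=ns\) and \(\dim_\LL M=s\). Define \(T\colon M\to M\) as the action of \(X+RH(X^n)\). The rule \(X\alpha=\sigma(\alpha)X\) makes \(T\) \(\sigma\)-semilinear, and \(T\) is bijective because \(X\) is invertible in \(R_H\). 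By construction \(T^n=u_H\,\mathrm{id}_M\).

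\textbf{Step 2 (irreducibility and the isomorphism \(\Psi_H\)).} The ring \(R_H\) is generated by \(\LL\) and \(X^{\pm1}\). Hence a \(T\)-invariant \(\LL\)-subspace of \(M\) is an \(R_H\)-submodule, and simplicity of \(M\) shows that \(T\) is irreducible. The map \(\sum\alpha_iX^i\mapsto\sum m_{\alpha_i}T^i\) is the representation map \(R_H\to\End_\E(M)\), so it is an \(\E\)-algebra homomorphism. It is injective because \(R_H\) is simple. Both sides have \(\E\)-dimension \(n^2\), so it is an isomorphism. It is also \(\LL\)-linear for left multiplication.

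\textbf{Step 3 (transport to \(V=\Omega^r\)).} This is the main point. \(M\) is a module over the commutative ring \(A:=\LL\otimes_\K\E\). Since \(\gcd(n,s)=r\) and \(\Omega\) is the compositum of \(\LL\) and \(\E\), there is an isomorphism
\[
A\cong\Omega^r,\qquad \alpha\otimes e\mapsto\bigl(\sigma^i(\alpha)e\bigr)_{0\le i<r}.
\]
Under this isomorphism the prescribed \(\LL\)- and \(\E\)-structures on \(V\) are exactly those of the free \(A\)-module of rank one. Let \(\epsilon_0,\dots,\epsilon_{r-1}\) be the primitive idempotents of \(A\). Then \(M=\bigoplus_i\epsilon_iM\). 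The operator \(T\) is \(\sigma\otimes\mathrm{id}\)-semilinear, since it commutes with \(\E\). Hence \(T\) permutes the idempotents cyclically, sending \(\epsilon_iM\) onto \(\epsilon_{i\pm1}M\), and the \(r\) summands have the same \(\K\)-dimension. That common dimension is \(ns/r=[\Omega:\K]\), so each \(\epsilon_iM\) is a one-dimensional \(\Omega\)-space. Therefore \(M\cong A\cong V\) as \(A\)-modules.

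\textbf{Step 4 (conclusion).} Transport \(T\) along this \(A\)-isomorphism to obtain \(T_H\) on \(V\). Semilinearity, irreducibility and \(T_H^n=u_H\,\mathrm{id}_V\) carry over, with \(u_H\) now acting by coordinatewise multiplication in \(\Omega\). \(\Psi_H\) remains an \(\E\)-algebra and \(\LL\)-space isomorphism onto \(\End_{\E_{T_H}}(V)\).

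The main obstacle is Step 3: identifying the simple module with the rigidly given twisted structure on \(\Omega^r\). The key inputs are the cyclic permutation of the idempotents of \(\LL\otimes\E\) by \(T\) and the resulting equal-dimension count.
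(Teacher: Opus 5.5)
Your proof is correct. It is also genuinely independent of the paper: the paper gives no argument for this statement and simply imports it from \cite[Theorem~3.6]{santonastaso2026autotopism}.

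Your route is module-theoretic. You realize $T_H$ as the action of $X$ on a simple $R_H$-module $M$. Then $\sigma$-semilinearity, $T_H^n=u_H$, irreducibility and bijectivity of $\Psi_H$ all follow from the simplicity of $R_H\cong M_n(\E)$. The only real content is identifying $M$ with $V$ as modules over $A=\LL\otimes_\K\E$. This approach buys two things:
\begin{itemize}
\item It explains why $\Omega^r$ with the twisted $\LL$-action is the right model: it is the regular $A$-module.
\item It shows at no extra cost that $T_H$ is unique up to conjugation by $A^\times=\Aut_A(V)$, because simple $R_H$-modules are unique up to isomorphism.
\end{itemize}
A direct construction would instead write $T_H$ down, for instance $(x_0,\dots,x_{r-1})\mapsto(x_1,\dots,x_{r-1},\mu\tau(x_0))$ with $\tau\in\Gal(\Omega/\E)$, $\tau|_\LL=\sigma^r$, and $\N_{\Omega/\E}(\mu)=u_H$. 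That is more concrete, but irreducibility and bijectivity of $\Psi_H$ then have to be checked separately.

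Your Step~3 states two facts without proof, and each deserves a line.
\begin{enumerate}
\item The map $\alpha\otimes e\mapsto(\sigma^i(\alpha)e)_{0\le i<r}$ is an isomorphism $A\to\Omega^r$. Its $r$ components are surjective and have pairwise distinct kernels: equal kernels would give an automorphism of $\Omega$ fixing $\E$ (hence $\LL\cap\E=\F_{q^r}$) and restricting to $\sigma^{j-i}$ on $\LL$, which forces $r\mid j-i$.
\item Under this isomorphism, $\sigma\otimes\id$ becomes $(y_0,\dots,y_{r-1})\mapsto(y_1,\dots,y_{r-1},\tau(y_0))$. This is exactly why $T$ permutes the summands $\epsilon_iM$ in a single $r$-cycle, which your dimension count needs.
\end{enumerate}
Alternatively, you can avoid using $T$ in Step~3 altogether. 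As left $A$-modules, $R_H=\bigoplus_{i<n}AX^i\cong A^n$ and also $R_H\cong M^n$. Comparing multiplicities over the semisimple ring $A$ then gives $M\cong A$.
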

	
	In what follows, we shall write \(\alpha_iT_H^i\) for
	\(m_{\alpha_i}T_H^i\), by abuse of notation.
	
	For \(H\in\{F,G\}\), fix a root \(u_H\in\E\) of \(H\) and a map $\Psi_H$ as in \Cref{thm:external-semilinear-realization}.  Since \(H\) is
	self-reciprocal of even degree, one has
	$
	u_H^\varrho=u_H^{-1}$.
	
	Define
	\[
	\mathfrak C_\ell(H):=\Psi_H(\newH_\ell(H))
	\subseteq\End_\E(V).
	\]
	Then
	\begin{equation}\label{eq:semilinear-Hermitian-band}
		\mathfrak C_\ell(H)
		=
		\left\{
		\alpha_0+
		\sum_{i=1}^{\ell}
		\left(
		\sigma^{-i}(\alpha_i)T_H^{-i}
		+
		\alpha_iT_H^i
		\right)
		:
		\alpha_0,\ldots,\alpha_\ell\in\LL
		\right\}.
	\end{equation}
	
	Let
	\[
	\Theta_H:=\Psi_H\circ\Theta\circ\Psi_H^{-1}.
	\]
	Then \(\Theta_H\) is an anti-automorphism of \(\End_\E(V)\), and
	\begin{equation}\label{eq:thetaH-generators}
		\Theta_H(\alpha)=\alpha
		\quad\text{for every }\alpha\in\LL,
		\qquad
		\Theta_H(T_H)=T_H^{-1}.
	\end{equation}
	Moreover, \(\Theta_H\) restricts to \(\varrho\) on the center \(\E\id_V\).
	Indeed,
	\[
	\Theta_H(u_H\id_V)
	=
	\Theta_H(T_H^n)
	=
	T_H^{-n}
	=
	u_H^{-1}\id_V
	=
	u_H^\varrho\id_V.
	\]
	Thus \(\Theta_H\) is a unitary involution of \(\End_\E(V)\), and
	\(\mathfrak C_\ell(H)\subseteq\Fix(\Theta_H)\).

    We next show that linear
	Hermitian equivalence is preserved when one passes from the skew-polynomial
	model to the common semilinear model.  This justifies working from now on in
	\(\End_\E(V)\).
		Let
		\[
		\mathcal M_H:R_H\stackrel{\sim}{\longrightarrow}M_n(\E),
		\qquad H\in\{F,G\},
		\]
		be \(\E\)-algebra isomorphisms identifying the unitary involution on \(R_H\)
		with the standard Hermitian adjoint
		$
		A\mapsto A^*:=(A^\varrho)^\top
		$
		on \(M_n(\E)\). Likewise, let
		\[
		\mathcal N_H:\End_\E(V)\stackrel{\sim}{\longrightarrow}M_n(\E),
		\qquad H\in\{F,G\},
		\]
		be \(\E\)-algebra isomorphisms identifying the unitary involution
		\(\Theta_H\) on \(\End_\E(V)\) with the same adjoint \(A\mapsto A^*\).
		For convenience, we say that 
		\begin{itemize}
			\item two codes \(\newH_\ell(F)\) and \(\newH_\ell(G)\) are linearly Hermitian
			equivalent if $\mathcal{M}_F(\newH_\ell(F))$ and $\mathcal{M}_G(\newH_\ell(G))$ are linearly Hermitian
			equivalent;
			\item the codes \(\mathfrak C_\ell(F)\) and
			\(\mathfrak C_\ell(G)\) are linearly Hermitian equivalent if $\mathcal N_F(\mathfrak C_\ell(F))$ and $\mathcal N_G(\mathfrak C_\ell(G))$ are linearly Hermitian
			equivalent.
		\end{itemize}

	\begin{proposition}\label{prop:skew-vs-semilinear-linear}
		The codes \(\newH_\ell(F)\) and \(\newH_\ell(G)\) are linearly Hermitian
		equivalent if and only if the codes \(\mathfrak C_\ell(F)\) and
		\(\mathfrak C_\ell(G)\) are linearly Hermitian equivalent.
	\end{proposition}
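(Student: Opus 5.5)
The plan is to show that the two matrix models of each code differ only by a $*$-compatible change of basis. Concretely, for each $H\in\{F,G\}$ I compare $\mathcal M_H(\newH_\ell(H))$ with $\mathcal N_H(\mathfrak C_\ell(H))$ and prove they are linearly Hermitian equivalent. Linear Hermitian equivalence is an equivalence relation on subsets of $\mathcal H_n(\E/\F)$: it is closed under composition and inverses, since $\Phi_{\mu,N,\id}\circ\Phi_{\mu',N',\id}=\Phi_{\mu\mu',NN',\id}$ and $\Phi_{\mu,N,\id}^{-1}=\Phi_{\mu^{-1},N^{-1},\id}$. The proposition then follows by transitivity. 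If $\mathcal M_F(\newH_\ell(F))\sim\mathcal M_G(\newH_\ell(G))$, then
\[
\mathcal N_F(\mathfrak C_\ell(F))\sim\mathcal M_F(\newH_\ell(F))\sim\mathcal M_G(\newH_\ell(G))\sim\mathcal N_G(\mathfrak C_\ell(G)),
\]
and the converse implication is the same chain read backwards.

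For the comparison, fix $H$ and consider $\Xi:=\mathcal N_H\circ\Psi_H\circ\mathcal M_H^{-1}$, which is an $\E$-algebra automorphism of $M_n(\E)$. Here I use that $\Psi_H$ is an $\E$-algebra isomorphism by Theorem~\ref{thm:external-semilinear-realization}. This requires the identification of $\E_F=Z(R_H)$ with $\E=\K[u_H\id_V]$ via $X^n\mapsto T_H^n=u_H\id_V$. I would fix $\mathcal M_H$ to be $\E$-linear for this same identification, so that $\Xi$ really is $\E$-linear. By Skolem--Noether there is $P\in\GL_n(\E)$ with $\Xi(A)=PAP^{-1}$ for all $A$.

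Next I check compatibility with the involutions. By definition $\Theta_H=\Psi_H\Theta\Psi_H^{-1}$, and both $\mathcal M_H$ and $\mathcal N_H$ intertwine their involutions with $*$. It follows that $\Xi(A^*)=\Xi(A)^*$ for all $A\in M_n(\E)$, that is, $PA^*P^{-1}=(P^{-1})^*A^*P^*$. Hence $P^*P$ commutes with every matrix, so $P^*P=c I_n$ with $c\in\E^\times$. Since $P^*P$ is Hermitian, $c=c^\varrho$, so $c\in\F^\times$. Because the norm $\E^\times\to\F^\times$ is surjective for finite fields, I can write $c=\nu\nu^\varrho$ with $\nu\in\E^\times$. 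Replacing $P$ by $\nu^{-1}P$ gives $P^*P=I_n$, so $P$ is unitary and $P^{-1}=P^*$.

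On Hermitian elements the map is therefore $A\mapsto PAP^*$, which is $\Phi_{1,P,\id}$. Since $\Psi_H(\newH_\ell(H))=\mathfrak C_\ell(H)$ by definition,
\[
\mathcal N_H(\mathfrak C_\ell(H))=\Xi\bigl(\mathcal M_H(\newH_\ell(H))\bigr)=P\,\mathcal M_H(\newH_\ell(H))\,P^*.
\]
This is the required linear Hermitian equivalence, and the proposition follows by the transitivity argument above.

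The main obstacle is the bookkeeping of centers. $R_F$ and $R_G$ have abstractly different centers $\E_F$ and $\E_G$, and each must be identified with the single field $\E\subseteq\End(V)$ through its own root $u_H$. The point to verify is that the restriction of $\Theta$ to the center, namely $\varrho$, corresponds under this identification to $x\mapsto x^{q^{s/2}}$ on $\E$. This holds because $u_H\mapsto u_H^{-1}=u_H^\varrho$, so the adjoint $*$ is the same on both sides. Once that is in place, the remaining steps are routine Skolem--Noether and norm-surjectivity arguments.
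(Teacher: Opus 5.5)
Your proposal is correct and follows essentially the same route as the paper. Both apply Skolem--Noether to $\mathcal N_H\circ\Psi_H\circ\mathcal M_H^{-1}$, use compatibility with $*$ to force $P_H^*P_H=c_HI_n$ with $c_H\in\F^\times$, and then compose the resulting equivalences; the only cosmetic difference is that you rescale $P_H$ to be unitary via surjectivity of the norm $\E^\times\to\F^\times$, whereas the paper simply absorbs the factor $c_H^{-1}\in\F^\times$ into the scalar $\mu$ of the equivalence.
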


	\begin{proof}
		For \(H\in\{F,G\}\), the map
		$
		\Psi_H:R_H\longrightarrow \End_{\E}(V)
		$
		is an \(\E\)-algebra isomorphism and maps
		$
		\newH_\ell(H)
		$
		onto
		$
		\mathfrak C_\ell(H).
		$ For each \(H\in\{F,G\}\), consider the commutative diagram
		\[
		\begin{tikzcd}
			R_H \arrow[r,"\Psi_H"] \arrow[d,"\mathcal M_H"'] &
			\End_\E(V) \arrow[d,"\mathcal N_H"] \\
			M_n(\E) \arrow[r,"\Phi_H"'] &
			M_n(\E)
		\end{tikzcd}
		\]
		where
		\[
		\Phi_H:=\mathcal N_H\circ\Psi_H\circ\mathcal M_H^{-1}.
		\]
		Then \(\Phi_H\) is an \(\E\)-algebra automorphism of \(M_n(\E)\).  By the
		Skolem--Noether theorem, there exists \(P_H\in\GL_n(\E)\) such that
		\[
		\Phi_H(A)=P_HAP_H^{-1}
		\qquad
		\text{for every }A\in M_n(\E).
		\]
		Since \(\mathcal M_H\), \(\Psi_H\), and \(\mathcal N_H\) all respect the
		corresponding unitary involutions, the automorphism \(\Phi_H\) commutes with
		\(A\mapsto A^*\).  Hence
		\[
		(P_HAP_H^{-1})^*=P_HA^* P_H^{-1}
		\qquad
		\text{for every }A\in M_n(\E).
		\]
		Equivalently,
		\[
		A\,P_H^* P_H=P_H^* P_H\,A
		\qquad
		\text{for every }A\in M_n(\E).
		\]
		Therefore
		\[
		P_H^* P_H=c_H I_n
		\]
		for some \(c_H\in\E^\times\).  Since \(P_H^* P_H\) is Hermitian, we
		have \(c_H^\varrho=c_H\), and so
		\[
		c_H\in\F^\times.
		\]
		Thus
		\[
		P_H^{-1}=c_H^{-1}P_H^*.
		\]
		Consequently,
		\begin{equation}\label{eq:semilinear-matrix-transfer-hermitian}
			\mathcal N_H(\mathfrak C_\ell(H))
			=
			c_H^{-1}P_H\,\mathcal M_H(\newH_\ell(H))\,P_H^*.
		\end{equation}
		
		Assume first that \(\newH_\ell(F)\) and \(\newH_\ell(G)\) are
		linearly Hermitian equivalent.  Then there exist
		\(U\in\GL_n(\E)\) and \(\lambda\in\F^\times\) such that
		\[
		\mathcal M_G(\newH_\ell(G))
		=
		\lambda\,U\,\mathcal M_F(\newH_\ell(F))\,U^*.
		\]
		Using \eqref{eq:semilinear-matrix-transfer-hermitian} for \(H=F\) and
		\(H=G\), we get
		\[
		\mathcal N_G(\mathfrak C_\ell(G))
		=
		\lambda\,\frac{c_F}{c_G}\,
		(P_GUP_F^{-1})\,
		\mathcal N_F(\mathfrak C_\ell(F))\,
		(P_GUP_F^{-1})^*.
		\]
		Since
		$
		\lambda\,\frac{c_F}{c_G}\in\F^\times,
		$
		this proves that \(\mathfrak C_\ell(F)\) and \(\mathfrak C_\ell(G)\) are
		linearly Hermitian equivalent. The converse is obtained by applying the same argument to
		\(\Psi_F^{-1}\) and \(\Psi_G^{-1}\).
	\end{proof}

    
	\subsection{The Lyapunov invariant and its stabilizer}
	We now translate the Lyapunov invariant to the common semilinear model.

	When the Lyapunov invariant is applied to
	\(\mathfrak C_\ell(H)\), it is computed with respect
	to the involution \(\Theta_H\).  Thus, temporarily writing the
	involution as a subscript,
	\[
	\mathcal L_{\Theta_H}\bigl(\mathfrak C_\ell(H)\bigr)
	:=
	\left\{
	B\in\End_\E(V):
	BA+A\Theta_H(B)\in\mathfrak C_\ell(H)
	\text{ for every }A\in\mathfrak C_\ell(H)
	\right\}.
	\]
	We shall continue to omit the subscript whenever the involution is
	clear from the code under consideration.
    
	Recall that
	\[
	\mathcal B
	=
	\{\lambda\in\E:\lambda+\lambda^\varrho\in\K\}.
	\]
	
	Throughout, we identify \(\E\) with the scalar subfield
	$
	\E\id_V\subseteq\End_\E(V),$
	and we identify \(\LL\) with the copy
	$
	m_\LL:=\{m_\alpha:\alpha\in\LL\}\subseteq\End_\E(V)
	$
	coming from the fixed \(\LL\)-vector space structure on \(V\).  By abuse of
	notation, we simply write \(\LL\) for \(m_\LL\).

	\begin{proposition}\label{prop:lyapunov-common-model}
		Assume $n>2$ and \(1\le \ell<ns/4\).  For every \(H\in\{F,G\}\), one has
		\[
		\mathcal L(\mathfrak C_\ell(H))
		=
		\LL+\mathcal B.
		\]
	\end{proposition}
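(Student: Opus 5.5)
The plan is to transport Theorem~\ref{thm:lyapunov-invariant-computation} through the $\E$-algebra isomorphism $\Psi_H$ of Theorem~\ref{thm:external-semilinear-realization}. Fix $H\in\{F,G\}$. By definition $\Theta_H=\Psi_H\circ\Theta\circ\Psi_H^{-1}$ and $\mathfrak C_\ell(H)=\Psi_H(\newH_\ell(H))$. For $\overline b\in R_H$ and $\overline a\in\newH_\ell(H)$, put $B=\Psi_H(\overline b)$ and $A=\Psi_H(\overline a)$. Since $\Psi_H$ is a ring isomorphism,
\[
\Psi_H\bigl(\overline b\,\overline a+\overline a\,\Theta(\overline b)\bigr)=BA+A\,\Theta_H(B).
\]
Hence $\Delta_{\overline b}(\newH_\ell(H))\subseteq\newH_\ell(H)$ holds if and only if $BA+A\Theta_H(B)\in\mathfrak C_\ell(H)$ for all $A\in\mathfrak C_\ell(H)$. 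Because $\Psi_H$ is bijective onto $\End_\E(V)$, this gives
\[
\mathcal L_{\Theta_H}(\mathfrak C_\ell(H))=\Psi_H\bigl(\mathcal L(\newH_\ell(H))\bigr).
\]

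Next I would apply Theorem~\ref{thm:lyapunov-invariant-computation}. Its hypotheses $n\ge3$ and $1\le\ell<m/2=ns/4$ are exactly the present assumptions, so $\mathcal L(\newH_\ell(H))=\LL+\mathcal B_H$, where $\mathcal B_H=\{\lambda\in\E_H:\lambda+\Theta(\lambda)\in\K\}$. It remains to compute $\Psi_H(\LL+\mathcal B_H)$.

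By the defining formula of $\Psi_H$, each constant $\alpha+RH(X^n)$ is sent to $m_\alpha$, so $\Psi_H(\LL)=m_\LL$, which we write as $\LL$. Since $\Psi_H$ is an $\E$-algebra isomorphism, it maps the center $\E_H$ onto the center $\E\id_V$ of $\End_\E(V)$. Concretely, $X^n+RH(X^n)\mapsto T_H^n=u_H\id_V$, so $z(X^n)\mapsto z(u_H)\id_V$ for every $z\in\K[Y]$. We also have $\Psi_H(\Theta(\lambda))=\Theta_H(\Psi_H(\lambda))$. On the center, $\Theta_H$ acts as $\varrho$, because $\Theta_H(u_H\id_V)=u_H^{-1}\id_V=u_H^\varrho\id_V$ and both maps are field automorphisms fixing $\K$. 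Finally, $\Psi_H$ fixes $\K$. Together these give $\Psi_H(\mathcal B_H)=\{\lambda\id_V:\lambda\in\E,\ \lambda+\lambda^\varrho\in\K\}=\mathcal B$. Additivity of $\Psi_H$ then yields $\mathcal L(\mathfrak C_\ell(H))=\LL+\mathcal B$.

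The main point needing care is this identification on the center. Under $\Psi_H$, the abstract center $\E_H\cong\K[Y]/(H)$ is matched with the fixed field $\E=\F_{q^s}$ via $X^n\mapsto u_H$, and the restriction of $\Theta$, namely $X^n\mapsto X^{-n}$, must correspond to the fixed automorphism $\varrho\colon x\mapsto x^{q^{s/2}}$. This correspondence holds because $u_H^\varrho=u_H^{-1}$, using that $H$ is self-reciprocal and irreducible of even degree. Since $u_H$ generates $\E$ over $\K$, the two automorphisms agree on all of $\E$, and so the resulting set $\mathcal B$ does not depend on $H$.
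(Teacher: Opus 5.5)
Your proposal is correct and follows the paper's argument. You transport Theorem~\ref{thm:lyapunov-invariant-computation} through $\Psi_H$, which intertwines $\Theta$ with $\Theta_H$, sends $\LL$ to $m_{\LL}$, and sends $\mathcal B_H$ to $\mathcal B$ because $\Theta_H$ restricts to $\varrho$ on $\E\id_V$; you simply make explicit what the paper leaves implicit, namely that the Lyapunov condition carries over verbatim under the ring isomorphism and that $\Theta_H|_{\E}=\varrho$ follows from $u_H^{\varrho}=u_H^{-1}$ together with $\E=\K(u_H)$.
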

	
	\begin{proof}
		The assertion follows by applying \(\Psi_H\) to
		\Cref{thm:lyapunov-invariant-computation}. Under \(\Psi_H\), the coefficient
		field \(\LL\subseteq R_H\) is identified with the copy
		\(\{m_\alpha:\alpha\in\LL\}\subseteq\End_\E(V)\).
		Moreover, the space
		\[
		\mathcal B_H
		=
		\{\lambda \in \E_F: \lambda+\Theta(\lambda)\in\K\}
		\]
		is identified with $\mathcal B$,
		because \(\Theta_H\) restricts to \(\varrho\) on the scalar field
		\(\E\id_V\).
	\end{proof}

	We next determine the stabilizer of \(\LL+\mathcal B\) in $\GL_{\E}(V)$.  We start with a preliminary lemma.
	
	\begin{lemma}\label{lem:EL-unique-expansion}
		The map
		\[
		\E\otimes_\K\LL\longrightarrow\End_\E(V),
		\qquad
		e\otimes \alpha \longmapsto e\alpha,
		\]
		is injective.  Also,
		$
		\E\cap\LL=\K$.
	\end{lemma}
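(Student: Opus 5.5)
To prove Lemma~\ref{lem:EL-unique-expansion}, I would transport the question back to the skew-polynomial quotient. For a fixed \(H\in\{F,G\}\), Theorem~\ref{thm:external-semilinear-realization} gives the \(\E\)-algebra isomorphism \(\Psi_H:R_H\to\End_\E(V)\). It is also \(\LL\)-linear, so it sends the coefficient field \(\LL\subseteq R_H\) onto \(m_\LL\) and the centre \(\E_H=\K[X^n]+RH(X^n)\) onto \(\E\id_V\). It therefore suffices to prove both assertions inside \(R_H\) for the subfields \(\E_H\) and \(\LL\).

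\textbf{Injectivity.} Choose the \(\K\)-basis \(1,u,\ldots,u^{s-1}\) of \(\E_H\), where \(u=X^n+RH(X^n)\). Every element of \(\E_H\otimes_\K\LL\) can be written uniquely as \(\sum_{k=0}^{s-1}u^k\otimes\alpha_k\) with \(\alpha_k\in\LL\). Its image in \(R_H\) is \(\sum_k\alpha_kX^{nk}+RH(X^n)\), because \(X^{nk}\) is central and so commutes with \(\alpha_k\). The exponents \(nk\) with \(0\le k\le s-1\) all lie in \([0,ns-1]\). The monomials \(X^i\), \(0\le i<ns\), are \(\LL\)-linearly independent in \(R_H\), since they form the canonical basis of representatives of degree \(<ns\). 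Hence the image vanishes only if every \(\alpha_k=0\), which proves injectivity.

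\textbf{Intersection.} Let \(x\in\E_H\cap\LL\). Writing \(x=\sum_k c_kX^{nk}\) with \(c_k\in\K\), and comparing it with its expansion as a constant \(\alpha\in\LL\), uniqueness of the expansion gives \(c_k=0\) for \(k\ge1\) and \(x=c_0\in\K\). Conversely, \(\K\) lies in both fields. Alternatively, one can argue by dimensions: by injectivity the compositum has \(\K\)-dimension \(sn=[\E:\K][\LL:\K]\), so the two fields are linearly disjoint over \(\K\), and their intersection is \(\K\).

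There is no real obstacle here. The only point to check is that the identification of \(\E\id_V\) with the centre of \(R_H\) and of \(m_\LL\) with the coefficients is compatible under \(\Psi_H\). This compatibility holds because \(\Psi_H(u_H^k)=T_H^{nk}=u_H^k\id_V\) and \(\Psi_H(\alpha)=m_\alpha\).
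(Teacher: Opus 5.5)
Your proposal is correct and follows essentially the same route as the paper: pull back through \(\Psi_H\), expand in the \(\K\)-basis \(1,u_H,\ldots,u_H^{s-1}\) of \(\E\), and use that \(\sum_k\alpha_kX^{nk}\) has degree \(<ns\), so it lies in \(RH(X^n)\) only if every \(\alpha_k=0\). The only difference is that you spell out how \(\E\cap\LL=\K\) follows, via uniqueness of the expansion or linear disjointness, whereas the paper simply records it as an immediate consequence of injectivity.
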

	
	\begin{proof}
		Fix \(H\in\{F,G\}\), let \(u_H\in\E\) be a root of \(H\), and write
		\[
		z_H:=X^n+RH(X^n)\in Z(R_H).
		\]
		Via the map
		\[
		\Psi_H:R_H\longrightarrow \End_\E(V),
		\]
		the center \(Z(R_H)\cong\K[X^n]/(H(X^n))\) is identified with the scalar field \(\E\id_V\),
		via
		\[
		z_H\longmapsto u_H\id_V,
		\]
		whereas the coefficient field \(\LL\subseteq R_H\) is mapped to the subfield
		$
		m_\LL:=\{m_\alpha:\alpha\in\LL\}\subseteq\End_\E(V)
		$
		defined by the fixed \(\LL\)-vector space structure on \(V\). Since \(H\) is irreducible of degree \(s\), the elements
		\[
		1,u_H,\ldots,u_H^{s-1}
		\]
		form a \(\K\)-basis of \(\E\).  Hence every element of
		\(\E\otimes_\K\LL\) can be written uniquely as
		\[
		\sum_{j=0}^{s-1}u_H^j\otimes \alpha_j,
		\qquad
		\alpha_j\in\LL.
		\]
		Its image under the multiplication map is
		\[
		\sum_{j=0}^{s-1}u_H^j\alpha_j
		\in \End_\E(V).
		\]
		Therefore it is enough to prove that every relation
		\[
		\sum_{j=0}^{s-1}u_H^j\alpha_j=0,
		\qquad
		\alpha_j\in\LL,
		\]
		inside \(\End_\E(V)\), is trivial. Pulling this relation back through \(\Psi_H\), we obtain
		\[
		\sum_{j=0}^{s-1}z_H^j\alpha_j=0
		\qquad
		\text{in }R_H.
		\]
		Equivalently,
		\[
		\sum_{j=0}^{s-1}\alpha_jX^{nj}\in R\,H(X^n).
		\]
		Since,
		$
		\deg\left(\sum_{j=0}^{s-1}\alpha_jX^{nj}\right)<ns,
		$
		we get
		$
		\alpha_0=\alpha_1=\cdots=\alpha_{s-1}=0$.
		This proves that the multiplication map
		\[
		\E\otimes_\K\LL\longrightarrow\End_\E(V),
		\qquad
		e\otimes\alpha\longmapsto e\alpha,
		\]
		is injective. In particular,
		\[
		\E\cap\LL=\K. \qedhere
		\]
	\end{proof}

	\begin{lemma}\label{lem:subfields-inside-L-plus-E}
		Assume \(n>2\). Let \(W\subseteq \End_\E(V)\) be a unital subfield such that
		\[
		\K\subseteq W,\qquad
		\dim_\K W=n,\qquad
		W\cap \E=\K,
		\]
		and
		\[
		W\subseteq \LL+\E.
		\]
		Then
		\[
		W=\LL.
		\]
	\end{lemma}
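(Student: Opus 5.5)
The plan is to turn the containment \(W\subseteq\LL+\E\) into a statement about a commutative ring of known structure, and then pin \(W\) down with the condition \(W\cap\E=\K\). By Lemma~\ref{lem:EL-unique-expansion}, multiplication \(\E\otimes_\K\LL\to\End_\E(V)\) is injective, so \(A:=\LL+\E\) is a commutative \(\K\)-algebra isomorphic to \(\E\otimes_\K\LL\). Since \(W\) is a field with \(\dim_\K W=n\), we have \(W\cong\F_{q^n}\), so every \(x\in W\) satisfies \(x^{q^n}=x\). I would first compute \(\{x\in A: x^{q^n}=x\}\). On \(\E\otimes_\K\LL\) the map \(x\mapsto x^{q^n}\) is \(\phi^n\otimes\id\), where \(\phi\) is the \(q\)-Frobenius of \(\E\), because \(\alpha^{q^n}=\alpha\) on \(\LL\) and \(q\)-th powering is a ring endomorphism of this commutative \(\K\)-algebra. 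Its fixed ring is \(\E^{\phi^n}\otimes_\K\LL=\F_{q^r}\otimes_\K\LL\), with \(r=\gcd(n,s)\). Hence
\[
W\subseteq A':=\F_{q^r}\LL\subseteq\End_\E(V),
\]
where \(\F_{q^r}\) is the subfield of \(\E\) of degree \(r\) over \(\K\).

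If \(r=1\), then \(A'=\LL\), and \(\dim_\K W=n=\dim_\K\LL\) immediately gives \(W=\LL\). This settles, for instance, the case \(\gcd(n,s)=1\).

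For \(r>1\), I would fix an embedding \(\iota:\F_{q^r}\hookrightarrow\LL\) and use the standard isomorphism
\[
A'\cong\prod_{j=0}^{r-1}\LL,\qquad a\otimes b\longmapsto\bigl(\sigma^j(\iota(a))\,b\bigr)_{j}.
\]
Under it, \(\LL\) is the diagonal \(\{(b,\dots,b)\}\) and \(\F_{q^r}\) is \(\{(\sigma^j\iota(a))_j\}\). Each projection restricted to the field \(W\) is a unital ring map \(\F_{q^n}\to\LL\), hence an isomorphism. So
\[
W=\{(x,\sigma^{k_1}x,\dots,\sigma^{k_{r-1}}x):x\in\LL\}
\]
for some exponents \(k_j\). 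Then \(W=\LL\) exactly when all \(k_j\equiv0\pmod n\). The hypothesis \(W\cap\E=\K\) becomes \(W\cap\F_{q^r}=\K\), that is: the only \(a\in\F_{q^r}\) with \(\sigma^{k_j}(a)=\sigma^j(a)\) for all \(j\) are the elements of \(\K\). The plan is to deduce \(k_j\equiv0\) from this.

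This last step is the main obstacle, and I am not confident that the stated hypotheses suffice. Take \(r=2\), \(n=4\), \(s=2\), and \(W=\{(x,\sigma^2x)\}\). An element \(a\in\F_{q^2}\) lies in \(W\) iff \(a^{q}=a^{q^2}=a\), i.e.\ \(a\in\K\). So this \(W\) meets every hypothesis and is not \(\LL\). If that example survives checking, the lemma needs an extra input in the case \(r>1\). The natural candidate is the compatibility with \(\Theta_H\) or with the stabilizer argument of Theorem~\ref{thm:stabilizer-L-plus-U} where the lemma is applied, for example \(W\subseteq\LL+\mathcal B\), or \(W\) being \(\Theta_H\)-fixed and normalizing the semilinear structure. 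That input should rule out the nontrivial twists \(k_j\). I would first try to show that the admissible \(W\) in that application are fixed by \(\Theta_H\), and then use that \(\Theta_H\) acts on the factors of \(A'\) by \(j\mapsto -j\) to force \(k_j\equiv0\). The reduction to \(A'\) and the case \(r=1\) are solid in any event.
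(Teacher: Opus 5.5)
\textbf{The gap.} The problem is in your first sentence. In the statement, $\LL+\E$ is the \emph{sum} of two $\K$-subspaces, not the ring they generate. Under the injection of Lemma~\ref{lem:EL-unique-expansion} it is the image of $1\otimes\LL+\E\otimes 1$. That is a $\K$-subspace of dimension $n+s-1$, and it is not closed under multiplication. By contrast, $\E\otimes_\K\LL$ has dimension $ns$.

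When you pass to the ring $\E\LL$ and then to $\F_{q^r}\LL$, you replace the hypothesis $W\subseteq\LL+\E$ by a much weaker one. That weakened statement is genuinely false for $r>1$: your $W$ is a counterexample to it. This is why your analysis of the twists $k_j$ cannot be completed.

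\textbf{Your example violates the actual hypothesis.} In your coordinates,
\[
\LL+\E=\{(b+\iota(a),\,b+\sigma(\iota(a))):a\in\F_{q^2},\ b\in\LL\}.
\]
So $(x,\sigma^2(x))\in\LL+\E$ forces $\sigma^2(x)-x\in\{\sigma(\iota(a))-\iota(a):a\in\F_{q^2}\}$. That set is a $1$-dimensional $\K$-space. However, $x\mapsto x^{q^2}-x$ has $2$-dimensional image on $\F_{q^4}$. Hence $W\not\subseteq\LL+\E$. The lemma holds as stated, and no input from $\Theta_H$ or $\mathcal B$ is needed. As written, your argument proves only the case $\gcd(n,s)=1$; since $s$ is even, it misses every even $n$.

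\textbf{The missing idea.} You need to use the linear constraint together with the multiplicativity of $W$; this is the paper's argument.
\begin{enumerate}
\item Choose a $\K$-complement $\E_0$ of $\K$ in $\E$. Then $\LL+\E=\LL\oplus\E_0$.
\item Because $W\cap\E=\K$, the projection $W\to\LL$ is injective, hence bijective by dimension. So $W=\{a+f(a):a\in\LL\}$ for some $\K$-linear $f:\LL\to\E_0$.
\item Write $f(a)=\sum_{i\ge1}\varphi_i(a)e_i$ in a $\K$-basis $1,e_1,\dots,e_{s-1}$ of $\E$. The product $(a+f(a))(b+f(b))$ lies in $W\subseteq\LL+\E$. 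Since $ab\in\LL$ and $f(a)f(b)\in\E$, this gives $af(b)+f(a)b\in\LL+\E$.
\item Uniqueness of expansions in $\E\otimes_\K\LL$ then yields $\varphi_i(a)b+\varphi_i(b)a\in\K$ for all $a,b\in\LL$.
\item If $\varphi_i\neq0$, then $\dim_\K\ker\varphi_i=n-1\ge 2$, so there is some $b\in\ker\varphi_i\setminus\K$. Pick $\alpha$ with $\varphi_i(\alpha)\neq0$. Then $\varphi_i(\alpha)b\in\K$, so $b\in\K$, a contradiction.
\item Hence $f=0$ and $W=\LL$.
\end{enumerate}
Step~5 is exactly where $n>2$ enters. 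For $n=2$ in characteristic $2$, the functional $\Tr_{\LL/\K}$ satisfies the same condition.
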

	
	\begin{proof}
		Choose a \(\K\)-linear complement \(\E_0\) of \(\K\) in \(\E\), so that
		\[
		\E=\K\oplus \E_0.
		\]
		By \Cref{lem:EL-unique-expansion}, we have
		$
		\LL\cap \E=\K$. Hence
		\[
		\LL\cap \E_0=0.
		\]
		Since \(\K\subseteq \LL\), we obtain the direct sum decomposition
		\[
		\LL+\E=\LL+\K+\E_0=\LL\oplus \E_0
		\]
		as \(\K\)-vector spaces. Let
		\[
		\pi:\LL\oplus \E_0\longrightarrow \LL
		\]
		be the projection onto the first component.
		
		We claim that the restriction
		\[
		\pi|_W:W\longrightarrow \LL
		\]
		is injective. Indeed, if \(w\in W\) and \(\pi(w)=0\), then
		$
		w\in W\cap \E_0$.
		Since \(\E_0\subseteq\E\), we get
		\[
		w\in W\cap\E=\K.
		\]
		But \(w\in\E_0\), and \(\E_0\cap\K=0\). Therefore \(w=0\). Thus
		\(\pi|_W\) is injective. Since
		\[
		\dim_\K W=n=\dim_\K\LL,
		\]
		it follows that \(\pi|_W\) is an isomorphism.
		
		Consequently, there exists a unique \(\K\)-linear map
		\[
		f:\LL\longrightarrow \E_0
		\]
		such that
		\[
		W=\{a+f(a):a\in\LL\}.
		\]
		Since \(1\in W\), we have \(f(1)=0\).
		
		We now prove that \(f=0\). Choose a \(\K\)-basis
		\[
		e_0=1,e_1,\ldots,e_{s-1}
		\]
		of \(\E\) such that \(e_1,\ldots,e_{s-1}\) is a \(\K\)-basis of \(\E_0\). For every
		\(a\in\LL\), write
		\[
		f(a)=\sum_{i=1}^{s-1}\varphi_i(a)e_i,
		\]
		where \(\varphi_i:\LL\to\K\) are \(\K\)-linear maps.
		
		Let \(a,b\in\LL\). Since \(W\) is closed under multiplication, we have
		\[
		(a+f(a))(b+f(b))\in W\subseteq \LL+\E.
		\]
		Expanding the product, and using that \(ab\in\LL\) and \(f(a)f(b)\in\E\), we obtain
		\[
		f(a)b+f(b)a\in \LL+\E.
		\]
		Equivalently,
		\[
		\sum_{i=1}^{s-1}
		e_i\bigl(\varphi_i(a)b+\varphi_i(b)a\bigr)
		\in \LL+\E.
		\]
		By Lemma~\ref{lem:EL-unique-expansion}, applied with respect to the basis
		\(e_0=1,e_1,\ldots,e_{s-1}\), this implies that
		\[
		\varphi_i(a)b+\varphi_i(b)a\in\K
		\]
		for every \(i=1,\ldots,s-1\) and every \(a,b\in\LL\).
		
		We show that each \(\varphi_i\) is zero. Suppose, by contradiction, that
		\(\varphi_i\neq0\) for some \(i\). Since \(\varphi_i:\LL\to\K\) is a nonzero
		\(\K\)-linear map and \(\dim_\K\LL=n\), we have
		\[
		\dim_\K\ker(\varphi_i)=n-1.
		\]
		Because \(n>2\), this dimension is greater than \(1\). Hence
		\(\ker(\varphi_i)\) is not contained in the one-dimensional \(\K\)-subspace
		\(\K\subseteq\LL\). Choose
		\[
		b\in\ker(\varphi_i)\setminus\K.
		\]
		Since \(\varphi_i\neq0\), choose \(\alpha\in\LL\) such that
		\[
		\varphi_i(\alpha)\neq0.
		\]
		Then
		\[
		\varphi_i(\alpha)b+\varphi_i(b)\alpha\in\K
		\]
		and \(\varphi_i(b)=0\), so
		\[
		\varphi_i(\alpha)b\in\K.
		\]
		Since \(\varphi_i(\alpha)\in\K^\times\), this forces \(b\in\K\), a contradiction.
		
		Therefore \(\varphi_i=0\) for every \(i=1,\ldots,s-1\). Hence \(f=0\), and so
		\[
		W=\{\alpha:\alpha\in\LL\}=\LL.
		\]
		This proves the claim.
	\end{proof}
	
	\begin{theorem}\label{thm:stabilizer-L-plus-U}
		Assume \(n>2\). Let \(\mathcal U\subseteq \E\) be a \(\K\)-subspace with
		\(\K\subseteq \mathcal U\). Then
		\[
		\Stab_{\GL_\E(V)}(\LL+\mathcal U)
		=
		\Stab_{\GL_\E(V)}(\LL),
		\]
		where both stabilizers are taken with respect to the conjugation action of
		\(\GL_\E(V)\) on \(\End_\E(V)\).
	\end{theorem}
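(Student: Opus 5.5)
Both inclusions will be proved directly, with the nontrivial direction reduced to Lemma~\ref{lem:subfields-inside-L-plus-E}.

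\emph{Easy inclusion.} Let $Q\in\GL_\E(V)$ with $Q\LL Q^{-1}=\LL$. Since $\mathcal U\subseteq\E$ consists of central scalars, $Q\lambda Q^{-1}=\lambda$ for every $\lambda\in\mathcal U$. Hence
\[
Q(\LL+\mathcal U)Q^{-1}=Q\LL Q^{-1}+\mathcal U=\LL+\mathcal U .
\]

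\emph{Reverse inclusion.} Let $Q$ stabilize $\LL+\mathcal U$ and put $W:=Q\LL Q^{-1}$. We check the hypotheses of Lemma~\ref{lem:subfields-inside-L-plus-E}.
\begin{itemize}
\item Conjugation is a $\K$-algebra automorphism of $\End_\E(V)$ fixing $\E\id_V$ pointwise. So $W$ is a unital subfield with $\K\subseteq W$ and $\dim_\K W=n$.
\item Since $\mathcal U\subseteq\E$, we have $W\subseteq Q(\LL+\mathcal U)Q^{-1}=\LL+\mathcal U\subseteq\LL+\E$.
\item For $W\cap\E=\K$: if $w=Q\alpha Q^{-1}\in\E$, then $\alpha=Q^{-1}wQ=w\in\E\cap\LL=\K$ by Lemma~\ref{lem:EL-unique-expansion}. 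Conversely $\K\subseteq W\cap\E$.
\end{itemize}
Lemma~\ref{lem:subfields-inside-L-plus-E} (which uses $n>2$) then gives $W=\LL$, i.e.\ $Q\LL Q^{-1}=\LL$, so $Q\in\Stab_{\GL_\E(V)}(\LL)$.

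The only point needing care is the verification $W\cap\E=\K$. Centrality of $\E$ makes it immediate, because conjugation pulls any central element of $W$ back to the same element of $\LL$. With that in hand, the rest is bookkeeping.
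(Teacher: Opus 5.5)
Your proposal is correct and follows the paper's proof essentially step for step. The easy inclusion uses that $\mathcal U\subseteq\E\id_V$ is fixed pointwise by conjugation. The reverse inclusion applies Lemma~\ref{lem:subfields-inside-L-plus-E} to $W=Q\LL Q^{-1}$, and you obtain $W\cap\E=\K$ exactly as the paper does: pull a central element back through the conjugation and invoke $\E\cap\LL=\K$ from Lemma~\ref{lem:EL-unique-expansion}.
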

	
	\begin{proof}
		We need to prove that, for every
		\(Q\in\GL_\E(V)\),
		\[
		Q(\LL+\mathcal U)Q^{-1}=\LL+\mathcal U
		\]
		if and only if
		\[
		Q\LL Q^{-1}=\LL.
		\]
		
		Assume first that
		$
		Q\LL Q^{-1}=\LL$.
		Since \(\mathcal U\subseteq\E\id_V\), and since \(Q\) is \(\E\)-linear, one has
		$
		Q\mathcal UQ^{-1}=\mathcal U$.
		Therefore
		\[
		Q(\LL+\mathcal U)Q^{-1}
		=
		Q\LL Q^{-1}+Q\mathcal UQ^{-1}
		=
		\LL+\mathcal U.
		\]
		Thus
		\[
		\Stab_{\GL_\E(V)}(\LL)
		\subseteq
		\Stab_{\GL_\E(V)}(\LL+\mathcal U).
		\]
		
		Conversely, suppose that
		$
		Q(\LL+\mathcal U)Q^{-1}=\LL+\mathcal U.
		$
		Put
		$
		W:=Q\LL Q^{-1}$.
		Then
		\[
		W\subseteq \LL+\mathcal U\subseteq \LL+\E.
		\]
		Moreover, \(W\) is a unital subfield of \(A=\End_\E(V)\), contains \(\K\), and
		$
		\dim_\K W=n$.
		
		We prove that
		$
		W\cap\E=\K$.
		Let \(w\in W\cap\E\). Then
		$
		w=Q\alpha Q^{-1}$
		for some \(\alpha\in\LL\). Since \(w\in\E\id_V\), and since
		\(Q\in\GL_\E(V)\), we have
		$
		Q^{-1}wQ=w$.
		Hence
		\[
		\alpha=Q^{-1}wQ=w.
		\]
		Thus
		$
		w=\alpha\in\LL\cap\E=\K$
		by \Cref{lem:EL-unique-expansion}. Therefore
		$
		W\cap\E\subseteq\K$.
		The reverse inclusion is clear, since \(\K\subseteq W\) and
		\(\K\subseteq\E\). Hence
		\[
		W\cap\E=\K.
		\]
		
		By \Cref{lem:subfields-inside-L-plus-E}, we obtain
		$
		W=\LL$.
		Therefore
		\[
		Q\LL Q^{-1}=\LL.
		\]
		This proves the reverse inclusion, and hence the theorem.
	\end{proof}
	
	We can now derive the precise form of a linear Hermitian equivalence
	in the common endomorphism algebra.  Notice that the following
	argument does not require the involutions \(\Theta_F\) and
	\(\Theta_G\) to coincide.
	
	\begin{lemma}
	\label{lem:common-equivalence-reduction}
	Assume that \(n>2\) and
	$
	1\leq\ell<\frac{ns}{4}$.
	Suppose that \(\newH_\ell(F)\) and \(\newH_\ell(G)\) are linearly
	Hermitian equivalent.  Then there exist
	\[
	Q\in\GL_\E(V)
	\qquad\text{and}\qquad
	C\in\End_\E(V)^\times
	\]
	such that
	\[
	\mathfrak C_\ell(G)
	=
	Q\mathfrak C_\ell(F)Q^{-1}C,
	\]
	\[
	Q\LL Q^{-1}=\LL,
	\]
	and
	\[
	C\in \Cen_{\End_\E(V)}(\LL)^\times.
	\]
	\end{lemma}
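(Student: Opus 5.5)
The plan is to move the given equivalence into $\End_\E(V)$ through the identifications $\mathcal N_H$. There it splits as an inner automorphism $Q$ times a right factor $C$. The Lyapunov invariant together with Theorem~\ref{thm:stabilizer-L-plus-U} then forces $Q$ to normalize $\LL$. Finally, the $\Theta_G$-symmetry of $\mathfrak C_\ell(G)$ forces $C$ to commute with $\LL$.

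\emph{Step 1: the factorization.} By Proposition~\ref{prop:skew-vs-semilinear-linear} there exist $U\in\GL_n(\E)$ and $\mu\in\F^\times$ with
\[
\mathcal N_G(\mathfrak C_\ell(G))=\mu\,U\,\mathcal N_F(\mathfrak C_\ell(F))\,U^*.
\]
Define $\eta:=\mathcal N_G^{-1}\circ(A\mapsto UAU^{-1})\circ\mathcal N_F$. This is an $\E$-algebra automorphism of $\End_\E(V)$, so by Skolem--Noether $\eta(B)=QBQ^{-1}$ for some $Q\in\GL_\E(V)$. For every $A$ we can write
\[
\mu\,U\,\mathcal N_F(A)\,U^*=\bigl(U\mathcal N_F(A)U^{-1}\bigr)\cdot \mu\, UU^*.
\]
Hence $\mathfrak C_\ell(G)=Q\,\mathfrak C_\ell(F)\,Q^{-1}C$ with $C:=\mathcal N_G^{-1}(\mu UU^*)$, which is invertible.

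\emph{Step 2: $Q$ normalizes $\LL$.} Since $\mathcal N_H$ carries $\Theta_H$ to $A\mapsto A^*$, one has
\[
\mathcal N_H\bigl(\mathcal L_{\Theta_H}(\mathfrak C_\ell(H))\bigr)=\mathcal L\bigl(\mathcal N_H(\mathfrak C_\ell(H))\bigr).
\]
Proposition~\ref{prop:lyapunov-invariant-matrix}, applied with $\varphi=\id$, gives
\[
\mathcal L\bigl(\mathcal N_G(\mathfrak C_\ell(G))\bigr)=U\,\mathcal L\bigl(\mathcal N_F(\mathfrak C_\ell(F))\bigr)\,U^{-1}.
\]
Pulling this back along $\mathcal N_G$ yields $\mathcal L_{\Theta_G}(\mathfrak C_\ell(G))=\eta\bigl(\mathcal L_{\Theta_F}(\mathfrak C_\ell(F))\bigr)$. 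Proposition~\ref{prop:lyapunov-common-model} identifies both sides, so
\[
Q(\LL+\mathcal B)Q^{-1}=\LL+\mathcal B.
\]
Now apply Theorem~\ref{thm:stabilizer-L-plus-U} with $\mathcal U=\mathcal B$; note $\K\subseteq\mathcal B$, since $2\lambda\in\K$ for $\lambda\in\K$. This gives $Q\LL Q^{-1}=\LL$.

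\emph{Step 3: $C$ centralizes $\LL$.} The constants $\alpha\in\LL$ all lie in $\mathfrak C_\ell(F)$. Because $Q\LL Q^{-1}=\LL$, the products $\alpha'C$ with $\alpha'\in\LL$ all lie in $\mathfrak C_\ell(G)$; in particular $C\in\mathfrak C_\ell(G)$ (take $\alpha'=1$). Every element of $\mathfrak C_\ell(G)$ is fixed by $\Theta_G$, and $\Theta_G$ fixes $\LL$ pointwise by \eqref{eq:thetaH-generators}. Since $\Theta_G$ is an anti-automorphism,
\[
\alpha'C=\Theta_G(\alpha'C)=\Theta_G(C)\,\alpha'=C\alpha' \qquad\text{for all }\alpha'\in\LL,
\]
so $C\in\Cen_{\End_\E(V)}(\LL)^\times$.

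The main obstacle is Step 2. It requires checking carefully that the Lyapunov invariant transforms under $\eta$ even though $\Theta_F\neq\Theta_G$. This works because each $\mathcal L_{\Theta_H}$ is matched to the \emph{same} standard adjoint through its own $\mathcal N_H$.
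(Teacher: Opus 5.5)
Your proposal is correct and follows essentially the same route as the paper's proof. Both arguments factor the transported equivalence as $\eta(\cdot)\,C$ with $C=\mathcal N_G^{-1}(\mu UU^*)$, and both transport the Lyapunov invariant through the two separate identifications $\mathcal N_F,\mathcal N_G$ to obtain $Q(\LL+\mathcal B)Q^{-1}=\LL+\mathcal B$. Both then apply Theorem~\ref{thm:stabilizer-L-plus-U}, and use $\Theta_G$-invariance of $\LL C\subseteq\mathfrak C_\ell(G)$ to show that $C$ centralizes $\LL$. Your explicit check that $\K\subseteq\mathcal B$, a hypothesis of Theorem~\ref{thm:stabilizer-L-plus-U}, is a small detail the paper leaves implicit.
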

	
	\begin{proof}
	For \(H\in\{F,G\}\), choose an \(\E\)-algebra isomorphism
	\[
	\mathcal N_H:
	\End_\E(V)
	\stackrel{\sim}{\longrightarrow}
	M_n(\E)
	\]
	identifying the involution \(\Theta_H\) with the standard Hermitian
	adjoint \(A\mapsto A^*\). By Proposition~\ref{prop:skew-vs-semilinear-linear}, the codes
	\(\mathfrak C_\ell(F)\) and \(\mathfrak C_\ell(G)\) are linearly
	Hermitian equivalent.  Hence there exist
	$
	S\in\GL_n(\E)$ and $\nu\in\F^\times $
	such that
	\[
	\mathcal N_G\bigl(\mathfrak C_\ell(G)\bigr)
	=
	\nu S\,
	\mathcal N_F\bigl(\mathfrak C_\ell(F)\bigr)
	S^*.
	\]
	
	Define
	\[
	\eta
	:=
	\mathcal N_G^{-1}
	\circ
	\tau_S
	\circ
	\mathcal N_F,
	\]
	where
	$
	\tau_S(A):=SAS^{-1},$ for every $A\in M_n(\E)$, and put
	\[
	C
	:=
	\mathcal N_G^{-1}(\nu SS^*).
	\]
	Then \(\eta\) is an \(\E\)-algebra automorphism of
	\(\End_\E(V)\), the element \(C\) is invertible, and
	\[
	\mathfrak C_\ell(G)
	=
	\eta\bigl(\mathfrak C_\ell(F)\bigr)C.
	\]
	We next transport the Lyapunov invariant.  Since
	\(\mathcal N_H\) is an isomorphism of algebras with involution, one
	has
	\[
	\mathcal N_H
	\left(
	\mathcal L_{\Theta_H}
	\bigl(\mathfrak C_\ell(H)\bigr)
	\right)
	=
	\mathcal L
	\left(
	\mathcal N_H\bigl(\mathfrak C_\ell(H)\bigr)
	\right).
	\]
	Therefore, by Proposition~\ref{prop:lyapunov-invariant-matrix},
	\[
	\begin{aligned}
	\mathcal N_G
	\left(
	\mathcal L_{\Theta_G}
	\bigl(\mathfrak C_\ell(G)\bigr)
	\right)
	&=
	S\,
	\mathcal N_F
	\left(
	\mathcal L_{\Theta_F}
	\bigl(\mathfrak C_\ell(F)\bigr)
	\right)
	S^{-1}.
	\end{aligned}
	\]
	Equivalently,
	\[
	\mathcal L_{\Theta_G}
	\bigl(\mathfrak C_\ell(G)\bigr)
	=
	\eta
	\left(
	\mathcal L_{\Theta_F}
	\bigl(\mathfrak C_\ell(F)\bigr)
	\right).
	\]
	By Proposition~\ref{prop:lyapunov-common-model},
	\[
	\mathcal L_{\Theta_F}
	\bigl(\mathfrak C_\ell(F)\bigr)
	=
	\LL+\mathcal B
	=
	\mathcal L_{\Theta_G}
	\bigl(\mathfrak C_\ell(G)\bigr).
	\]
	Consequently,
	\[
	\eta(\LL+\mathcal B)=\LL+\mathcal B.
	\]
	Since \(\End_\E(V)\cong M_n(\E)\), the Skolem--Noether theorem gives
	an element \(Q\in\GL_\E(V)\) such that
	\[
	\eta=\tau_Q.
	\]
	Thus
	\[
	Q(\LL+\mathcal B)Q^{-1}
	=
	\LL+\mathcal B.
	\]
	By Theorem~\ref{thm:stabilizer-L-plus-U}, it follows that
	\[
	Q\LL Q^{-1}=\LL.
	\]
	Moreover,
	\[
	\mathfrak C_\ell(G)
	=
	Q\mathfrak C_\ell(F)Q^{-1}C.
	\]
	It remains to prove that \(C\) centralizes \(\LL\).  Since
	\(1\in\mathfrak C_\ell(F)\), the equality above gives
	\[
	C\in\mathfrak C_\ell(G).
	\]
	Hence
	\[
	\Theta_G(C)=C.
	\]
	Furthermore, since
	$
	\LL\subseteq\mathfrak C_\ell(F)$ and $\eta(\LL)=\LL,$
	we have
	$
	\alpha C\in\mathfrak C_\ell(G)$, for every $\alpha\in\LL$.
	Every element of \(\mathfrak C_\ell(G)\) is fixed by \(\Theta_G\).
	Using also that \(\Theta_G\) fixes \(\LL\) pointwise, we obtain
	\[
	\alpha C
	=
	\Theta_G(\alpha C)
	=
	\Theta_G(C)\Theta_G(\alpha)
	=
	C\alpha.
	\]
	Thus \(C\) centralizes \(\LL\), and therefore
	$
	C\in \Cen_{\End_\E(V)}(\LL)^\times$.
	\end{proof}
	
	\subsection{The range $1 \leq \ell <n$}
	
	\begin{lemma}\label{lem:L-span-Hermitian-band-short}
		Assume \(1\le \ell<n\). For \(H\in\{F,G\}\), set
		\[
		\mathfrak P_\ell(H):=
		\langle \mathfrak C_\ell(H)\rangle_{\LL}.
		\]
		Then
		\[
		\mathfrak P_\ell(H)
		=
		\left\langle
		T_H^{-\ell},T_H^{-\ell+1},\ldots,T_H^{-1},1,
		T_H,\ldots,T_H^\ell
		\right\rangle_{\LL}.
		\]
	\end{lemma}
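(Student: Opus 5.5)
The plan is to prove the two inclusions separately. Here \(\langle\cdot\rangle_{\LL}\) means the left \(\LL\)-span, i.e.\ the span under left multiplication by the operators \(m_\alpha\).

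For the inclusion \(\subseteq\), I will look at a typical element \(\alpha_0+\sum_{i=1}^{\ell}(\sigma^{-i}(\alpha_i)T_H^{-i}+\alpha_iT_H^i)\) of \(\mathfrak C_\ell(H)\) as displayed in \eqref{eq:semilinear-Hermitian-band}. It is visibly a left \(\LL\)-linear combination of \(T_H^{-\ell},\ldots,T_H^{\ell}\), and the right-hand side is closed under left multiplication by \(\LL\). Hence \(\mathfrak P_\ell(H)\) is contained in the right-hand side.

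For the inclusion \(\supseteq\), I will show that each \(T_H^{j}\) with \(|j|\le\ell\) lies in \(\mathfrak P_\ell(H)\). The case \(j=0\) is immediate, since \(1\in\mathfrak C_\ell(H)\). For \(1\le i\le\ell\), take the elements
\[
Z_i(\alpha)=\sigma^{-i}(\alpha)T_H^{-i}+\alpha T_H^{i}\in\mathfrak C_\ell(H).
\]
The key step is to separate the two monomials using the \(\LL\)-span. For \(\beta\in\LL\) and \(\alpha\in\LL\), the combination
\[
\beta Z_i(1)-Z_i(\beta)=(\beta-\sigma^{-i}(\beta))\,T_H^{-i}
\]
lies in \(\mathfrak P_\ell(H)\). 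Because \(1\le i\le\ell<n\), we have \(\sigma^{-i}\neq\id_{\LL}\). So we can choose \(\beta\) with \(\beta-\sigma^{-i}(\beta)\neq0\), and left-multiplying by its inverse (an element of \(\LL\)) gives \(T_H^{-i}\in\mathfrak P_\ell(H)\). Then \(T_H^{i}=Z_i(1)-T_H^{-i}\) also lies in \(\mathfrak P_\ell(H)\), which yields the reverse inclusion.

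The only delicate point is the role of the hypothesis \(\ell<n\): it guarantees \(i\not\equiv0\pmod n\) for every \(1\le i\le\ell\), so the separation works for each \(i\). For central exponents \(i\equiv0\pmod n\) this argument would fail, since then \(\sigma^{-i}=\id_{\LL}\). I would also add a remark that the monomials \(T_H^{-\ell},\ldots,T_H^{\ell}\) are left \(\LL\)-independent, because \(2\ell<ns\) and the paper's earlier fact transfers through \(\Psi_H\). This is not needed for the equality itself, but it shows the right-hand side is a free \(\LL\)-module on these monomials, which will be used later.
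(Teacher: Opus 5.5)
Your proposal is correct and follows essentially the same route as the paper: you get the easy inclusion from the explicit form of $\mathfrak C_\ell(H)$, and the reverse inclusion by separating $T_H^{-i}$ and $T_H^{i}$ using the two codewords $Z_i(1)$ and $Z_i(\beta)$ with $\sigma^{-i}(\beta)\neq\beta$. This is exactly the paper's $2\times 2$ determinant argument written out as an explicit elimination, and in both versions the hypothesis $\ell<n$ is used to guarantee $\sigma^{i}\neq\id_{\LL}$ for $1\le i\le\ell$.
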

	
	\begin{proof}
		From \eqref{eq:semilinear-Hermitian-band}, we have
		\[
		\mathfrak C_\ell(H)
		=
		\left\{
		\alpha_0+
		\sum_{i=1}^{\ell}
		\left(
		\sigma^{-i}(\alpha_i)T_H^{-i}
		+
		\alpha_iT_H^i
		\right)
		:
		\alpha_0,\ldots,\alpha_\ell\in\LL
		\right\}.
		\]
		Hence
		\[
		\mathfrak P_\ell(H)
		\subseteq
		\left\langle
		T_H^{-\ell},T_H^{-\ell+1},\ldots,T_H^{-1},1,
		T_H,\ldots,T_H^\ell
		\right\rangle_{\LL}.
		\]
		
		We prove the reverse inclusion. Clearly \(1\in\mathfrak C_\ell(H)\), and
		therefore
		\[
		\LL\subseteq\mathfrak P_\ell(H).
		\]
		Let \(i\in\{1,\ldots,\ell\}\). Since \(\ell<n\), we have \(i<n\), and hence
		\[
		\sigma^i\neq \id_{\LL}.
		\]
		Thus there exists \(\eta_i\in\LL\) such that
		\[
		\sigma^{-i}(\eta_i)\neq \eta_i.
		\]
		The two elements
		\[
		T_H^{-i}+T_H^i
		\]
		and
		\[
		\sigma^{-i}(\eta_i)T_H^{-i}+\eta_iT_H^i
		\]
		belong to \(\mathfrak C_\ell(H)\subseteq\mathfrak P_\ell(H)\). Since
		\[
		\det
		\begin{pmatrix}
			1 & 1\\
			\sigma^{-i}(\eta_i) & \eta_i
		\end{pmatrix}
		=
		\eta_i-\sigma^{-i}(\eta_i)
		\neq0,
		\]
		their \(\LL\)-span contains both \(T_H^{-i}\) and \(T_H^i\). Therefore
		\[
		T_H^{-i},T_H^i\in\mathfrak P_\ell(H)
		\qquad
		\text{for every }i=1,\ldots,\ell.
		\]
		This proves
		\[
		\mathfrak P_\ell(H)
		=
		\left\langle
		T_H^{-\ell},T_H^{-\ell+1},\ldots,T_H^{-1},1,
		T_H,\ldots,T_H^\ell
		\right\rangle_{\LL}. \qedhere
		\]
	\end{proof}
	
	Note that \(T_H^\ell\) is invertible, and right multiplication by \(T_H^\ell\)
	gives
	\[
	\mathfrak S_{2\ell}(H):=\mathfrak P_\ell(H)T_H^\ell=
	\left\langle
	1,T_H,\ldots,T_H^{2\ell}
	\right\rangle_{\LL}.
	\]

	In \cite{santonastaso2026autotopism}, linear equivalence was studied for
	\(\LL\)-linear power spaces of the form
	\[
	\left\langle1,T_H,\ldots,T_H^{sk-1}\right\rangle_{\LL}.
	\]
	Although the result there is stated with last exponent \(sk-1\), its proof uses only that the exponents form an interval and that the largest exponent is	strictly smaller than \(ns/2\). The same argument therefore gives the	following more flexible statement.
		
	\begin{proposition}\label{prop:short-power-equivalence-arbitrary-length}
		Assume \(n>2\), and let \(j\) be an integer such that
		\[
		1\le j<\frac{ns}{2}.
		\]
		For \(H\in\{F,G\}\), set
		\[
		\mathfrak S_j(H):=
		\left\langle
		1,T_H,\ldots,T_H^j
		\right\rangle_{\LL}.
		\]
		Suppose that there exist \(M,N\in\GL_\E(V)\) such that
		\[
		\mathfrak S_m(G)=M\,\mathfrak S_m(F)\,N
		\]
		and
		\[
		M\LL M^{-1}=\LL.
		\]
		Then there exists \(\lambda\in\F_q^\times\) such that
		\[
		F(Y)=\lambda^sG(\lambda^{-1}Y).
		\]
	\end{proposition}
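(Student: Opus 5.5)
The plan is to reduce the hypothesis to a statement about a single $\sigma$-semilinear operator, and then compare $n$-th powers. Throughout I read the hypothesis as $\mathfrak S_j(G)=M\,\mathfrak S_j(F)\,N$; the subscript $m$ in the statement should be $j$.

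\emph{Step 1: transport $T_F$ by $M$.} Since $M\LL M^{-1}=\LL$ and $M$ is $\E$-linear, conjugation by $M$ induces a field automorphism $\theta$ of $\LL$ fixing $\K=\LL\cap\E$ (Lemma~\ref{lem:EL-unique-expansion}). So $M\alpha=\theta(\alpha)M$ for all $\alpha\in\LL$. Put
\[
A:=MT_FM^{-1}.
\]
Because $\Gal(\LL/\K)$ is abelian, $A$ is again $\sigma$-semilinear, and $A^n=Mu_FM^{-1}=u_F\id_V$. Moreover
\[
M\,\mathfrak S_j(F)\,N=\langle 1,A,\ldots,A^j\rangle_{\LL}\,P,\qquad P:=MN,
\]
so the hypothesis becomes $\langle 1,T_G,\ldots,T_G^j\rangle_{\LL}=\langle 1,A,\ldots,A^j\rangle_{\LL}P$.

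\emph{Step 2: show $P$ is homogeneous.} Decompose $\End_\E(V)=\bigoplus_{k\in\mathbb Z/n}\mathcal E_k$ into $\sigma^k$-semilinear components. Both $T_G^i$ and $A^i$ lie in $\mathcal E_{i \bmod n}$. I will compute the right idealizer
\[
\{y:\mathfrak S\,y\subseteq\mathfrak S\}
\]
for an interval power space $\mathfrak S$ of length $j+1\le ns/2$, and show that it equals $\LL$. The inclusion $\supseteq$ holds because $T^i\alpha=\sigma^i(\alpha)T^i$. For the reverse inclusion, write $y$ in the basis $T^{i}$, $i$ in an interval of length $ns$. Multiplying the top and bottom monomials of $\mathfrak S$ by $y$ and using that monomials over an interval of length $<ns$ are $\LL$-independent, this should force $y$ to have degree $0$; this is where $j<ns/2$ enters. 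Computing this idealizer for both sides of the hypothesis gives $P^{-1}\LL P=\LL$. Hence $P\in\mathcal E_k$ for some $k$, i.e.\ $P=\sum_t \pi_t A^{k+nt}$ with $\pi_t\in\LL$. Equivalently $P=zA^k$ with $z$ in the commutative algebra $\E\LL\cong\E\otimes_\K\LL$.

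\emph{Step 3: identify $T_G$ and compare $n$-th powers.} Since $1\in\mathfrak S_j(G)$, we have $P^{-1}\in\langle 1,A,\ldots,A^j\rangle_{\LL}$. A degree count, again using $\LL$-independence of monomials over a window of length $<ns$, shows that the only way the interval $\{0,\ldots,j\}$ can be shifted onto itself is $k\equiv 0$ with $z$ essentially a unit of $\LL$. Here I would try first to compare the lowest and highest exponents present, exactly as in Step~2 of Theorem~\ref{thm:lyapunov-invariant-computation}. Next, $T_G$ is an element of degree $1$ in $\langle 1,A,\ldots,A^j\rangle_{\LL}P$ that lies in $\mathcal E_1$. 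The same degree count forces $T_G=\gamma A c$ with $\gamma\in\LL^\times$ and $c$ central, and the interval condition should pin down $c=\lambda^{-1}\in\K^\times$. Then
\[
u_G\id_V=T_G^n=\N_{\LL/\K}(\gamma)\lambda^{-n}u_F\id_V .
\]
This gives $u_G=\mu u_F$ for some $\mu\in\K^\times$, and therefore $F(Y)=\mu^{-s}G(\mu Y)$, which is the claim with $\lambda=\mu^{-1}$.

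\emph{Main obstacle.} I expect the hard part to be the degree bookkeeping in Steps~2--3. Elements of $\End_\E(V)$ have unique expansions only over windows of length $<ns$, so multiplying elements of degree up to $j$ can produce degree-$2j$ terms that wrap around via $H(T^n)=0$. The bound $j<ns/2$ (so $2j<ns$) is exactly what prevents such wrapping, and each coefficient comparison must be checked to stay inside one such window.
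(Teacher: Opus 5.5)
Your Steps~1 and~2 are sound. Conjugation by $M$ gives a $\sigma$-semilinear $A$ with $A^n=u_F$, and the hypothesis becomes $\mathfrak S_j(G)=\langle 1,A,\dots,A^j\rangle_{\LL}P$. Since $2j<ns$, the right idealizer of a one-operator interval space is $\LL$, so $P$ normalizes $\LL$ and $P=zA^k$ with $z\in(\LL\E)^\times$, where $\LL\E=\mathcal A_0$.

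The gap is Step~3. It carries the whole content of the proposition and is only asserted. The ``degree count'' you invoke is not available there. $\mathfrak S_j(G)$ is an interval space for $T_G$ and $\langle 1,A,\dots,A^j\rangle_{\LL}$ is one for $A$, but the only a priori link between them is $T_G=cA$ with $c\in\mathcal A_0^\times$ unknown. In the $A$-basis, $c=\sum_t\gamma_tA^{nt}$ may have degree up to $n(s-1)$. For instance, with $k=0$, $P^{-1}\in\langle1,A,\dots,A^j\rangle_{\LL}$ only says that $z^{-1}$ is an $\LL$-polynomial of degree at most $\lfloor j/n\rfloor$ in $u_F$. Meanwhile $P\in\mathfrak S_j(G)$ bounds $z$ as a polynomial in $u_G$, not in $u_F$. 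No window argument can compare the two until the relation between $u_G$ and $u_F$ is known, and that relation is the claim itself.

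Your intermediate assertions are also false without $n>2$, and your Step~3 never says where $n>2$ enters. Take $n=2$, $j=1$, $\gamma\in\LL^\times$ and $u_G=\N_{\LL/\K}(\gamma)u_F^{-1}$. The operator $\gamma u_F^{-1}T_F$ is $\sigma$-semilinear with square $u_G$. By uniqueness of the simple $R_G$-module it is conjugate to $T_G$ by some $\phi\in\GL_\E(V)$ centralizing $\LL$, and then $\mathfrak S_1(G)=\phi\,\mathfrak S_1(F)\,u_F^{-1}T_F\phi^{-1}$. Here $k=1$, $z=u_F^{-1}\notin\LL$, $T_G=\gamma u_F^{-1}A$, and $u_G\notin\K^\times u_F$ once $s>2$. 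More generally, the claim that the interval cannot be shifted onto itself fails whenever $n\mid j+1$, since then every residue class mod $n$ carries the same number of exponents.

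What is missing is a component-by-component analysis in $\End_\E(V)=\bigoplus_r\mathcal A_r$.

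\begin{itemize}
\item \textbf{Setup.} One has $\mathfrak S_j(H)\cap\mathcal A_r=U_{a_r}(H)\,T_H^r$, with $U_a(H):=\langle 1,u_H,\dots,u_H^a\rangle_{\LL}\subseteq\LL\E$ and $a_r=\lfloor (j-r)/n\rfloor$. Right multiplication by $zA^k$ maps $\mathcal A_r$ onto $\mathcal A_{r+k}$. Write $T_G^i=c_iA^i$ with $c_i=c\,c^{(1)}\cdots c^{(i-1)}$, where $c^{(i)}$ is the $\sigma^i$-twist of the $\LL$-part of $c$. The hypothesis then becomes the following system in the commutative algebra $\LL\E$:
\[
U_{a_{r'}}(G)\,c_{r'}=U_{a_r}(F)\,z^{(r)}u_F^{\epsilon_r},\qquad r'\equiv r+k\pmod n,\quad \epsilon_r\in\{0,1\}.
\]
\item \textbf{Basic tool.} $\{y\in\LL\E:U_a(H)y=U_a(H)\}=\LL^\times$ for $2a<s$. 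This, and not a comparison of $T_G$- and $A$-degrees, is where $j<ns/2$ is used.
\item \textbf{Forcing $k=0$.} Comparing $\LL$-dimensions of the components forces $k=0$ when $n\nmid j+1$.
\item \textbf{Case $k=0$, $n\nmid j$.} The components $r=0,1$ give $c=\gamma z^{(1)}z^{-1}$ with $\gamma\in\LL^\times$. Then $u_G/u_F=\prod_{i=0}^{n-1}c^{(i)}$ telescopes to $\N_{\LL/\K}(\gamma)\in\K^\times$.
\item \textbf{Remaining cases} ($n\mid j+1$ with $k\ne0$, or $n\mid j$). The degree-zero component yields at most a fractional-linear relation between $u_G$ and $u_F$ over $\K$, as in Lemma~\ref{lem:central-interval-determination}. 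It must be played against the components $r=1,2$, which exist because $n\ge3$, to exclude $k\neq0$ and to force $c\in\LL^\times$.
\end{itemize}

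None of this is in your proposal, so Step~3 as written restates the conclusion rather than proving it.
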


	We are now ready to prove the short-range main equivalence theorem stated at
	the beginning of this section.
	
	\begin{proof}[Proof of Theorem~\ref{thm:Hermitian-equivalence-polynomial-condition}]
		Assume that \(\newH_\ell(F)\) and \(\newH_\ell(G)\) are linearly
		Hermitian equivalent. By Lemma~\ref{lem:common-equivalence-reduction}, there exist
	\[
	Q\in\GL_\E(V)
	\qquad\text{and}\qquad
	C\in \Cen_{\End_\E(V)}(\LL)^\times
	\]
	such that
	\[
	\mathfrak C_\ell(G)
	=
	Q\mathfrak C_\ell(F)Q^{-1}C
	\]
	and
	\[
	Q\LL Q^{-1}=\LL.
	\]
		
	We now pass from the Hermitian codes to their \(\LL\)-linear spans. For \(H\in\{F,G\}\), set
	\[
		\mathfrak P_\ell(H):=
		\langle\mathfrak C_\ell(H)\rangle_{\LL}.
	\]
	Since \(Q\) normalizes \(\LL\), taking the \(\LL\)-linear span in the
	equivalence above gives
	\[
	\mathfrak P_\ell(G)
	=
	Q\mathfrak P_\ell(F)Q^{-1}C.
	\]
	
	By Lemma~\ref{lem:L-span-Hermitian-band-short}, since \(\ell<n\),
	\[
	\mathfrak P_\ell(H)
	=
	\left\langle
	T_H^{-\ell},\ldots,T_H^{-1},1,T_H,\ldots,T_H^\ell
	\right\rangle_{\LL}.
	\]
	Right multiplication by \(T_H^\ell\) gives
	\[
	\mathfrak S_{2\ell}(H)
	=
	\mathfrak P_\ell(H)T_H^\ell
	=
	\left\langle
	1,T_H,\ldots,T_H^{2\ell}
	\right\rangle_{\LL}.
	\]
	Consequently,
	\[
	\begin{aligned}
	\mathfrak S_{2\ell}(G)
	&=
	\mathfrak P_\ell(G)T_G^\ell\\
	&=
	Q\mathfrak P_\ell(F)Q^{-1}CT_G^\ell\\
	&=
	Q\mathfrak S_{2\ell}(F)
	\left(
	T_F^{-\ell}Q^{-1}CT_G^\ell
	\right).
	\end{aligned}
	\]
	Thus Proposition~\ref{prop:short-power-equivalence-arbitrary-length}
	applies with
	\[
	M:=Q
	\qquad\text{and}\qquad
	N:=T_F^{-\ell}Q^{-1}CT_G^\ell,
	\]
	because
	\[
	M\LL M^{-1}=Q\LL Q^{-1}=\LL\]
	and
	$	2\ell<ns/2$.
	\end{proof}
		
	\subsection{The range \(n<\ell<ns/4\)}
	\label{subsec:large-ell-PGL2}
	We now consider the range
	\[
		n<\ell<\frac{ns}{4}.
	\]
	This range is nonempty only when \(s>4\).  The short-range argument from
	Theorem~\ref{thm:Hermitian-equivalence-polynomial-condition} no longer
	applies, because the \(\LL\)-span of the Hermitian code need not contain every
	power of \(T_H\) in the interval \([-\ell,\ell]\); the obstruction occurs
	precisely at exponents divisible by \(n\).  Nevertheless, the central
	elements in this interval still provide a useful invariant.  They lead to a
	classification through the natural \(\mathrm{PGL}_2(\K)\)-action on
	\[
		u_H+u_H^{-1}.
	\]
	
	For \(H\in\{F,G\}\), recall that \(u_H\in\E=\F_{q^s}\) is a root of \(H\).
	Since \(H\) is self-reciprocal of even degree, we have
	\[
	u_H^\varrho=u_H^{-1},
	\]
	where \(\varrho:x\mapsto x^{q^{s/2}}\) is the nontrivial automorphism of
	\(\E/\F\).  We put
	\[
	w_H:=u_H+u_H^{-1}.
	\]
	Then \(w_H\in\F=\F_{q^{s/2}}\).
	
	\begin{lemma}\label{lem:degree-of-wH}
		Let \(H\in\K[Y]\) be monic, irreducible, and self-reciprocal of even
		degree \(s>2\), and let \(u_H\) be a root of \(H\).  Then
		\[
		[\K(w_H):\K]=\frac{s}{2}.
		\]
	\end{lemma}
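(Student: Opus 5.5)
The plan is to sandwich \(\K(w_H)\) between a degree-two extension and the whole field. Since \(H\) is irreducible of degree \(s\), we have \(\K(u_H)=\E=\F_{q^s}\), so \([\K(u_H):\K]=s\). The element \(u_H\) is a root of
\[
Y^2-w_HY+1\in\K(w_H)[Y],
\]
hence \([\K(u_H):\K(w_H)]\le 2\). By the tower law this gives \([\K(w_H):\K]\ge s/2\).

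For the reverse inequality I will show that \(w_H\) lies in the fixed field \(\F=\E^\varrho\). Self-reciprocity means that \(u_H^{-1}\) is also a root of \(H\), so \(u_H^{-1}=u_H^{q^j}\) for some \(j\). Then \(u_H^{q^{2j}}=u_H\), so \(s\mid 2j\). Moreover \(u_H\neq u_H^{-1}\), since otherwise \(u_H=\pm1\) would have degree \(1<s\). Hence \(j\equiv s/2\pmod s\), and therefore \(u_H^{\varrho}=u_H^{q^{s/2}}=u_H^{-1}\), as already recorded before the lemma. Consequently
\[
\varrho(w_H)=u_H^{-1}+u_H=w_H,
\]
so \(w_H\in\F=\F_{q^{s/2}}\) and \([\K(w_H):\K]\le s/2\).

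Combining the two inequalities gives \([\K(w_H):\K]=s/2\). This is the same argument as in Proposition~\ref{prop:rho-on-center}, transported to \(\E\), where the center generator \(X^n\) is replaced by \(u_H\).

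There is no real obstacle here. The only point needing care is that \(u_H\neq u_H^{\pm1}\)-fixed behaviour does not degenerate, i.e. \(u_H\neq u_H^{-1}\); this follows from \(s>2\), and indeed already from \(s\ge2\). The hypothesis \(s>2\) is therefore not needed for the equality itself. It presumably serves later to ensure \(w_H\notin\K\).
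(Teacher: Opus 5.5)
Your proof is correct and follows the paper's argument. The upper bound comes from \(w_H\in\E^\varrho=\F_{q^{s/2}}\), and the lower bound from \(u_H\) being a root of \(Y^2-w_HY+1\) over \(\K(w_H)\) together with the tower law. The only difference is that you justify \(u_H^\varrho=u_H^{-1}\) with a short Frobenius argument, where the paper takes it as already established; your remark that \(s>2\) is not needed for the equality itself is also right.
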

	
	\begin{proof}
		We know that \(w_H\in\E^\varrho=\F_{q^{s/2}}\), and hence
		\[
		[\K(w_H):\K]\le \frac{s}{2}.
		\]
		On the other hand, \(u_H\) satisfies
		\[
		Y^2-w_HY+1=0
		\]
		over \(\K(w_H)\).  Therefore
		\[
		[\K(u_H):\K(w_H)]\le 2.
		\]
		Since \(H\) has degree \(s\), we have \([\K(u_H):\K]=s\).  Hence
		\[
		s=[\K(u_H):\K]
		=
		[\K(u_H):\K(w_H)]\,[\K(w_H):\K]
		\le
		2[\K(w_H):\K].
		\]
		It follows that
		\[
		[\K(w_H):\K]\ge \frac{s}{2}.
		\]
		Therefore
		\[
		[\K(w_H):\K]=\frac{s}{2}. \qedhere
		\]
	\end{proof}
	
	We work in the commutative \(\LL\)-algebra
	\[
	\mathfrak{A}:=\F\otimes_{\K}\LL.
	\]
	We identify \(\F\) with the subalgebra \(\F\otimes1\) and \(\LL\) with the
	subalgebra \(1\otimes\LL\).  Thus \(\mathfrak{A}\) is naturally an \(\LL\)-vector
	space.  For \(z\in\F\), we write simply \(z\) instead of \(z\otimes1\).
	With this convention,
	\[
	\left\langle1,w_H,\ldots,w_H^t\right\rangle_{\LL}
	\]
	denotes the \(\LL\)-subspace of \(\mathfrak{A}\) generated by
	\(1,w_H,\ldots,w_H^t\), where \(t\) is a nonnegative integer.
	\begin{lemma}
		\label{lem:central-interval-determination}
		Let \(\ell\) be an integer such that
		\[
		n<\ell<\frac{ns}{4},
		\]
		and set
		\[
		a_0:=\left\lfloor\frac{\ell}{n}\right\rfloor.
		\]
		Assume that
		\[
		\left\langle
		1,w_G,\ldots,w_G^{a_0}
		\right\rangle_{\LL}
		=
		\left\langle
		1,w_F,\ldots,w_F^{a_0}
		\right\rangle_{\LL}C
		\]
		for some invertible element \(C\in \mathfrak{A}^\times\).
		Then there exist \(a,b,c,d\in\K\), with
		\[
		ad-bc\ne0,
		\]
		such that
		\[
		w_G=\frac{aw_F+b}{cw_F+d}.
		\]
	\end{lemma}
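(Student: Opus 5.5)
The plan is to translate the hypothesis into polynomial identities in one variable and then read off a fractional-linear relation by a degree argument.

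\textbf{Setup.} By Lemma~\ref{lem:degree-of-wH} we have \(\K(w_F)=\F\), since \(s>4\). Hence \(1,w_F,\ldots,w_F^{s/2-1}\) is a \(\K\)-basis of \(\F\). Consequently it is also an \(\LL\)-basis of \(\mathfrak A=\F\otimes_\K\LL\), and
\[
\mathfrak A\cong\LL[Y]/(P_F(Y)),
\]
where \(P_F\in\K[Y]\) is the minimal polynomial of \(w_F\), of degree \(s/2\). The hypothesis \(n<\ell<ns/4\) gives
\[
1\le a_0<\frac s4,\qquad\text{so}\qquad 2a_0<\frac s2 .
\]
Every element of \(\langle1,w_F,\ldots,w_F^{a_0}\rangle_\LL\) is uniquely \(f(w_F)\) with \(f\in\LL[Y]\) and \(\deg f\le a_0\). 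The same holds for \(w_G\), so both spans have \(\LL\)-dimension \(a_0+1\).

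\textbf{Step 1: polynomial identities.} From the hypothesis, for each \(0\le j\le a_0\) there is \(f_j\in\LL[Y]\) with \(\deg f_j\le a_0\) and
\[
w_G^jC=f_j(w_F).
\]
Here \(f_0(w_F)=C\) is invertible, and \(f_0,\ldots,f_{a_0}\) are \(\LL\)-linearly independent because right multiplication by \(C\) is an \(\LL\)-linear bijection. Writing \(w_G^jC\cdot C=w_G\cdot C\cdot w_G^{j-1}C\) in the commutative algebra \(\mathfrak A\) gives
\[
f_j(w_F)\,f_0(w_F)=f_1(w_F)\,f_{j-1}(w_F).
\]
Both sides are polynomials of degree \(\le 2a_0<s/2=\deg P_F\), so this is a genuine identity \(f_jf_0=f_1f_{j-1}\) in \(\LL[Y]\). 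Iterating yields \(f_jf_0^{\,j-1}=f_1^{\,j}\) for all \(j\le a_0\).

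\textbf{Step 2: degree bound.} Write \(f_1/f_0=A/B\) with \(A,B\in\LL[Y]\) coprime. Then \(f_j=f_0(A/B)^j\), so \(B^j\mid f_0\) for all \(j\le a_0\). Taking \(j=a_0\) gives
\[
a_0\deg B\le\deg f_0\le a_0,
\]
hence \(\deg B\le1\). Moreover \(B^{a_0}\mid f_0\) and \(\deg f_{a_0}=\deg f_0+a_0(\deg A-\deg B)\le a_0\), which gives \(\deg A\le1\) as well. If \(A/B\) were constant, all the \(f_j\) would be proportional to \(f_0\), contradicting linear independence (since \(a_0\ge1\)). So \(A/B\) is a nonconstant ratio of polynomials of degree \(\le1\). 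Since \(f_0(w_F)\) is invertible, \(B(w_F)\) is invertible as well, and we obtain
\[
w_G\,(\gamma w_F+\delta)=\alpha w_F+\beta\quad\text{in }\mathfrak A,
\]
with \(\alpha,\beta,\gamma,\delta\in\LL\) and \(\alpha\delta-\beta\gamma\ne0\).

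\textbf{Step 3: descent to \(\K\) (main obstacle).} The four elements \(1,w_F,w_G,w_Fw_G\) lie in \(\F\otimes1\). Since \(\mathfrak A=\F\otimes_\K\LL\), the \(\LL\)-linear relations among them are exactly the \(\LL\)-span of the \(\K\)-linear relations among them in \(\F\). Let \(\mathcal R\subseteq\K^4\) be the space of \(\K\)-relations.

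If \(\dim\mathcal R=1\), spanned by \((a,b,c,d)\), then \((\alpha,\beta,\gamma,\delta)\) is an \(\LL\)-multiple of it. Hence \(ad-bc\ne0\) and \(w_G=(aw_F+b)/(cw_F+d)\), as required.

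If \(\dim\mathcal R\ge2\), take two independent relations and eliminate \(w_G\). This produces a relation over \(\K\) of degree \(\le2\) in \(w_F\). Since \([\K(w_F):\K]=s/2\ge3\), the relation must be trivial. Tracing back, this forces \(w_G\in\K\) or a degenerate relation, which contradicts \(\deg_\K w_G=s/2\ge3\).

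The case analysis in this descent step is the part I expect to require the most care.
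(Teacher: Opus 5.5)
Your proof is correct. Its skeleton is the same as the paper's. You convert the hypothesis into the identities $f_jf_0=f_1f_{j-1}$ in $\LL[Y]$, which is legitimate because $2a_0<s/2$. You then extract a fractional-linear relation with coefficients in $\LL$, and descend to $\K$ by extension of scalars. One small point: the hypothesis literally gives $w_G^j=f_j(w_F)C$, so your $C$ is the inverse of the one in the statement. This is harmless.

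The genuine difference is in the middle step. The paper treats $a_0=1$ separately and then splits into two cases. If $B_1\nmid B_0$, it tracks the exponent of an irreducible factor $P$ of $B_1$ through the recursion; the other case is $B_1\mid B_0$. You instead pass to $f_jf_0^{\,j-1}=f_1^{\,j}$ and write $f_1/f_0=A/B$ in lowest terms. This gives $B^{a_0}\mid f_0$, and then $\deg A,\deg B\le 1$ follow at once. Your route is uniform in $a_0$ and noticeably shorter.

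In the descent, however, you do more work than necessary. The paper observes that any nonzero $\K$-relation $c\,w_Fw_G+d\,w_G-a\,w_F-b=0$ already has the required shape:
\begin{itemize}
\item if $cw_F+d=0$, then $c=d=0$ (as $w_F\notin\K$), and hence $a=b=0$;
\item if $ad-bc=0$, then $w_G\in\K$.
\end{itemize}
So the only input needed from the $\LL$-level is that some nonzero $\LL$-relation among $1,w_F,w_G,w_Fw_G$ exists. The case split on $\dim\mathcal R$ is therefore superfluous.

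Your branch $\dim\mathcal R\ge 2$, which you left as a sketch, does close. Combine two independent relations to kill the $w_Fw_G$ term; this yields $w_G=a'w_F+b'$ with $a'\neq 0$. Substitute this into another independent relation and use $[\K(w_F):\K]=s/2\ge 3$ to force the two relations to be proportional. With the observation above, though, you can drop this branch entirely. Your $\dim\mathcal R=1$ argument via $\alpha\delta-\beta\gamma=\mu^2(ad-bc)$ is correct and a neat way to transport nondegeneracy from $\LL$ to $\K$. That nondegeneracy also follows directly from $w_G\notin\K$.
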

	\begin{proof}
		By Lemma~\ref{lem:degree-of-wH}, both \(w_F\) and \(w_G\) have degree
		\(s/2\) over \(\K\).  Set
		\[
		t:=\frac{s}{2}.
		\]
		Since
		\[
		n<\ell<\frac{ns}{4},
		\]
		we have
		\[
		1\le a_0=\left\lfloor\frac{\ell}{n}\right\rfloor<\frac{s}{4}=\frac t2.
		\]
		In particular,
		\[
		2a_0<t.
		\]
		Since \(w_F\) has degree \(t\) over \(\K\), the elements
		\[
		1,w_F,\ldots,w_F^{t-1}
		\]
		form a \(\K\)-basis of \(\F\).  After extension of scalars from \(\K\) to
		\(\LL\), they form an \(\LL\)-basis of
		\[
		\F\otimes_{\K}\LL.
		\]
		Hence evaluation at \(w_F\) is injective on the set of polynomials in
		\(\LL[X]\) of degree at most \(2a_0\).  The same holds for \(w_G\).
		
		By the assumed equality of the two \(\LL\)-subspaces, for every
		\(i=0,\ldots,a_0\) there exists a polynomial
		\[
		B_i(X)\in\LL[X],
		\qquad
		\deg B_i\le a_0,
		\]
		such that
		\[
		w_G^i=B_i(w_F)C.
		\]
		For \(i=0\), this gives
		\[
		1=B_0(w_F)C.
		\]
		Thus \(B_0(w_F)\) is invertible and
		\[
		C=B_0(w_F)^{-1}.
		\]
		In particular,
		\[
		w_G=\frac{B_1(w_F)}{B_0(w_F)}.
		\]
		
		For \(i=1,\ldots,a_0\), using
		\[
		w_G^i=w_G^{i-1}w_G,
		\]
		we obtain
		\[
		B_i(w_F)C
		=
		B_{i-1}(w_F)C\,B_1(w_F)C.
		\]
		Multiplying by \(C^{-1}=B_0(w_F)\), we get
		\[
		B_i(w_F)B_0(w_F)
		=
		B_{i-1}(w_F)B_1(w_F).
		\]
		Both sides are evaluations at \(w_F\) of polynomials of degree at most
		\(2a_0\).  Since
		\[
		2a_0<t,
		\]
		evaluation at \(w_F\) is injective in this degree range.  Therefore
		\[
		B_i(X)B_0(X)=B_{i-1}(X)B_1(X),
		\qquad
		1\le i\le a_0,
		\]
		as polynomial identities in \(\LL[X]\).
		Set
		\[
		k:=a_0.
		\]
		Since \(1\) and \(w_G\) are linearly independent over \(\LL\), we have
		$
		w_G\notin\LL$. Moreover, each polynomial \(B_i\) is
		nonzero: indeed, \(B_i(w_F)C=w_G^i\neq0\), and evaluation at \(w_F\)
		is injective for polynomials of degree at most \(k\).
		Suppose first that \(k=1\).  Write
		\[
		B_1(X)=a_1X+b_1
		\qquad\text{and}\qquad
		B_0(X)=c_1X+d_1,
		\]
		with \(a_1,b_1,c_1,d_1\in\LL\).  Since \(B_0(w_F)\) is invertible, we
		obtain
		\[
		w_G
		=
		\frac{B_1(w_F)}{B_0(w_F)}
		=
		\frac{a_1w_F+b_1}{c_1w_F+d_1}.
		\]
		Furthermore, \(B_1\) and \(B_0\) cannot be proportional over \(\LL\),
		since otherwise \(w_G\in\LL\).  Therefore,
		\[
		a_1d_1-b_1c_1\neq0.
		\]
		
		Assume now that \(k>1\).  We distinguish two cases.
		
		\noindent
		\emph{Case 1: \(B_1\) does not divide \(B_0\).}
		Since \(B_1\nmid B_0\), there exist an irreducible polynomial
		\(P\in\LL[X]\), an integer \(r\geq 1\), and a polynomial
		\(G_1\in\LL[X]\) such that
		\[
		B_1=G_1P^r,
		\qquad
		\gcd(G_1,P)=1,
		\]
		and
		\[
		P^r\nmid B_0.
		\]
		
		We claim that
		\[
		P^{r+i-1}\mid B_i
		\]
		for every \(i=1,\ldots,k\). For \(i=1\), this follows immediately from
		\[
		B_1=G_1P^r.
		\]
		Assume that
		\[
		P^{r+i-2}\mid B_{i-1}
		\]
		for some \(i\in\{2,\ldots,k\}\). From
		\[
		B_iB_0=B_{i-1}B_1
		\]
		it follows that
		\[
		P^{2r+i-2}\mid B_iB_0.
		\]
		Since \(P^r\nmid B_0\), at most \(r-1\) factors equal to \(P\) divide
		\(B_0\). Therefore,
		\[
		P^{r+i-1}\mid B_i.
		\]
		This proves the claim by induction.
		In particular,
		\[
		P^{r+k-1}\mid B_k.
		\]
		Since \(B_k\neq 0\) and \(\deg B_k\leq k\), we obtain
		\[
		k
		\geq
		\deg B_k
		\geq
		(r+k-1)\deg P.
		\]
		
		Since \(r\geq 1\) and \(\deg P\geq 1\), this forces
		\[
		r=1
		\qquad\text{and}\qquad
		\deg P=1.
		\]
		Moreover, \(P^k\mid B_k\) and \(\deg B_k\leq k\), so
		\[
		B_k=\lambda P^k
		\]
		for some \(\lambda\in\LL^\times\).
		We now have
		\[
		B_1=G_1P,
		\qquad
		\gcd(G_1,P)=1.
		\]
		Using the identity
		\[
		B_kB_0=B_{k-1}B_1,
		\]
		we obtain
		\[
		\lambda P^kB_0=B_{k-1}G_1P.
		\]
		After cancelling \(P\), this becomes
		\[
		\lambda P^{k-1}B_0=B_{k-1}G_1.
		\]
		Thus \(G_1\) divides \(P^{k-1}B_0\). Since
		\[
		\gcd(G_1,P)=1,
		\]
		it follows that \(G_1\mid B_0\). Set
		\[
		E:=\frac{B_0}{G_1}.
		\]
		Then
		\[
		B_{k-1}=\lambda P^{k-1}E.
		\]
		Since \(\deg B_{k-1}\leq k\) and \(\deg P=1\), we conclude that
		\[
		\deg E\leq 1.
		\]
		Furthermore,
		\[
		B_0(w_F)=G_1(w_F)E(w_F).
		\]
		Therefore,
		\[
		w_G
		=
		\frac{B_1(w_F)}{B_0(w_F)}
		=
		\frac{P(w_F)}{E(w_F)}.
		\]
		Write
		\[
		P(X)=a_1X+b_1
		\qquad\text{and}\qquad
		E(X)=c_1X+d_1,
		\]
		with \(a_1,b_1,c_1,d_1\in\LL\). We obtain
		\[
		w_G
		=
		\frac{a_1w_F+b_1}{c_1w_F+d_1}.
		\]
		Finally, \(P\) and \(E\) cannot be proportional over \(\LL\), since
		otherwise their quotient would belong to \(\LL\), and hence
		\(w_G\in\LL\), a contradiction. Consequently,
		\[
		a_1d_1-b_1c_1\neq 0.
		\]
		
		\noindent
		\emph{Case 2: \(B_1\) divides \(B_0\).}
		Write
		\[
		B_0=B_1E
		\]
		for some \(E\in\LL[X]\).  Since
		\[
		w_G
		=
		\frac{B_1(w_F)}{B_0(w_F)}
		=
		E(w_F)^{-1},
		\]
		the polynomial \(E\) cannot be constant, as otherwise \(w_G\in\LL\).
		Thus
		\[
		\deg E\geq1.
		\]
		Substituting \(B_0=B_1E\) into
		\[
		B_iB_0=B_{i-1}B_1
		\]
		and cancelling the nonzero polynomial \(B_1\), we obtain
		\[
		B_iE=B_{i-1},
		\qquad
		1\leq i\leq k.
		\]
		It follows inductively that
		\[
		B_i=B_kE^{k-i},
		\qquad
		0\leq i\leq k.
		\]
		In particular,
		\[
		B_0=B_kE^k.
		\]
		Therefore,
		\[
		k
		\geq
		\deg B_0
		=
		\deg B_k+k\deg E.
		\]
		
		Since \(B_k\neq0\) and \(\deg E\geq1\), we conclude that
		\[
		B_k\in\LL^\times
		\qquad\text{and}\qquad
		\deg E=1.
		\]
		Write
		\[
		E(X)=c_1X+d_1,
		\qquad
		c_1\neq0.
		\]
		Then
		\[
		w_G
		=
		\frac{1}{c_1w_F+d_1}
		=
		\frac{0w_F+1}{c_1w_F+d_1}.
		\]
		Thus, taking
		\[
		a_1=0
		\qquad\text{and}\qquad
		b_1=1,
		\]
		we have
		\[
		a_1d_1-b_1c_1=-c_1\neq0.
		\]
		
		In all cases, there exist \(a_1,b_1,c_1,d_1\in\LL\), with
		\[
		a_1d_1-b_1c_1\neq0,
		\]
		such that
		\[
		w_G=\frac{a_1w_F+b_1}{c_1w_F+d_1}.
		\]
		Equivalently,
		\[
		c_1w_Fw_G+d_1w_G-a_1w_F-b_1=0
		\]
		inside \(\F\otimes_{\K}\LL\).
		
		It remains to descend the coefficients from \(\LL\) to \(\K\).  The last
		identity gives an \(\LL\)-linear dependence among the four elements
		\[
		1,\quad w_F,\quad w_G,\quad w_Fw_G
		\]
		of the \(\K\)-vector space \(\F\).  Since
		\[
		\F\otimes_{\K}\LL
		\]
		is obtained from \(\F\) by extension of scalars, these four elements are
		\(\LL\)-linearly dependent after extension of scalars if and only if they are
		\(\K\)-linearly dependent before extension of scalars.  Hence there exist
		\(a,b,c,d\in\K\), not all zero, such that
		\[
		cw_Fw_G+dw_G-aw_F-b=0.
		\]
		Equivalently,
		\[
		(cw_F+d)w_G=aw_F+b.
		\]
		We claim that
		\[
		cw_F+d\ne0.
		\]
		Indeed, if \(cw_F+d=0\), then either \(c=0=d\), or \(w_F\in\K\).  The latter is
		impossible because \(w_F\) has degree \(t=s/2>1\) over \(\K\).  If
		\(c=d=0\), then the relation above gives
		\[
		aw_F+b=0,
		\]
		again forcing \(w_F\in\K\), unless \(a=b=0\), which is excluded.  Thus
		\[
		cw_F+d\ne0.
		\]
		Since \(\F\) is a field, \(cw_F+d\) is invertible.  Therefore
		\[
		w_G=\frac{aw_F+b}{cw_F+d}.
		\]
		
		Finally, the matrix
		\[
		\begin{pmatrix}
			a & b\\
			c & d
		\end{pmatrix}
		\]
		is nonsingular.  If it were singular, then the two linear polynomials
		\[
		aX+b
		\qquad\text{and}\qquad
		cX+d
		\]
		would be proportional over \(\K\), and the quotient
		\[
		\frac{aX+b}{cX+d}
		\]
		would be constant in \(\K\).  Hence the equality above would force
		\[
		w_G\in\K,
		\]
		contradicting the fact that \(w_G\) has degree \(t=s/2>1\) over \(\K\).
		Thus
		\[
		ad-bc\ne0.
		\]
		This proves the assertion.
	\end{proof}

 We now isolate the semilinearity component needed to detect it.
	
	For \(r\in\mathbb Z/n\mathbb Z\), define
	\[
	\mathcal A_r
	:=
	\{A\in\End_{\E}(V): A\alpha=\sigma^r(\alpha)A
	\text{ for every }\alpha\in\LL\}.
	\]
	Then
	\[
	\End_{\E}(V)
	=
	\bigoplus_{r\in\mathbb Z/n\mathbb Z}\mathcal A_r,
	\]
	see, e.g., \cite[Lemma~2.3]{Dempwolff2011autotopism}.
	The component \(\mathcal A_0\) consists precisely of the operators commuting
	with \(\LL\).
	
	\begin{lemma}\label{lem:central-component-Hermitian-band}
		Let \(H\in\{F,G\}\), and put
		\[
		a:=\left\lfloor\frac{\ell}{n}\right\rfloor.
		\]
		Then
		\[
		\mathfrak C_\ell(H)\cap\mathcal A_0
		=
		\left\langle
		1,w_H,w_H^2,\ldots,w_H^a
		\right\rangle_{\LL}.
		\]
	\end{lemma}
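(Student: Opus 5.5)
The plan is to decompose $\mathfrak C_\ell(H)$ according to the grading $\End_\E(V)=\bigoplus_{r\in\mathbb Z/n\mathbb Z}\mathcal A_r$, and then to identify its degree-zero part with polynomials in $w_H$. We work in the range of this subsection, so $\ell<ns/4<ns/2$.

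\emph{Step 1 (homogeneous components).} For $\alpha\in\LL$ and $i\in\mathbb Z$, the semilinearity of $T_H$ gives $T_H^i\beta=\sigma^i(\beta)T_H^i$. Hence
\[
(\alpha T_H^i)\beta=\sigma^i(\beta)(\alpha T_H^i)\qquad\text{for every }\beta\in\LL,
\]
that is, $\alpha T_H^i\in\mathcal A_{i\bmod n}$. Take a general element of $\mathfrak C_\ell(H)$, as in \eqref{eq:semilinear-Hermitian-band},
\[
A=\alpha_0+\sum_{i=1}^{\ell}\bigl(\sigma^{-i}(\alpha_i)T_H^{-i}+\alpha_iT_H^i\bigr).
\]
Its $\mathcal A_r$-component is the partial sum over the exponents $i\in[-\ell,\ell]$ with $i\equiv r\pmod n$. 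By the directness of the decomposition, $A\in\mathcal A_0$ if and only if each of these partial sums vanishes for $r\neq0$.

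\emph{Step 2 (vanishing of the noncentral coefficients).} The map $\Psi_H$ of Theorem~\ref{thm:external-semilinear-realization} is an $\LL$-vector-space isomorphism. Since $[-\ell,\ell]$ is an interval of length $2\ell+1<ns$, the monomials $T_H^i$ with $i\in[-\ell,\ell]$ are $\LL$-linearly independent; this is the elementary fact used in Section~\ref{sec:lyapunov-invariants}, transported through $\Psi_H$. So each vanishing partial sum forces its coefficients to be zero, and $\alpha_i=0$ for every $i\not\equiv0\pmod n$. Conversely, every element built only from exponents $i=kn$ with $|k|\le a$ lies in $\mathcal A_0$. Using $\sigma^{-kn}=\id$ and $T_H^{\pm kn}=u_H^{\pm k}\id_V$, we obtain
\[
\mathfrak C_\ell(H)\cap\mathcal A_0=\Bigl\{\alpha_0+\sum_{k=1}^{a}\alpha_{kn}\bigl(u_H^k+u_H^{-k}\bigr):\alpha_0,\alpha_{kn}\in\LL\Bigr\}
=\bigl\langle 1,\,u_H+u_H^{-1},\ldots,u_H^a+u_H^{-a}\bigr\rangle_{\LL}.
\]

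\emph{Step 3 (change of basis to powers of $w_H$).} The identity
\[
u^{k+1}+u^{-k-1}=(u+u^{-1})(u^k+u^{-k})-(u^{k-1}+u^{-k+1})
\]
shows by induction that $u_H^k+u_H^{-k}=D_k(w_H)$. Here $D_k\in\K[Y]$ is monic of degree $k$ (a Dickson polynomial), with $D_0=2$ and $D_1=Y$. For the spanning set we use the constant $1$ instead of $D_0=2$, which is harmless since $1$ spans the same $\LL$-line even when $p=2$. The transition matrix from $1,D_1(w_H),\ldots,D_a(w_H)$ to $1,w_H,\ldots,w_H^a$ is then unitriangular with entries in $\K$, and $\LL$ commutes with $\E\id_V$. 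Therefore the two $\LL$-spans coincide, which gives the claim.

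The main obstacle is the linear-independence step in Step 2. Most of it is routine, but one must make sure that the exponents $\pm\ell$ do not wrap around modulo $ns$, which is exactly what $\ell<ns/4$ guarantees. One must also check that scalar multiples $\alpha\in\LL$ of central elements are read in the $\LL$-span, as in the convention adopted for $\mathfrak A=\F\otimes_\K\LL$.
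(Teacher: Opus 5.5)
Your proposal is correct and follows the same route as the paper's proof. You split $\mathfrak C_\ell(H)$ along the components $\mathcal A_r$, kill the coefficients $\alpha_i$ with $n\nmid i$, and pass from $u_H^k+u_H^{-k}$ to the powers of $w_H$ through the unitriangular recurrence; you also make explicit the $\LL$-linear independence of the $T_H^i$, $|i|\le\ell$, via $\Psi_H$ (which needs only $2\ell<ns$, not $\ell<ns/4$), a step the paper uses without comment.
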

	
	\begin{proof}
		From the explicit expression
		\[
		\mathfrak C_\ell(H)
		=
		\left\{
		\alpha_0+
		\sum_{i=1}^{\ell}
		\left(
		\sigma^{-i}(\alpha_i)T_H^{-i}
		+
		\alpha_iT_H^i
		\right)
		:
		\alpha_0,\ldots,\alpha_\ell\in\LL
		\right\},
		\]
		the terms lying in \(\mathcal A_0\) are precisely those with exponent
		divisible by \(n\). Indeed, for every integer \(j \geq 0\), one has
	\[
	T_H^j\in\mathcal A_{j}.
	\] Since
	$
	\End_{\E}(V)
	=
	\bigoplus_{r\in\mathbb Z/n\mathbb Z}\mathcal A_r,
	$
	an element of \(\mathfrak C_\ell(H)\) belongs to \(\mathcal A_0\)
	only if all its components in \(\mathcal A_r\), with \(r\neq0\),
	vanish. This
	implies that
	\[
	\alpha_i=0
	\qquad
	\text{whenever }n\nmid i.
	\]
	Thus the only possible nonzero indices are
	\[
	j=t n,
	\qquad
	t=1,\ldots,
	a.
	\]
	For these indices,
	\[
	\sigma^{-j}=\id_{\LL},
	\qquad
	T_H^{j}=u_H^t\id_V,
	\qquad
	T_H^{-j}=u_H^{-t}\id_V.
	\]
	Hence
		\[
		\mathfrak C_\ell(H)\cap\mathcal A_0
		=
		\left\langle
		1,\,
		u_H+u_H^{-1},\,
		u_H^2+u_H^{-2},\,
		\ldots,\,
		u_H^a+u_H^{-a}
		\right\rangle_{\LL}.
		\]
		Now put
		\[
		w_H=u_H+u_H^{-1}.
		\]
		For every \(j\ge1\), the element
		\[
		u_H^j+u_H^{-j}
		\]
		is a monic polynomial of degree \(j\) in \(w_H\).  Indeed,
		\[
		u_H^{j+1}+u_H^{-(j+1)}
		=
		w_H(u_H^j+u_H^{-j})
		-
		(u_H^{j-1}+u_H^{-(j-1)}).
		\]
		Therefore
		\[
		\left\langle
		1,\,
		u_H+u_H^{-1},\,
		u_H^2+u_H^{-2},\,
		\ldots,\,
		u_H^a+u_H^{-a}
		\right\rangle_{\LL}
		=
		\left\langle
		1,w_H,w_H^2,\ldots,w_H^a
		\right\rangle_{\LL}.
		\]
		This proves the claim.
	\end{proof}
	
	We also need the following simple consequence of the Lyapunov computation.
	\begin{lemma}\label{lem:both-sided-normalization}
	Let \(Q\) and \(C\) be as in
	Lemma~\ref{lem:common-equivalence-reduction}.  Then the map
	\[
	A\longmapsto QAQ^{-1}C
	\]
	preserves each \(\mathcal A_r\).
	\end{lemma}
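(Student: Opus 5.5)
The map \(A\mapsto QAQ^{-1}C\) is a composition of conjugation by \(Q\) and right multiplication by \(C\). I will show that each of these two operations preserves every component \(\mathcal A_r\); the lemma then follows at once. The needed inputs are already supplied by Lemma~\ref{lem:common-equivalence-reduction}: \(Q\LL Q^{-1}=\LL\) and \(C\in\Cen_{\End_\E(V)}(\LL)^\times=\mathcal A_0^\times\).

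\emph{Right multiplication by \(C\).} Let \(A\in\mathcal A_r\) and \(\alpha\in\LL\). Since \(C\) commutes with \(\alpha\),
\[
(AC)\alpha=A\alpha C=\sigma^r(\alpha)AC,
\]
so \(AC\in\mathcal A_r\). This step is immediate.

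\emph{Conjugation by \(Q\).} Because \(Q\) normalizes \(\LL\), the map \(\alpha\mapsto Q^{-1}\alpha Q\) is a \(\K\)-algebra automorphism of \(\LL\cong\F_{q^n}\). It fixes \(\K\), since \(\K\subseteq\E\id_V\) and \(Q\) is \(\E\)-linear. Hence this automorphism equals \(\sigma^k\) for some \(k\in\mathbb Z/n\mathbb Z\), that is,
\[
Q^{-1}\alpha Q=\sigma^k(\alpha)\qquad\text{for all }\alpha\in\LL.
\]
Now let \(A\in\mathcal A_r\). Then
\[
QAQ^{-1}\alpha=QA\sigma^k(\alpha)Q^{-1}=Q\sigma^{r+k}(\alpha)AQ^{-1}.
\]
Next I rewrite \(Q\sigma^{r+k}(\alpha)\) as \(\sigma^{-k}(\sigma^{r+k}(\alpha))Q=\sigma^r(\alpha)Q\). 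For this, note that \(Q\beta Q^{-1}=\sigma^{-k}(\beta)\) for every \(\beta\in\LL\): this follows by inverting the automorphism \(\alpha\mapsto Q^{-1}\alpha Q\), and it is valid because \(\sigma\) commutes with its own powers. Substituting gives
\[
QAQ^{-1}\alpha=\sigma^r(\alpha)\,QAQ^{-1},
\]
so \(QAQ^{-1}\in\mathcal A_r\).

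Combining the two steps, \(QAQ^{-1}C\in\mathcal A_r\) whenever \(A\in\mathcal A_r\). No step here is hard; the one point needing care is identifying conjugation by \(Q\) with a power of \(\sigma\). That rests on \(\Gal(\LL/\K)=\langle\sigma\rangle\) and on \(Q\) being \(\E\)-linear, which is what makes the automorphism fix \(\K\).
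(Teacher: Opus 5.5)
Your proof is correct and follows essentially the same route as the paper: you identify conjugation by \(Q\) on \(\LL\) with a power of \(\sigma\), verify directly that \(QAQ^{-1}\) satisfies the defining relation of \(\mathcal A_r\), and use \(C\in\mathcal A_0\) to handle right multiplication. Your remark that the induced automorphism fixes \(\K\) fills in a step the paper leaves implicit: since \(m_\K=\K\id_V\subseteq\E\id_V\) and \(Q\) is \(\E\)-linear, the automorphism is a power of \(\sigma\), and not merely a power of the absolute Frobenius.
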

	
	\begin{proof}
	Since
	\[
	Q\LL Q^{-1}=\LL,
	\]
	there exists \(h\in\mathbb Z/n\mathbb Z\) such that
	\[
	Q\alpha Q^{-1}
	=
	\sigma^h(\alpha)
	\qquad
	\text{for every }\alpha\in\LL.
	\]
	Equivalently,
	\[
	Q^{-1}\alpha
	=
	\sigma^{-h}(\alpha)Q^{-1}.
	\]
	Let \(A\in\mathcal A_r\).  Then, for every \(\alpha\in\LL\),
	\[
	\begin{aligned}
	(QAQ^{-1})\alpha
	&=
	QA\sigma^{-h}(\alpha)Q^{-1}\\
	&=
	Q\sigma^{r-h}(\alpha)AQ^{-1}\\
	&=
	\sigma^r(\alpha)QAQ^{-1}.
	\end{aligned}
	\]
	Hence
	\[
	QAQ^{-1}\in\mathcal A_r.
	\]
	Moreover,
	\[
	C\in \Cen_{\End_\E(V)}(\LL)=\mathcal A_0.
	\]
	Therefore right multiplication by \(C\) preserves every
	\(\mathcal A_r\).  Consequently,
	\[
	QAQ^{-1}C\in\mathcal A_r
	\]
	for every \(A\in\mathcal A_r\).  Since the map is bijective, it
	preserves each component \(\mathcal A_r\).
	\end{proof}
    
	We are now ready to prove the large-range main equivalence theorem stated at
	the beginning of this section.
	\begin{proof}[Proof of Theorem~\ref{thm:Hermitian-equivalence-PGL2-large-ell}]
		Assume that \(\newH_\ell(F)\) and \(\newH_\ell(G)\) are linearly
		Hermitian equivalent. By Lemma~\ref{lem:common-equivalence-reduction}, there exist
	\[
	Q\in\GL_\E(V)
	\qquad\text{and}\qquad
	C\in \Cen_{\End_\E(V)}(\LL)^\times
	\]
	such that
	\[
	\mathfrak C_\ell(G)
	=
	Q\mathfrak C_\ell(F)Q^{-1}C
	\]
	and
	\[
	Q\LL Q^{-1}=\LL.
	\]
		
		By Lemma~\ref{lem:both-sided-normalization},  the map
	\[
	A\longmapsto QAQ^{-1}C
	\]
	preserves \(\mathcal A_0\). Consequently, we have
	\[
	\mathfrak C_\ell(G)\cap\mathcal A_0
	=
	Q
	\left(
	\mathfrak C_\ell(F)\cap\mathcal A_0
	\right)
	Q^{-1}C.
	\]
	Put
	\[
		a_0:=\left\lfloor\frac{\ell}{n}\right\rfloor.
	\]
	By Lemma~\ref{lem:central-component-Hermitian-band}, we get
	\[
	\left\langle
	1,w_G,\ldots,w_G^{a_0}
	\right\rangle_{\LL}
	=
	Q
	\left\langle
	1,w_F,\ldots,w_F^{a_0}
	\right\rangle_{\LL}
	Q^{-1}C.
	\]
	Since \(Q\) normalizes \(\LL\) and fixes the scalar field \(\E\)
	pointwise under conjugation, one has
	\[
	Q
	\left\langle
	1,w_F,\ldots,w_F^{a_0}
	\right\rangle_{\LL}
	Q^{-1}
	=
	\left\langle
	1,w_F,\ldots,w_F^{a_0}
	\right\rangle_{\LL}.
	\]
	Therefore
	\[
	\left\langle
	1,w_G,\ldots,w_G^{a_0}
	\right\rangle_{\LL}
	=
	\left\langle
	1,w_F,\ldots,w_F^{a_0}
	\right\rangle_{\LL}C.
	\]
			
		The element \(C\) is invertible.  Moreover, since \(1\) belongs to
		\[
		\left\langle
		1,w_F,\ldots,w_F^{a_0}
		\right\rangle_{\LL},
		\]
		the equality above implies
		\[
		C\in
		\left\langle
		1,w_G,\ldots,w_G^{a_0}
		\right\rangle_{\LL}
		\subseteq
		\F\otimes_{\K}\LL.
		\]
		Thus the equality takes place inside the commutative algebra
		\[
		\F\otimes_{\K}\LL.
		\]
		
		Now, by Lemma~\ref{lem:degree-of-wH}, both \(w_F\) and \(w_G\) have degree
		\[
		t:=\frac{s}{2}
		\]
		over \(\K\).  Since
		\[
		n<\ell<\frac{ns}{4},
		\]
		we have
		\[
		1\le a_0=\left\lfloor\frac{\ell}{n}\right\rfloor<\frac{s}{4}=\frac t2.
		\]
		Therefore Lemma~\ref{lem:central-interval-determination}, applied with
		\[
		u=w_F,
		\qquad
		v=w_G,
		\qquad
		m=a_0,
		\]
		gives
		\[
		w_G=\frac{aw_F+b}{cw_F+d}
		\]
		for some \(a,b,c,d\in\K\), with
		\[
		ad-bc\ne0. \qedhere
		\]
	\end{proof}
	
	\begin{remark}
		The conclusion of Theorem~\ref{thm:Hermitian-equivalence-PGL2-large-ell} is
		weaker than the conclusion of
		Theorem~\ref{thm:Hermitian-equivalence-polynomial-condition}.  In the short
		range \(\ell<n\), the interval $[-\ell,\ell]$ can be recovered after taking the
		\(\LL\)-span, and one obtains the stronger relation
		\[
		F(Y)=\lambda^sG(\lambda^{-1}Y)
		\]
		for some \(\lambda\in\K^\times\).  In the range \(n<\ell<ns/4\), the
		exponents divisible by \(n\) occur only through the symmetric combinations
		\[
		u_H^j+u_H^{-j}.
		\]
		Thus the natural invariant is not the \(\mathrm{PGL}_2(\K)\)-orbit of
		\(u_H\), but rather the \(\mathrm{PGL}_2(\K)\)-orbit of
		\[
		w_H=u_H+u_H^{-1}.
		\]
	\end{remark}

	\subsection{Enumerative consequences}
	\label{subsec:enumerative-consequences}
	
	The complete bounds for full Hermitian equivalence were collected in
	Corollary~\ref{cor:enumerative-Hermitian-combined}.  We now state and prove the
	sharper linearly Hermitian bounds in this subsection.

	Recall that
	\[
		\ell_d=\frac{s(n-d+1)-2}{2},
	\]
	and 
	\[
		\Lambda_{s,d}:=\left\{
		\newH_{\ell_d}(F):
		F\in\mathcal I^{\mathrm{sr}}_{q,s}
		\right\}.
	\]
	For the short range \(\ell_d<n\), the equivalence among members in $\Lambda_{s,d}$ is controlled by the scaling relation in Theorem~\ref{thm:Hermitian-equivalence-polynomial-condition}, whereas for the large
	range \(n<\ell_d<ns/4\) the equivalence is controlled by the self-reciprocal \(\mathrm{PGL}_2(\F_q)\)-relation in
	Theorem~\ref{thm:Hermitian-equivalence-PGL2-large-ell}.
	
	We shall use the following elementary observation several times.  If
	\(F\in\mathcal I^{\mathrm{sr}}_{q,s}\), then
	\[
	F(0)=1.
	\]
	Indeed, since \(F\) is irreducible of degree \(s>1\), neither \(1\) nor \(-1\)
	is a root of \(F\).  The roots of a self-reciprocal polynomial are stable under
	\(\alpha\mapsto \alpha^{-1}\), and therefore they split into inverse pairs.
	Since \(s\) is even, the product of all roots is \(1\), and hence the constant
	coefficient of the monic polynomial \(F\) is equal to \(1\).
	
	We first treat separately the quadratic case \(s=2\).
	
	\begin{corollary}\label{cor:enumerative-s-equal-2-linear}
		Assume that \(s=2\), \(n>2\), and
		\[
		\frac n2<d<n.
		\]
		Then $\Lambda_{s,d}$
		contains at least
		\[
		L_2(q):=
		\begin{cases}
			\dfrac q2, & q \textup{ even},\\[2ex]
			\dfrac{q-1}{4}, & q\equiv 1 \pmod 4,\\[2ex]
			\dfrac{q+1}{4}, & q\equiv 3 \pmod 4,
		\end{cases}
		\]
		pairwise linearly Hermitian inequivalent maximum additive Hermitian
		\(d\)-codes in \(\mathcal H_n(\F_{q^2}/\F_q)\).
	\end{corollary}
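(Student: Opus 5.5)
The plan is to reduce directly to the short-range equivalence theorem, Theorem~\ref{thm:Hermitian-equivalence-polynomial-condition}, and then count orbits of quadratic self-reciprocal polynomials. For $s=2$ we have $\ell_d=\frac{2(n-d+1)-2}{2}=n-d$. The hypothesis $\frac n2<d<n$ gives $1\le \ell_d<\frac n2=\frac{ns}{4}$, and in particular $\ell_d<n$. Hence $1\le\ell_d<\min\{n,ns/4\}$, and since $n>2$ the theorem applies. Therefore, if $\newH_{\ell_d}(F)$ and $\newH_{\ell_d}(G)$ are linearly Hermitian equivalent for $F,G\in\mathcal I^{\mathrm{sr}}_{q,2}$, there is $\lambda\in\F_q^\times$ with $F(Y)=\lambda^2G(\lambda^{-1}Y)$.

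Next I make this relation explicit. Every $H\in\mathcal I^{\mathrm{sr}}_{q,2}$ has $H(0)=1$ (the observation before the statement), so $H(Y)=Y^2+c_HY+1$. Writing $G=Y^2+cY+1$, we get
\[
\lambda^2G(\lambda^{-1}Y)=Y^2+c\lambda Y+\lambda^2 .
\]
Comparing constant terms forces $\lambda^2=1$, so $\lambda=\pm1$. Consequently $F=G$ or $c_F=-c_G$. Thus linear Hermitian equivalence classes inside $\Lambda_{2,d}$ are unions of orbits of the involution $c\mapsto -c$ on the set
\[
\mathcal P:=\{c\in\F_q: Y^2+cY+1 \text{ irreducible}\}.
\]
Note that this set is stable under $c\mapsto-c$, since $Y^2-cY+1$ is irreducible exactly when $Y^2+cY+1$ is (substitute $Y\mapsto -Y$). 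So the number of pairwise inequivalent codes is at least the number of orbits of $c\mapsto -c$ on $\mathcal P$.

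Finally I count these orbits using Carlitz's formula \eqref{eq:Carlitz-formula} with $s=2$, i.e.\ $e=0$ and $r=1$.
\begin{itemize}
\item If $q$ is even, then $|\mathcal P|=N_q(2)=\pi(1,q)/2=q/2$. Here $c\mapsto -c$ is the identity, so every orbit is a singleton and we get $q/2$ classes.
\item If $q$ is odd, then $|\mathcal P|=(q-1)/2$. The only possible fixed point of $c\mapsto -c$ is $c=0$, and $Y^2+1$ is irreducible exactly when $-1$ is a nonsquare, i.e.\ when $q\equiv3\pmod 4$.
\begin{itemize}
\item For $q\equiv1\pmod4$ all orbits have size $2$, giving $(q-1)/4$ classes.
\item For $q\equiv3\pmod4$ there is one fixed point and $\bigl((q-1)/2-1\bigr)/2$ orbits of size $2$, giving $(q+1)/4$ classes in total.
\end{itemize}
\end{itemize}
In every case the count equals $L_2(q)$. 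Each code in $\Lambda_{2,d}$ is a maximum additive Hermitian $d$-code in $\mathcal H_n(\F_{q^2}/\F_q)$ by Theorem~\ref{th:new-construction}, which gives the claimed bound.

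The only delicate point is checking the range condition $\ell_d<\min\{n,ns/4\}$ needed to invoke the theorem; after that, the orbit count is elementary, with the case $q\equiv3\pmod 4$ requiring care because of the fixed point $c=0$.
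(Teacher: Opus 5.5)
Your proposal is correct and follows the paper's proof essentially step for step: $\ell_d=n-d$ satisfies $1\le\ell_d<\min\{n,ns/4\}$, Theorem~\ref{thm:Hermitian-equivalence-polynomial-condition} combined with comparing constant terms forces $\lambda=\pm1$, and the count reduces to orbits of $c\mapsto -c$ on the irreducible parameters, with the fixed point $c=0$ occurring exactly when $q\equiv 3\pmod 4$. One small wording slip: you actually show that each linear equivalence class lies inside a single orbit, so orbits are unions of classes rather than the other way round, and this is precisely the direction your lower bound needs.
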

	
	\begin{proof}
		For \(s=2\), one has
		\[
		\ell=n-d.
		\]
		The short-range condition
		\[
		1\le \ell<\frac n2
		\]
		is therefore equivalent to
		\[
		\frac n2<d<n.
		\]
		
		Every monic self-reciprocal polynomial of degree \(2\) has the form
		\[
		F_c(Y)=Y^2+cY+1,
		\qquad c\in\F_q.
		\]
		By \eqref{eq:Carlitz-formula}, we get
		\[
		|\mathcal I^{\mathrm{sr}}_{q,2}|=N_q(2)=
		\begin{cases}
			(q-1)/2, & q \textup{ odd},\\
			q/2, & q \textup{ even}.
		\end{cases}
		\]
		
		Let \(F_c,F_{c'}\in\mathcal I^{\mathrm{sr}}_{q,2}\).  If
		\(\newH_{n-d}(F_c)\) and \(\newH_{n-d}(F_{c'})\) are linearly
		Hermitian equivalent, then, by
		Theorem~\ref{thm:Hermitian-equivalence-polynomial-condition}, there exists
		\(\lambda\in\F_q^\times\) such that
		\[
		F_c(Y)=\lambda^2F_{c'}(\lambda^{-1}Y).
		\]
		Since
		\[
		\lambda^2F_{c'}(\lambda^{-1}Y)
		=
		Y^2+\lambda c'Y+\lambda^2,
		\]
		comparison of the constant terms gives
		\[
		\lambda^2=1.
		\]
		
		If \(q\) is even, then \(\lambda=1\), and hence \(c=c'\).  Therefore distinct
		polynomials in \(\mathcal I^{\mathrm{sr}}_{q,2}\) yield pairwise linearly
		Hermitian inequivalent codes.  This gives \(q/2\) such codes.
		
		Assume now that \(q\) is odd.  Then \(\lambda=\pm1\), so \(F_c\) can only be
		identified with \(F_c\) or with \(F_{-c}\).  Thus it is enough to count the
		orbits of the involution
		\[
		c\longmapsto -c
		\]
		on the set of parameters \(c\in\F_q\) for which \(F_c(Y)\) is irreducible.
		This set has cardinality \((q-1)/2\).  The only possible fixed point is
		\(c=0\), which belongs to the set precisely when \(Y^2+1\) is irreducible over
		\(\F_q\), equivalently when \(-1\) is not a square in \(\F_q\).  This happens
		exactly when \(q\equiv 3\pmod 4\).  Hence the number of orbits is
		\[
		\frac{(q-1)/2}{2}
		=
		\frac{q-1}{4}
		\]
		if \(q\equiv1\pmod4\), and
		\[
		\frac{(q-1)/2+1}{2}
		=
		\frac{q+1}{4}
		\]
		if \(q\equiv3\pmod4\).  This proves the claim.
	\end{proof}
	

	We now turn to the case \(s\ge4\).  We first consider the short range
	\[
	\ell<n.
	\]
	In terms of \(d\), this condition is
	\begin{equation}\label{eq:enumerative-range-short}
		n>2,
		\qquad
		\frac{s(n-d+1)-2}{2}<n.
	\end{equation}
	For \(s\ge4\) and \(d\le n\), the lower bound
	\[
	1\le \frac{s(n-d+1)-2}{2}
	\]
	is automatic.
	
	\begin{theorem}\label{thm:enumerative-linear-s-geq-4-short}
		Assume that \(s\ge4\) is even, \(n>2\), and
		\[
		\ell=\frac{s(n-d+1)-2}{2}<n.
		\]
		Then the family
		$\Lambda_{s,d}$
		contains at least
		\[
		\left\lceil
		\frac{N_q(s)}{\gcd(s,q-1)}
		\right\rceil
		\]
		pairwise linearly Hermitian inequivalent maximum additive Hermitian
		\(d\)-codes in \(\mathcal H_n(\F_{q^s}/\F_{q^{s/2}})\).
	\end{theorem}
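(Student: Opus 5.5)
The plan is to combine the short-range equivalence theorem with an orbit count on the parameter set $\mathcal I^{\mathrm{sr}}_{q,s}$. First check the hypotheses. The assumption $\ell<n$ together with $s\ge4$ gives $\ell<n\le ns/4$, so $\ell<\min\{n,ns/4\}$. The inequality $\ell\ge1$ is automatic, as noted before the statement. Hence, by Theorem~\ref{thm:Hermitian-equivalence-polynomial-condition}, if $\newH_\ell(F)$ and $\newH_\ell(G)$ are linearly Hermitian equivalent for $F,G\in\mathcal I^{\mathrm{sr}}_{q,s}$, then $F(Y)=\lambda^sG(\lambda^{-1}Y)$ for some $\lambda\in\F_q^\times$. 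Each $\newH_\ell(F)$ is a maximum additive Hermitian $d$-code by Theorem~\ref{th:new-construction}. The theorem therefore reduces to bounding the size of each class of the relation $F\sim G$ defined by the existence of such a $\lambda$.

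Next, determine which $\lambda$ can occur. By the elementary observation preceding Corollary~\ref{cor:enumerative-s-equal-2-linear}, every element of $\mathcal I^{\mathrm{sr}}_{q,s}$ has constant term $1$. Comparing constant terms in $F(Y)=\lambda^sG(\lambda^{-1}Y)$ gives $1=\lambda^s G(0)=\lambda^s$. So $\lambda$ lies in the group $\mu_s(\F_q)=\{\lambda\in\F_q^\times:\lambda^s=1\}$, which is cyclic of order $\gcd(s,q-1)$.

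The key step is to recognise that the map $G\mapsto G^{[\lambda]}:=\lambda^sG(\lambda^{-1}Y)$ is a group action of $\mu_s(\F_q)$ on the set of monic polynomials. This holds because $(G^{[\lambda]})^{[\mu]}=G^{[\lambda\mu]}$. The action preserves monicity, since the leading coefficient is $\lambda^s\lambda^{-s}=1$. It preserves irreducibility, since it is a linear change of variables times a unit. Its roots are $\lambda\beta$, where $\beta$ runs over the roots of $G$.

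One point needs care: the relation $\sim$ is only defined through this action, and $G^{[\lambda]}$ need not be self-reciprocal. We do not need it to be. The relation $F\sim G$ on $\mathcal I^{\mathrm{sr}}_{q,s}$ says exactly that $F$ and $G$ lie in the same $\mu_s(\F_q)$-orbit inside the set of all monic irreducible degree-$s$ polynomials. Orbits of the action are equivalence classes, so $\sim$ is an equivalence relation (the linear Hermitian equivalence classes refine it). Each orbit meets $\mathcal I^{\mathrm{sr}}_{q,s}$ in at most $|\mu_s(\F_q)|=\gcd(s,q-1)$ elements.

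Finally, choose one polynomial from each nonempty intersection of an orbit with $\mathcal I^{\mathrm{sr}}_{q,s}$. The corresponding codes are pairwise linearly Hermitian inequivalent, by the contrapositive of the first step. Their number is at least $N_q(s)/\gcd(s,q-1)$, and since it is an integer it is at least the ceiling.

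I do not expect a real obstacle. The only delicate point is to phrase the counting through orbits of a genuine group action, as above, rather than treating ``scaling-related'' as an equivalence relation on $\mathcal I^{\mathrm{sr}}_{q,s}$ directly.
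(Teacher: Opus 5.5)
Your proposal is correct and follows the paper's proof. You apply Theorem~\ref{thm:Hermitian-equivalence-polynomial-condition}, rightly checking its hypothesis $\ell<\min\{n,ns/4\}$ via $s\ge4$. You then use $F(0)=G(0)=1$ to force $\lambda^s=1$, and bound each class by $\gcd(s,q-1)$. The only difference is cosmetic: the paper bounds the linear Hermitian equivalence class of each fixed $G$ directly, since that relation is already an equivalence relation, so your group-action packaging is valid but not needed.
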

	
	\begin{proof}
		Let \(F,G\in\mathcal I^{\mathrm{sr}}_{q,s}\).  Suppose that
		\(\newH_\ell(F)\) and \(\newH_\ell(G)\) are linearly Hermitian
		equivalent.  By Theorem~\ref{thm:Hermitian-equivalence-polynomial-condition},
		there exists \(\lambda\in\F_q^\times\) such that
		\[
		F(Y)=\lambda^sG(\lambda^{-1}Y).
		\]
		Since \(F(0)=G(0)=1\), comparison of the constant terms gives
		\[
		1=F(0)=\lambda^sG(0)=\lambda^s.
		\]
		Thus \(\lambda\in\F_q^\times\) must satisfy
		\[
		\lambda^s=1.
		\]
		For each fixed \(G\), there are therefore at most
		\[
		\gcd(s,q-1)
		\]
		possible polynomials \(F\in\mathcal I^{\mathrm{sr}}_{q,s}\) which can give
		rise to a code linearly Hermitian equivalent to \(\newH_\ell(G)\).
		Since
		\[
		|\mathcal I^{\mathrm{sr}}_{q,s}|=N_q(s),
		\]
		the asserted lower bound follows.
	\end{proof}
	
		
	We now consider the larger range
	\[
	n<\ell<\frac{ns}{4}.
	\]
	This range is nonempty only when \(s>4\).  In terms of \(d\), the required
	condition is
	\begin{equation}\label{eq:enumerative-range-large}
		n>2,
		\qquad
		n<
		\frac{s(n-d+1)-2}{2}
		<
		\frac{ns}{4}.
	\end{equation}
	
	\begin{theorem}\label{thm:enumerative-linear-large-ell-PGL2}
		Assume that \(s>4\) is even, \(n>2\), and
		\[
		n<\ell=
		\frac{s(n-d+1)-2}{2}
		<
		\frac{ns}{4}.
		\]
		Then $\Lambda_{s,d}$
		contains at least
		\[
		\left\lceil
		\frac{N_q(s)}{q(q^2-1)}
		\right\rceil
		\]
		pairwise linearly Hermitian inequivalent maximum additive Hermitian
		\(d\)-codes in \(\mathcal H_n(\F_{q^s}/\F_{q^{s/2}})\).
	\end{theorem}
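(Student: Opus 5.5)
By Theorem~\ref{thm:Hermitian-equivalence-PGL2-large-ell}, linear Hermitian equivalence of $\newH_\ell(F)$ and $\newH_\ell(G)$ forces $w_G=(aw_F+b)/(cw_F+d)$ for some $\begin{pmatrix}a&b\\c&d\end{pmatrix}\in\GL_2(\K)$. The plan is therefore to bound, for each fixed $G\in\mathcal I^{\mathrm{sr}}_{q,s}$, the number of $F\in\mathcal I^{\mathrm{sr}}_{q,s}$ for which such a relation can hold, and then argue as in Theorem~\ref{thm:enumerative-linear-s-geq-4-short}.

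\textbf{Step 1: from polynomials to the elements $w_H$.} Consider the map $H\mapsto w_H=u_H+u_H^{-1}$, and check that it is well defined up to Galois conjugacy. If $u_H$ is replaced by another root $u_H^{q^j}$, then $w_H$ is replaced by $w_H^{q^j}$. By Lemma~\ref{lem:degree-of-wH}, $w_H$ is a generator of $\F=\F_{q^{s/2}}$ over $\K$, i.e.\ an element of degree $s/2$. Its minimal polynomial $M_H\in\K[Y]$ of degree $s/2$ determines $H$, since $H(Y)=Y^{s/2}M_H(Y+Y^{-1})$. Thus $H\mapsto M_H$ is injective from $\mathcal I^{\mathrm{sr}}_{q,s}$ to monic irreducible polynomials of degree $s/2$.

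\textbf{Step 2: count the $F$ compatible with a fixed $G$.} Fix $G$ and a root $u_G$. Suppose $F$ satisfies the conclusion of Theorem~\ref{thm:Hermitian-equivalence-PGL2-large-ell} for some choice of roots. Then $w_F=g\cdot w_G$ for the Möbius transformation $g=\begin{pmatrix}a&b\\c&d\end{pmatrix}^{-1}\in\mathrm{PGL}_2(\K)$. The denominator is nonzero because $w_G\notin\K$, as already noted after that theorem. Hence $M_F$ is the minimal polynomial of $g\cdot w_G$, and by Step~1 it determines $F$. Consequently the set of such $F$ is the image of a map from $\mathrm{PGL}_2(\K)$, and has at most $|\mathrm{PGL}_2(\K)|=q(q^2-1)$ elements.

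\textbf{Step 3: pass to a lower bound on inequivalence classes.} Linear Hermitian equivalence is an equivalence relation on $\Lambda_{s,d}$. By Step~2, each class (indexed by polynomials) contains at most $q(q^2-1)$ members $F$. Since there are $N_q(s)$ polynomials, the number of classes is at least $\lceil N_q(s)/(q(q^2-1))\rceil$. Choosing one code from each class gives the required pairwise inequivalent maximum codes; maximality is Theorem~\ref{th:new-construction}. One subtlety: distinct $F$ might give equal codes. That does not affect the argument, since only the number of classes is counted, and each class is covered by at most $q(q^2-1)$ indices.

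\textbf{Main obstacle.} The only delicate point is the injectivity in Step~1. It needs the argument that $H$ is recovered from the orbit of $w_H$: the roots of $H$ are exactly the solutions of $u+u^{-1}=w$ as $w$ ranges over the conjugates of $w_H$. It also needs that the choice of roots in Theorem~\ref{thm:Hermitian-equivalence-PGL2-large-ell} is harmless, which holds because Frobenius commutes with $\mathrm{PGL}_2(\K)$. Both should be routine.
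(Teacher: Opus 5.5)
Your proposal is correct and follows the same route as the paper. You apply Theorem~\ref{thm:Hermitian-equivalence-PGL2-large-ell} to confine each linear Hermitian equivalence class to a single $\mathrm{PGL}_2(\F_q)$-orbit of the elements $w_H$, bound the class size by $q(q^2-1)$, and divide $N_q(s)$ by this bound. Your Step~1 (recovering $H$ from the minimal polynomial of $w_H$ via $H(Y)=Y^{s/2}M_H(Y+Y^{-1})$) and your remark on the choice of roots make explicit what the paper's proof uses only tacitly.
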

	
	\begin{proof}
		Let \(F,G\in\mathcal I^{\mathrm{sr}}_{q,s}\).  Suppose that
		\(\newH_\ell(F)\) and \(\newH_\ell(G)\) are linearly Hermitian
		equivalent.  Since
		\[
		n<\ell<\frac{ns}{4},
		\]
		Theorem~\ref{thm:Hermitian-equivalence-PGL2-large-ell} provides the existence of $a,b,c$ and $d\in \K$ with $ad-bc\neq 0$ such that
		\[
		w_G=\frac{aw_F+b}{cw_F+d}.
		\]
		Thus every linear Hermitian equivalence class can contain polynomials from at
		most one self-reciprocal \(\mathrm{PGL}_2(\F_q)\)-orbit.  Each such orbit has
		cardinality at most
		\[
		|\mathrm{PGL}_2(\F_q)|=q(q^2-1).
		\]
		Therefore each linear Hermitian equivalence class contains at most
		\(q(q^2-1)\) elements of \(\mathcal I^{\mathrm{sr}}_{q,s}\).  Since
		\[
		|\mathcal I^{\mathrm{sr}}_{q,s}|=N_q(s),
		\]
		the asserted lower bound follows.
	\end{proof}
	
	\begin{proof}[Proof of Corollary~\ref{cor:enumerative-Hermitian-combined}]
		For part~\textup{(i)}, the full Hermitian equivalence relation allows field
		automorphisms of \(\F_{q^2}\).  Since \(q=p^h\),
		\[
		|\Aut(\F_{q^2})|=[\F_{q^2}:\F_p]=2h.
		\]
		Consequently, every full Hermitian equivalence class is a union of at most
		\(2h\) linearly Hermitian equivalence classes.  Applying
		Corollary~\ref{cor:enumerative-s-equal-2-linear} gives the first bound.
		
		For parts~\textup{(ii)} and~\textup{(iii)}, the relevant field is
		\(\F_{q^s}\), and
		\[
		|\Aut(\F_{q^s})|=[\F_{q^s}:\F_p]=sh.
		\]
		Thus every full Hermitian equivalence class is a union of at most \(sh\)
		linear classes.  Dividing the bounds in
		Theorem~\ref{thm:enumerative-linear-s-geq-4-short} and
		Theorem~\ref{thm:enumerative-linear-large-ell-PGL2} by this factor, and taking
		ceilings, gives parts~\textup{(ii)} and~\textup{(iii)}, respectively.
	\end{proof}
	
	Finally, \eqref{eq:Carlitz-formula} gives
	\[
		N_q(s)=\Theta(q^{s/2})  \text{ as }q\rightarrow \infty.
	\]
	Hence the linear counts have order \(q\) in the quadratic case, order
	\(q^{s/2}\) in the short range for fixed even \(s\ge4\), and order at least
	\(q^{s/2-3}\) in the large range for fixed even \(s>4\).  The corresponding
	full Hermitian counts lose only the field-automorphism factor: they have orders
	at least \(q/h\), \(q^{s/2}/h\), and \(q^{s/2-3}/h\), respectively.
	Thus our construction yields many inequivalent new maximum additive Hermitian
	rank-metric codes, in contrast with the single classical trace construction
	available for the same parameters.
	
	\section*{Acknowledgments}
	Paolo Santonastaso is partially supported by the Italian National Group for
	Algebraic and Geometric Structures and their Applications (GNSAGA--INdAM).
	Yue Zhou is partially supported by the National Natural Science Foundation of
	China (No.\ 12371337).
	
	The authors are deeply grateful to the KIMSI Institute in Changsha, China, for
	hosting them and providing office space and facilities during the visit in which
	most of this work was completed.
	
	\bibliographystyle{abbrv}
	\bibliography{biblio}
		
\end{document}